\def\ps@headings{%
\def\@oddhead{\mbox{}\scriptsize\rightmark \hfil \thepage}%
\def\@evenhead{\scriptsize\thepage \hfil \leftmark\mbox{}}%
\def\@oddfoot{}%
\def\@evenfoot{}}
\makeatother \pagestyle{headings}
\newcommand{\tabincell}[2]{\begin{tabular}{@{}#1@{}}#2\end{tabular}}
\newcommand*{\QEDA}{\hfill\ensuremath{\blacksquare}}%
\newtheorem{proof}{Proof}
\newtheorem{theorem}{Theorem}
\newtheorem{proposition}{Proposition}
\newtheorem{lemma}{Lemma}
\newtheorem{definition}{Definition}
\newtheorem{problem}{Problem}
\begin{document}

\title{Stochastic Content-Centric  Multicast Scheduling for Cache-Enabled Heterogeneous Cellular Networks}

\author{Bo~Zhou, Ying~Cui,~\IEEEmembership{Member,~IEEE}, and Meixia~Tao,~\IEEEmembership{Senior~Member,~IEEE}
\thanks{
This paper was presented in part at  ACM CoNEXT 2015 Workshop on Content Caching and Delivery in Wireless Networks (CCDWN), Heidelberg, Germany, December, 2015 \cite{ccdwn}.

B.~Zhou, Y.~Cui and M.~Tao are with the Department of Electronic Engineering at Shanghai Jiao Tong University, Shanghai,
200240, P. R. China. Email: \{b.zhou, cuiying, mxtao\}@sjtu.edu.cn.}
}

\maketitle

\begin{abstract}
Caching at small base stations (SBSs) has demonstrated significant benefits in alleviating the backhaul requirement in heterogeneous cellular networks (HetNets). While many existing works focus on what contents to cache at each SBS, an equally important problem is what contents to deliver so as to satisfy dynamic user demands given the cache status.
In this paper, we study optimal content delivery in cache-enabled HetNets by taking into account the inherent multicast capability of wireless medium.
We consider stochastic content multicast scheduling to jointly minimize the average network delay and power costs under a multiple access constraint. We establish a content-centric request queue model and formulate this stochastic optimization problem as an infinite horizon average cost Markov decision process (MDP).
By using  \emph{relative value iteration} and special properties of the request queue dynamics, we characterize some properties of the value function of the MDP.
Based on these properties, we show that the optimal multicast scheduling policy is of threshold type.
Then, we propose a structure-aware optimal algorithm to obtain the optimal policy.
We also propose a low-complexity suboptimal policy, which possesses similar structural properties to the optimal policy, and develop a low-complexity algorithm to obtain this policy.
\end{abstract}

\begin{IEEEkeywords}
Heterogeneous cellular networks, wireless caching, content-centric, multicast, Markov decision process, structural properties, queueing.
\end{IEEEkeywords}

\section{Introduction}\label{sec:introduction}
The rapid proliferation of smart mobile devices has trigged an unprecedented growth of the global mobile data traffic, which is expected to reach 24.3 exabytes per month by 2019 \cite{Cisco}.
One promising approach to meet the dramatic traffic growth is to deploy small base stations (SBSs) together with traditional macro base stations (MBSs) in a heterogeneous network paradigm \cite{6211486}.
Such a heterogeneous cellular network (HetNet) provides short-range localized communications by bringing base stations (BSs) closer to users, and hence increases the area spectral efficiency and network capacity.
However, the main drawback of this approach is the requirement of expensive high-speed backhaul links for connecting all  SBSs to the core network. The backhaul capacity requirement can be enormously high during peak traffic hours.

Recently, caching at BSs has been proposed as an effective way to alleviate the backhaul capacity requirement and improve the user-perceived quality of experience in wireless networks \cite{molisch2014caching,Debbah,paschos2016wireless,7282567}.
Caching has received significant attention in the literature \cite{femto,7218465,6665021,Bastug2015,7194828}.
Specifically, in \cite{femto}, the authors introduce the concept of \emph{FemtoCaching} and study content placement at  SBSs to minimize the average content access delay.
In \cite{7218465}, the authors consider  joint request routing and content caching for HetNets and propose approximate algorithms to minimize the content access delay.
In \cite{6665021}, the authors study the joint optimization of power and cache control for video streaming in multi-cell multi-user MIMO systems.
References \cite{Bastug2015} and \cite{7194828} study outage probability and average delivery rate of cache-enabled HetNets for given caching strategies.
However, \cite{femto,7218465,6665021,Bastug2015,7194828} consider point-to-point unicast transmission for cache-enabled wireless networks and can only help to reduce the backhaul burden without effectively relieving the ``on air'' congestion.
Furthermore, in \cite{femto,7218465,6665021,Bastug2015,7194828}, the inherent broadcast nature of wireless medium is not fully exploited, which is the major distinction of  wireless communications from wired communications.

Enabling multicasting at BSs is another efficient approach to deliver popular contents to multiple requesters concurrently. Wireless multicasting has been specified in 3GPP standards known as evolved Multimedia Broadcast Multicast Service (eMBMS) \cite{embms}.
In view of the benefits of caching and multicasting, joint design of the two promising techniques is expected to achieve superior performance for massive content delivery in wireless networks.
From an information-theoretic perspective, \cite{coded} proposes a novel coded caching scheme to minimize the peak traffic load for a single-cell network by utilizing multicast transmission and caches at users, and characterizes the memory-rate tradeoff.
Note that, \cite{coded} only considers the delay-insensitive services.
However, many content-centric applications, such as video steaming, are delay-sensitive, and it is critical to consider delay performance in cache-enabled content-centric wireless networks \cite{7249208,TWC16,ton,ISIT}.
In specific, the authors in \cite{7249208} study coded multicasting for inelastic services (with strict deadline) in a single-cell network, and propose a computationally efficient content delivery algorithm under a given coded caching scheme.
In \cite{TWC16}, the authors propose an approximate caching algorithm with performance guarantee and a heuristic caching algorithm, to reduce the service cost for inelastic services in a cache-enabled small-cell network under a given multicast transmission strategy.
In \cite{ton}, the authors consider multicasting for inelastic services in a cache-enabled multi-cell network, and propose joint throughput-optimal caching and scheduling algorithms to maximize the service rates of the inelastic services.
In our recent work \cite{ISIT}, we consider optimal multicast scheduling to jointly minimize the average delay and service costs of elastic services (delay-sensitive services but without strict deadlines) for a cache-enabled single-cell network.
However, for elastic services, it remains unknown how to design optimal multicast scheduling to jointly minimize  average delay and service costs for given cache placement  in cache-enabled HetNets.
The main challenge of determining  optimal multicast scheduling in HetNets stems from  the heterogeneous structure of the network (see a motivating example in Section~\ref{sec:example}).

In this paper, we consider a cache-enabled HetNet with one MBS, $N$ SBSs, $K$ users, and $M$ contents (with possibly different content sizes).
The SBS coverage areas are assumed to be disjoint.
Assume that the MBS and the SBSs are not allowed to operate concurrently, to avoid excessive interference, while the SBSs are allowed to operate at the same time without mutual interference.
This is referred to as the multiple access constraint.
Each SBS is equipped with a cache storing a certain number of contents, depending on the sizes of the cached contents and the cache size.
The MBS stores all contents in the network.
In each slot, each BS either schedules one cached content for multicasting to serve the pending requests from the users in its coverage area, or keeps idle, i.e., does not transmit any content.
We consider stochastic content multicast scheduling to jointly minimize the average network delay and power costs under the multiple access constraint.
We establish a content-centric request queue model and then formulate this stochastic multicast scheduling problem as an infinite horizon average cost Markov decision process (MDP)\cite{bertsekas}.
There are several technical challenges involved.

$\bullet$ \textbf{Optimality analysis:} The infinite horizon average cost MDP is well-known to be challenging due to the curse of dimensionality. Although dynamic programming provides a systematic approach for MDPs, there generally exist only numerical solutions.  These solutions do not typically offer many design insights and are usually impractical due to the curse of dimensionality \cite{bertsekas}. Thus, it is highly desirable to study the structural properties of the optimal policy.
There are several existing works on structural analysis for hybrid systems\cite{6572969,shifrin2015coded,altman2010forever}. However, the system models and queueing models in these works significantly differ from ours, and hence the approaches and results therein  cannot be straightforwardly extended to our work.
Specifically, our problem can be viewed as a problem of scheduling a single broadcast server (the MBS) or multiple multicast servers (the SBSs) to parallel (request) queues with general arrivals.
Existing works have only studied the problems of scheduling a single broadcast server to parallel queues \cite{ISIT,batch,cdc}.
Therefore, the structural analysis of the optimal multicast scheduling of a single broadcast server or multiple broadcast servers to parallel queues with general arrivals remains unknown and cannot be straightforwardly extended from the existing solutions.

$\bullet$ \textbf{Algorithm design:} Standard numerical algorithms such as value iteration and policy iteration to MDPs are usually computationally impractical for real systems.
To reduce the complexity, several structured optimal algorithms, which incorporate the structural properties into standard algorithms, are proposed\cite{djonin2007mimo,OR}. However, the curse of dimensionality still remains an issue.
On the other hand, the structural properties of the optimal policy may be one key reason for its good performance.
 Thus, it is highly desirable to further reduce the complexity of the structured optimal algorithms, while maintaining similar structural properties to the optimal policy.

By using \emph{relative value iteration algorithm} (RVIA) \cite[Chapter 4.3.1]{bertsekas} and special properties of the request queue dynamics, we characterize some properties of the value function of the MDP.
Based on these properties, we show that the optimal multicast scheduling policy, which is adaptive to the request
queue state, is of threshold type. This reveals the tradeoff between the average delay cost and the average power cost.
Then, we propose a structure-aware optimal algorithm by exploiting the structural properties of the optimal policy.
To further reduce the computational complexity, using approximate dynamic programming \cite{bertsekas}, we propose a low-complexity suboptimal policy, which possesses similar structural properties to the optimal policy, and develop a low-complexity algorithm to compute this policy.
Numerical examples verify the theoretical results and demonstrate the performance of the optimal and suboptimal solutions.

The rest of this paper is organized as follows. Section II introduces the network model. Section III provides the formulation of the stochastic multicast scheduling problem and the optimality equation. The structural properties of the optimal policy are presented in Section IV and a structure-aware optimal algorithm is proposed in Section V. In Section VI, we propose a low-complexity suboptimal solution. Numerical results are provided in Section VII. Finally, we conclude the paper and provide several future directions in Section VIII. The important notations used in this paper are summarized in Table~\ref{tablenotation}.
\begin{table}[!thbp]
\caption{List of important notations}\label{tablenotation}
\begin{tabular}{|c|c|}
\hline
$\mathcal{N}^+$ & set of all SBSs\\
\hline
$\mathcal{N}=\mathcal{N}^+\cup\{0\}$ & set of all BSs; BS $0$: MBS\\
\hline
$\mathcal{K}_0$   & set of users not covered by any SBS \\
\hline
$\mathcal{K}_n, n\in\mathcal{N}^+$   & set of users within coverage area of SBS $n$\\
\hline
$\mathcal{K}=\bigcup_{n\in\mathcal{N}}\mathcal{K}_n$ & set of all users\\
\hline
$\mathcal{M}$ & set of all contents\\
\hline
$\mathcal{M}_n, n\in\mathcal{N}$ & set of contents cached at BS $n$\\
\hline
$\mathcal{N}_m$ & set of SBSs that caching content $m$\\
\hline
$n,k,m,t$ & BS, user, content, slot index\\
\hline
$p(n,m)$ & \tabincell{c}{minimum transmission power required by BS $n$ to\\deliver content $m$ to all associated users within a slot} \\
\hline
$\mathbf{A}=(A_{n,m})$ & request arrival matrix\\
\hline
$\mathbf{Q}=(Q_{n,m})$ & request queue state matrix\\
\hline
$d(\mathbf{Q})$ & sum request queue length\\
\hline
$\mathbf{u}=(u_n)_{n\in\mathcal{N}}$ & multicast scheduling action\\
\hline
$\mu$ & stationary multicast scheduling policy\\
\hline
$V(\mathbf{Q})$ & value function\\
\hline
$J(\mathbf{Q},u)$ & state-action cost function\\
\hline
\end{tabular}
\centering
\end{table}
\section{Network Model}\label{sec:model}
\begin{figure}[!t]
\begin{centering}
\includegraphics[scale=0.72]{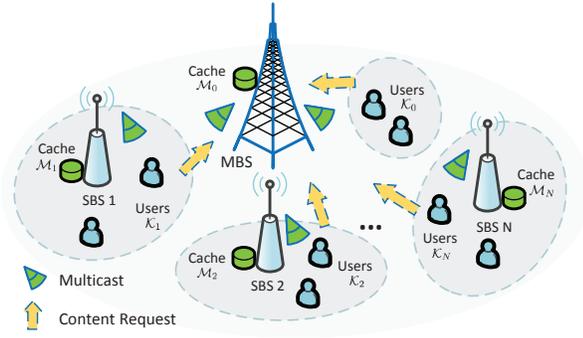}
 \caption{Cache-enabled  heterogeneous cellular network.}\label{fig:networkmodel}
\end{centering}
\end{figure}

Consider a cache-enabled HetNet with one MBS, $N$ SBSs, $K$ users, and $M$ contents, as illustrated in Fig.~\ref{fig:networkmodel}.
Let $\mathcal{N}\triangleq\{0,1,2,\cdots,N\}$ denote the set of all BSs, where BS $0$ refers to the MBS and BS $n=1,2,\cdots,N$ refers to SBS $n$.
Let $\mathcal{N}^+\triangleq\{1,2,\cdots,N\}$ denote the set of $N$ SBSs.
The SBS coverage areas are assumed to be disjoint.
Let $\mathcal{K}\triangleq\{1,2,\cdots,K\}$ denote the set of $K$ users in the network.
Let $\mathcal{K}_n\subseteq\mathcal{K}$ denote the set of users within the coverage area of SBS $n\in\mathcal{N}^+$.
All the users in $\mathcal{K}_n$ ($n\in\mathcal{N}^+$) can be served by the MBS and SBS $n$.
Let $\mathcal{K}_0\triangleq\mathcal{K}-\bigcup_{n\in\mathcal{N}^+}\mathcal{K}_n$ denote the set of users not covered by any SBS.
All the users in $\mathcal{K}_0$ can only be served by the MBS.
Note that, we do not distinguish the users in $\mathcal{K}_n$.
Let $\mathcal{M}\triangleq\{1,2,\cdots,M\}$ denote the set of $M$ contents (with possibly different content sizes) in the network.
Each BS is equipped with a cache storing a certain number of contents, depending on the cache size and the sizes of the cached contents.
Let $\mathcal{M}_n\subseteq\mathcal{M}$ denote the set of cached contents in BS $n\in\mathcal{N}$.
We assume $\mathcal{M}_0=\mathcal{M}$, i.e., the MBS stores all contents in the network.
Let $\mathcal{N}_m\triangleq\{n\in\mathcal{N}^+|m\in\mathcal{M}_n\}$ denote the set of SBSs caching content $m\in\mathcal{M}$.
We assume that the contents stored in the caches are given according to certain caching strategy (similar assumptions have been made in the literature, e.g. \cite{Bastug2015,7194828,7249208}) and consider multicast scheduling for a given caching design.
Notice that caching in general is in a much larger time-scale (e.g., on a weekly or monthly basis) while multicast scheduling is in a shorter time-scale \cite{femto,6665021,molisch2014caching}.
Consider time slots of unit length (without loss of generality) indexed by $t=1,2,\cdots$.
In the sequel, we first introduce the request arrival model, the service model, and the request queue model. Then, we provide a motivating example that highlights the challenges in designing the optimal multicast scheduling.


\subsection{Request Arrival Traffic}
In each slot, each user submits content requests to the MBS.
Let $A_{n,m}(t)\in\{0,1,\cdots\}$ denote  the number of the requests for content $m$  from all users in $\mathcal{K}_n$ which arrive during slot $t$, where $n\in\mathcal{N}$ and $m\in\mathcal{M}$.
Let $\mathbf{A}(t)\triangleq (A_{n,m}(t))_{n\in\mathcal{N},m\in\mathcal{M}}$ denote the request arrival matrix during slot $t$.
We assume that the request arrival processes $\{A_{n,m}(t)\}$ ($n\in\mathcal{N},m\in\mathcal{M}$) are mutually independent with respect to  $n$  and $m$; and $\{A_{n,m}(t)\}$ are i.i.d.  with respect to $t$ for all $n\in\mathcal{N}$ and $m\in\mathcal{M}$.
Note that, the content request arrivals are modeled according to the Independent Reference Model (IRM), which is a standard approach adopted in the literature \cite{femto,ton,rosensweig2013steady}.
The MBS maintains separate request queues for each BS $n\in\mathcal{N}$ and each cached content $m\in\mathcal{M}_n$.
The request queue model will be further illustrated in Section~\ref{sec:request_model}.

\subsection{Service Model}\label{sec:service_model}
We consider multicast service for content delivery in the network.
For ease of illustration, we assume that each BS can transmit at most one content in each slot. The analytical framework and results can be extended to the general case in which each BS can transmit multiple contents in one slot.
In each slot, each BS $n\in\mathcal{N}$ either schedules one cached content for multicasting to serve the pending requests from all users in its coverage area, or keeps idle (i.e., does not transmit any content).
We consider the Zero Download Delay (ZDD) assumption \cite{rosensweig2013steady}, i.e., all scheduled contents in one slot can be delivered to the users within the same slot.
Let $p(n,m)$ denote the minimum transmission power required by BS $n$ for successfully delivering one cached content $m$ to all users in its coverage area within a scheduling slot, where $n\in\mathcal{N}$ and $m\in\mathcal{M}_n$.
We set $p(n,0)=0$ for all $n\in\mathcal{N}$.
If BS $n$ multicasts content $m$ with transmission power $p(n,m)$, all pending requests for content $m$ from all users in the coverage area of BS $n$ are satisfied.
Let $u_n(t)\in\tilde{\mathcal{M}}_n\triangleq\mathcal{M}_n\cup\{0\}$ denote the scheduling action of BS $n\in\mathcal{N}$ at slot $t$, where
$u_n(t)\neq 0$ indicates that BS $n$ multicasts the cached content $u_n(t)$ with transmission power $p(n,u_n(t))$ at slot $t$, and
$u_n(t)=0$ indicates that BS $n$ does not transmit any content at slot $t$.
Let $\mathbf{u}(t)\triangleq(u_n(t))_{n\in\mathcal{N}}$ denote the multicast scheduling action in the network at slot $t$.

The MBS is assumed to operate at a much higher transmission power level than the SBSs, for providing full coverage of the network.
To avoid excessive interference, we therefore do not allow the MBS and the SBSs to operate concurrently.
On the other hand, since the SBSs are spatially separated and use much lower powers, we allow the SBSs to operate at the same timewithout mutual interference.
Mathematically, for all $t$, we require,
\begin{equation}\label{eqn:u_constraint}
u_0(t)\sum_{n\in\mathcal{N}^+}u_n(t)=0.
\end{equation}
We refer to \eqref{eqn:u_constraint} as the multiple access constraint.
Let $\mathbf{\mathcal{U}}\triangleq\{(u_n)_{n\in\mathcal{N}}|u_n\in\tilde{\mathcal{M}}_n~\forall n\in\mathcal{N}~\text{and}\allowbreak~u_0\sum_{n\in\mathcal{N}^+}u_n=0\}$ denote the feasible multicast scheduling action space.
The network power cost $p(\mathbf{u})$ associated with $\mathbf{u}\in\mathbf{\mathcal{U}}$ is given by
\begin{equation}
p(\mathbf{u})\triangleq \sum_{n\in\mathcal{N}} p(n,u_n).\label{eqn:power}
\end{equation}

\subsection{Request Queue Model} \label{sec:request_model}
As illustrated above, for each SBS $n\in\mathcal{N}^+$, the requests for cached content $m\in\mathcal{M}_n$ from all the users in $\mathcal{K}_n$  can be served by both the MBS and SBS $n$, while the requests for uncached content $m\in\mathcal{M}_0\setminus\mathcal{M}_n$ can only be served by the MBS.
On the other hand, the MBS can serve the requests for any content $m\in\mathcal{M}_0$ from the users in $\mathcal{K}_n$.
Therefore, the request queues maintained by the MBS are constructed as follows.
For each SBS $n\in\mathcal{N}^+$ and each cached content $m\in\mathcal{M}_n$, we construct a separate request queue, referred to as queue $(n,m)$, storing the requests for content $m$ from all the users in $\mathcal{K}_n$.
Let $Q_{n,m}(t)$ denote the length of queue $(n,m)$ at the beginning of slot $t$, where $n\in\mathcal{N}^+$ and $m\in\mathcal{M}_n$.
For the MBS $n=0$ and each content $m\in\mathcal{M}_0$, we also construct a separate request queue, referred to as queue $(0,m)$, storing the requests for content $m$ from all users in $\bigcup_{n\in\mathcal{N}^+\setminus\mathcal{N}_m}\mathcal{K}_n$ (the set of users covered by the SBSs where content $m$ is not cached) and $\mathcal{K}_0$.
Let $Q_{0,m}(t)$ denote the length of queue $(0,m)$ at the beginning of slot $t$, where $m\in\mathcal{M}_0$.
Note that these request queues can be implemented using counters and no data is contained in these queues.
For each queue $(n,m)$, let $N_{n,m}$ denote the upper limit of the corresponding counter.
For technical tractability, we assume that $N_{n,m}$ is finite (can be arbitrarily large).  This assumption is to guarantee that the request queue state space is finite, which would greatly simplify the mathematical arguments \cite{bertsekas}. Let $\mathcal{Q}_{n,m}\triangleq\{0,1,\cdots,N_{n,m}\}$ denote the request queue state space for queue $(n,m)$.
Let $\mathbf{Q}(t)\triangleq (Q_{n,m}(t))_{n\in\mathcal{N},m\in\mathcal{M}_n}\in\mathbf{\mathcal{Q}}$ denote the request queue state of the network at the beginning of slot $t$, where $\mathbf{\mathcal{Q}}\triangleq\prod_{n\in\mathcal{N}}\prod_{m\in\mathcal{M}_n}{\mathcal{Q}_{n,m}}$ denotes the request queue state space and the $\prod$ operation denotes the Cartesian product.

For each SBS $n\in\mathcal{N}^+$ and each cached content $m\in\mathcal{M}_n$, all the pending requests in queue $(n,m)$ are satisfied, if content $m$ is scheduled for multicasting by the MBS (i.e., $u_0(t)=m$) or by  SBS $n$ (i.e., $u_n(t)=m$) at slot $t$. Thus, for each $n\in\mathcal{N}^+$ and $m\in\mathcal{M}_n$, the request queue dynamics is as follows:
\begin{align}\label{eqn:queue1}
Q_{n,m}(t+1)=\min\{\mathbf{1}(u_0(t)\neq m~&\&~u_n(t)\neq m)Q_{n,m}(t)\nonumber\\
&+A_{n,m}(t),N_{n,m}\},
\end{align}
where  $\mathbf{1}(\cdot)$ denotes the indicator function.

For the MBS and each content $m\in\mathcal{M}_0$, all the pending requests in queue $(0,m)$ are satisfied, if content $m$ is scheduled for multicasting by the MBS at slot $t$ (i.e., $u_0(t)=m$).
Thus, for the MBS and each content $m\in\mathcal{M}_0$, the request queue dynamics is as follows:
\begin{align}\label{eqn:queue2}
Q_{0,m}(t+1)=\min\{\mathbf{1}(u_0(t)\neq m)Q_{0,m}(t)+\tilde{A}_{0,m}(t), N_{0,m}\},
\end{align}
where $\tilde{A}_{0,m}(t)\triangleq\sum_{n\in\mathcal{N}^+\setminus\mathcal{N}_m}A_{n,m}(t) +A_{0,m}(t)$   denotes the total number of requests for content $m$ from all users in $\bigcup_{n\in\mathcal{N}^+\setminus\mathcal{N}_m}\mathcal{K}_n$ and  $\mathcal{K}_0$ which arrive during slot $t$.
Note that each request is stored in only one queue.

\subsection{Motivating Example}\label{sec:example}
As illustrated in Fig.~\ref{fig:example}, consider a network  with 1 MBS, 2 SBSs ($\mathcal{N}^+=\{1,2\}$), 3 users ($\mathcal{K}=\{1,2,3\}$), and 3 contents ($\mathcal{M}=\{1,2,3\}$). We set $\mathcal{K}_1=\{1\}$, $\mathcal{K}_2=\{2\}$, $\mathcal{K}_0=\{3\}$, $\mathcal{M}_1=\{1,2\}$, $\mathcal{M}_2=\{2,3\}$, and  $\mathcal{M}_0=\{1,2,3\}$.
According to Section~\ref{sec:request_model}, the MBS maintains seven request queues, i.e., queues $(0,1)$, $(0,2)$, $(0,3)$, $(1,1)$, $(1,2)$, $(2,2)$, and $(2,3)$.
Our goal is to design the optimal multicast scheduling so as to jointly minimize the average network delay cost and power cost.
This involves two challenging and coupled tasks.
\begin{figure}[!t]
\begin{centering}
\includegraphics[scale=0.72]{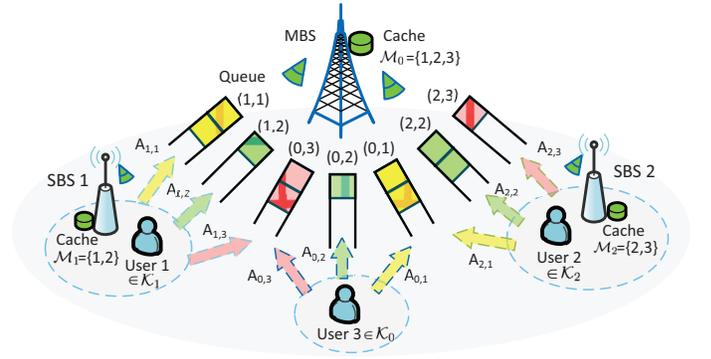}
 \caption{An example with 1 MBS, 2 SBSs, 3 users and 3 contents.}\label{fig:example}
\end{centering}
\end{figure}
First, at each time slot, it is not clear whether to operate the MBS or the SBSs.
If we schedule the MBS to multicast, then the pending requests for one content in the whole network can be satisfied with a higher power cost, e.g.,
clear queues $(0,2)$, $(1,2)$, and $(2,2)$ with power $p(0,2)$. If we schedule the SBSs to multicast, the pending requests for (possibly different) contents in different SBS coverage areas can be satisfied with a lower power cost, e.g., clear queues $(1,2)$ and $(2,3)$ with power $p(1,2)+p(2,3)$.

Second, at each time slot, if BS $n\in\mathcal{N}$ is allowed to operate, it is unknown whether to keep BS $n$ idle or not, and which content to schedule for multicasting if BS $n$ is not idle.
 Take SBS 1 as an example.
Suppose at certain time slot $t$ we have $Q_{1,1}(t)>Q_{1,2}(t)$ and $p(1,1)>p(1,2)$.
Scheduling Content 1 can satisfy  more requests with a higher power cost; scheduling Content 2 can satisfy fewer requests with a lower power cost; keeping SBS 1 idle consumes zero power cost.



rom this example, we can see that, the challenges in designing the optimal multicast scheduling come from the heterogeneous structure of the network, and the difficulty in balancing the delay cost and the power cost.
In the sequel, we formalize the multicast scheduling problem and try to tackle these challenges.

\section{Problem Formulation and Optimality Equation}
\subsection{Problem Formulation}
 Given an observed request queue state $\mathbf{Q}$, the multicast scheduling action $\mathbf{u}$ is determined according to a stationary policy defined below.
\begin{definition}[Stationary Policy]\label{definition:stationary_policy}
A feasible stationary multicast scheduling policy $\mu$ is a mapping from the request queue state $\mathbf{Q}\in\mathbf{\mathcal{Q}}$ to the feasible multicast scheduling action $\mathbf{u}\in\mathbf{\mathcal{U}}$, where $\mu(\mathbf{Q})=\mathbf{u}$.
\end{definition}

By the queue dynamics in \eqref{eqn:queue1} and \eqref{eqn:queue2}, the induced random process $\{\mathbf{Q}(t)\}$ under policy $\mu$ is a controlled Markov chain with the following transition probability:
\begin{align}\label{eqn:tranb}
  &\Pr[\mathbf{Q}'|\mathbf{Q},\mathbf{u}]
\triangleq\Pr[\mathbf{Q}(t+1)=\mathbf{Q}'|\mathbf{Q}(t)=\mathbf{Q},\mathbf{u}(t)=\mathbf{u}]\nonumber\\
&=\mathbb{E}\left[\Pr\left[\mathbf{Q}(t+1)=\mathbf{Q}'|\mathbf{Q}(t)=\mathbf{Q},\mathbf{u}(t)=\mathbf{u},\mathbf{A}(t)=\mathbf{A}\right]\right],
\end{align}
where the expectation is taken over the distribution of the request arrival $\mathbf{A}$ and
\begin{align*}
\Pr\left[\mathbf{Q}(t+1)=\mathbf{Q}'|\mathbf{Q}(t)=\mathbf{Q},\mathbf{u}(t)=\mathbf{u},\mathbf{A}(t)=\mathbf{A}\right]\nonumber\\
=\left\{		
   \begin{array}{ll}
       1, & \hbox{if $\mathbf{Q}'$ satisfies \eqref{eqn:queue1} and \eqref{eqn:queue2}} \\
       0, & \hbox{otherwise}
   \end{array}
 \right..
\end{align*}

We restrict our attention to stationary unichain policies.\footnote{A unichain policy is a policy, under which the induced Markov chain has a single recurrent class (and possibly some transient states)\cite{bertsekas}.}
For a given stationary unichain policy $\mu$, the average network delay cost is defined as
\begin{equation}
  \bar{d}(\mu)\triangleq\limsup_{T\to\infty}\frac{1}{T}\sum_{t=1}^T \mathbb{E}\left[d\left(\mathbf{Q}(t)\right)\right],\label{eqn:delay_cost}
\end{equation}
where $d\left(\mathbf{Q}\right)\triangleq\sum_{n\in\mathcal{N}}\sum_{m\in\mathcal{M}_n}Q_{n,m}$ and the expectation is taken with respect to the measure induced by the random request arrivals and the policy $\mu$. According to Little's law, $\bar{d}(\mu)$ reflects the average waiting time in the network under policy $\mu$.
For a given stationary unichain policy $\mu$, the average network power cost is given by
\begin{equation}
  \bar{p}(\mu)\triangleq\limsup_{T\to\infty}\frac{1}{T}\sum_{t=1}^T \mathbb{E}\left[p(\mathbf{u}(t))\right].\label{eqn:power_cost}
\end{equation}
In this paper, we would like to jointly minimize the average network delay cost and the average network power cost. We adopt the weighted-sum method, which is a commonly used  method for multiobjective optimization problems \cite{deb2014multi}. Specifically, we define the average network cost as the weighted sum of the average network delay cost and the average network power cost, i.e.,
\begin{align}
  \bar{g}(\mu)\triangleq\bar{d}(\mu)+w\bar{p}(\mu)
  =\limsup_{T\to\infty}\frac{1}{T}\sum_{t=1}^T \mathbb{E}\left[g(\mathbf{Q}(t),\mathbf{u}(t))\right],\label{eqn:system_cost}
\end{align}
where $g(\mathbf{Q},\mathbf{u})\triangleq d\left(\mathbf{Q}\right)+wp(\mathbf{u})$ is the per-stage network cost and $w$ is the weight indicating  the relative importance of the average power cost over the average delay cost.  Note that, $w$ can also be treated as the penalty factor, mimicking the soft average delay constraint. In other words,  $w$ can be thought of a Lagrange multiplier on the average delay cost constraint\cite{berry2002communication}.

We wish to find an optimal multicast scheduling policy to minimize the average network cost $\bar{g}(\mu)$ in \eqref{eqn:system_cost}.
\begin{problem}[Network Cost Minimization]\label{problem:originalproblem}
\begin{equation}
  \bar{g}^*\triangleq\min_{\mu}\limsup_{T\to\infty}\frac{1}{T}\sum_{t=1}^T \mathbb{E}\left[g(\mathbf{Q}(t),\mathbf{u}(t))\right],\label{eqn:problem1}
\end{equation}
where $\mu$ is a stationary unichain policy  in Definition~\ref{definition:stationary_policy} and $\bar{g}^*$ denotes the minimum average network cost achieved by the optimal policy $\mu^*$.\footnote{Restricting the search for the optimal policy to unichain policies is to guarantee the existence of stationary optimal policies. This is widely used in the literature, e.g.,\cite{6665021,djonin2007mimo,levorato2009optimal}.}
\end{problem}

 Notice that, if $w=0$, Problem~\ref{problem:originalproblem} reduces to the delay minimization problem, which can still be covered by our framework.  However, if the power minimization is the sole goal, this optimization problem is no
longer meaningful, as all BSs would keep idle and no requests would be served.
In Problem~\ref{problem:originalproblem}, we assume the existence of a stationary unichain policy achieving the minimum in \eqref{eqn:problem1}. Later in Lemma~\ref{lemma:bellman}, we shall prove the existence of such a policy.
Problem~\ref{problem:originalproblem} is an infinite horizon average cost MDP, which is  challenging due to the curse of dimensionality.

\subsection{Optimality Equation}
The optimal multicast scheduling policy $\mu^*$ can be obtained by solving the following Bellman equation.
\begin{lemma}[Bellman Equation]\label{lemma:bellman}
 There exist a scalar $\theta$ and a value function $V(\cdot)$ satisfying:
\begin{equation}
  \theta+V(\mathbf{Q})=\min_{\mathbf{u}\in\mathbf{\mathcal{U}}}\left\{g(\mathbf{Q},\mathbf{u})+\mathbb{E}\left[V(\mathbf{Q}')\right]\right\},~\forall \mathbf{Q}\in\mathbf{\mathcal{Q}},\label{eqn:bellman}
\end{equation}
where the expectation is taken over the distribution of the request arrival $\mathbf{A}$, and $\mathbf{Q}'\triangleq(Q'_{n,m})_{n\in\mathcal{N},m\in\mathcal{M}_n}$ with $Q'_{0,m}\triangleq\min\{\mathbf{1}(u_0\neq m)Q_{0,m}+\tilde{A}_{0,m},N_{0,m}\}$ for all $m\in\mathcal{M}_0$ and $Q'_{n,m}\triangleq\min\{\mathbf{1}(u_0\neq m~\&~u_n\neq m)Q_{n,m}+A_{n,m},N_{n,m}\}$ for all $n\in\mathcal{N}^+$ and $m\in\mathcal{M}_n$.
Furthermore, $\theta=\bar{g}^*$ is the optimal value to Problem~\ref{problem:originalproblem} for all initial state $\mathbf{Q}(1)\in\mathbf{\mathcal{Q}}$ and the optimal policy achieving the optimal value $\bar{g}^*$ is given by
\begin{eqnarray}
  \mu^*(\mathbf{Q})=\arg\min_{\mathbf{u}\in\mathbf{\mathcal{U}}}\left\{g(\mathbf{Q},\mathbf{u})+\mathbb{E}\left[V(\mathbf{Q}')\right]\right\},~\forall \mathbf{Q}\in\mathbf{\mathcal{Q}}.\label{eqn:mu}
\end{eqnarray}
\end{lemma}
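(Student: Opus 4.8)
The plan is to treat Problem~\ref{problem:originalproblem} as a finite-state, finite-action, infinite-horizon average-cost MDP and to invoke the standard existence theory for the average-cost optimality equation (ACOE) \cite{bertsekas}. Because every counter limit $N_{n,m}$ is finite, the request-queue state space $\mathbf{\mathcal{Q}}=\prod_{n\in\mathcal{N}}\prod_{m\in\mathcal{M}_n}\mathcal{Q}_{n,m}$ is finite and the action space $\mathbf{\mathcal{U}}$ is finite; moreover the per-stage cost $g(\mathbf{Q},\mathbf{u})=d(\mathbf{Q})+wp(\mathbf{u})$ is bounded on $\mathbf{\mathcal{Q}}\times\mathbf{\mathcal{U}}$. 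For such MDPs it is known that \eqref{eqn:bellman} admits a solution $(\theta,V(\cdot))$ with $\theta$ equal to the optimal average cost and with the greedy policy \eqref{eqn:mu} optimal, \emph{provided} a weak accessibility (communication) condition holds. Hence the crux of the proof is to verify this condition for the queue dynamics \eqref{eqn:queue1}--\eqref{eqn:queue2}.

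To verify communication I would single out the empty state $\mathbf{0}$ (all request queues empty) as a reference state and show it is reachable from every $\mathbf{Q}\in\mathbf{\mathcal{Q}}$ in finitely many slots with positive probability. Consider the stationary policy that, slot after slot, schedules the MBS to multicast a content whose MBS queue is still nonempty and keeps all SBSs idle (feasible under the multiple-access constraint). By \eqref{eqn:queue2} each such action empties the corresponding queue $(0,m)$, and by \eqref{eqn:queue1} it simultaneously empties every SBS queue $(n,m)$ for that same content; since $\mathcal{M}_0=\mathcal{M}$, iterating over all contents clears all queues. Conditioned on the positive-probability no-arrival event $\mathbf{A}=\mathbf{0}$ (which holds under the IRM model, where each $A_{n,m}=0$ with positive probability), the sum queue length $d(\mathbf{Q})$ strictly decreases until $\mathbf{0}$ is hit, so the expected hitting time of $\mathbf{0}$ is finite and uniformly bounded over $\mathbf{\mathcal{Q}}$. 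Conversely, from $\mathbf{0}$ any state that is recurrent under the optimal policy is reached through appropriate arrival realizations while the BSs remain idle, while states never reached are transient and irrelevant to the average cost. This establishes that the recurrent states communicate and the weak accessibility condition holds.

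With communication in hand, the standard existence result for finite average-cost MDPs \cite{bertsekas} yields a scalar $\theta$ and a function $V(\cdot)$ solving \eqref{eqn:bellman}; the optimal average cost $\bar{g}^*$ is then independent of the initial state $\mathbf{Q}(1)$ and equals $\theta$, and any policy selecting, in each state, a minimizing action on the right-hand side of \eqref{eqn:bellman}, i.e. \eqref{eqn:mu}, is average-cost optimal. Since the induced chain has a single recurrent class containing $\mathbf{0}$, this policy is unichain, which simultaneously discharges the existence assumption made in Problem~\ref{problem:originalproblem}. Operationally, $(\theta,V)$ can be produced by the vanishing-discount method: for each discount factor $\beta\in(0,1)$ the discounted Bellman equation has a unique solution $V_\beta$, the relative value functions $V_\beta(\cdot)-V_\beta(\mathbf{0})$ are uniformly bounded by the hitting-time estimate above, and passing to a limit along a subsequence $\beta\to 1$ recovers \eqref{eqn:bellman}; this is also the route that legitimizes the RVIA used later for the structural analysis.

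I expect the accessibility argument of the second paragraph to be the main obstacle. Invoking the existence theorem is routine, but its hypotheses must be earned from the specific dynamics: one must confirm that scheduling can drive every queue to zero, that the finite-counter truncation creates no unreachable recurrent states, and --- most importantly --- that the arrival model places positive probability on the no-arrival event, so that $\mathbf{0}$ is genuinely recurrent and its expected hitting time is finite. If that no-arrival property were unavailable, I would instead fix a reachable reference state and bound $|V_\beta(\mathbf{Q}_1)-V_\beta(\mathbf{Q}_2)|$ by a coupling argument between two queue realizations driven by common arrivals, which still supplies the uniform bound needed to pass to the $\beta\to 1$ limit.
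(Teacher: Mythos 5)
Your high-level skeleton (finite state and action spaces, bounded per-stage cost, invoke the standard average-cost existence theory of \cite{bertsekas}) matches the paper's, but the step you yourself identify as the crux --- verifying accessibility from the queue dynamics --- is precisely the step the paper never performs, and it is where your argument breaks. The paper's Appendix A is essentially three citations: because Problem~\ref{problem:originalproblem} restricts admissible policies to unichain policies by assumption (Definition~\ref{definition:stationary_policy} and the accompanying footnote), the weak accessibility condition holds by Proposition 4.2.5 of \cite{bertsekas}, the optimal average cost is initial-state independent by Proposition 4.2.3, a solution $(\theta,V(\cdot))$ of the average-cost optimality equation exists by Proposition 4.2.1, and \eqref{eqn:bellman} then follows by substituting the transition kernel \eqref{eqn:tranb}. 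In other words, communication is inherited from a standing assumption on the policy class rather than earned from the dynamics.

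Your attempt to earn it from the dynamics has a genuine gap: it hinges on the no-arrival event $\mathbf{A}=\mathbf{0}$ having positive probability, which the model does not grant. Section II-A only assumes $A_{n,m}(t)\in\{0,1,\cdots\}$, i.i.d. over $t$ and independent across $(n,m)$; nothing forces $\Pr[A_{n,m}(t)=0]>0$, and in the paper's own IRM instantiation (each user requests exactly one content per slot) the event $\mathbf{A}(t)=\mathbf{0}$ has probability zero. Worse, under the dynamics \eqref{eqn:queue1}--\eqref{eqn:queue2} the arrivals $A_{n,m}(t)$ enter $Q_{n,m}(t+1)$ regardless of the scheduling action (the indicator multiplies only $Q_{n,m}(t)$), so in that instantiation the all-zero state is never visited after the first slot: it is transient, its expected hitting time is infinite, and with it collapse both your uniform bound on $V_\beta(\cdot)-V_\beta(\mathbf{0})$ in the vanishing-discount step and your claim that the greedy policy \eqref{eqn:mu} is unichain because ``its recurrent class contains $\mathbf{0}$.'' The coupling argument you mention at the end is the right kind of repair for a self-contained proof (one would fix an arbitrary reachable reference state and bound relative value functions under common arrivals), but as written it is a one-sentence sketch, not a proof. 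Alternatively, the gap disappears if you do what the paper does: lean on the unichain restriction built into the problem formulation instead of trying to certify communication from the arrival process.
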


\begin{proof}
   Please see Appendix A.
\end{proof}

From Lemma~\ref{lemma:bellman}, we observe that $\mu^*$ given by \eqref{eqn:mu} depends on $\mathbf{Q}$ through the value function $V(\cdot)$.
Obtaining $V(\cdot)$ involves solving the Bellman equation in \eqref{eqn:bellman} for all $\mathbf{Q}\in\mathbf{\mathcal{Q}}$, which does not admit a closed-form solution in general \cite{bertsekas}.Standard numerical solutions such as value iteration and policy iteration are usually computationally impractical to implement, and do not typically yield many design insights\cite{bertsekas}. Thus, it is highly desirable to study the structural properties of the optimal policy $\mu^*$.

\section{Optimality Properties}\label{sec:optimality}
Problem 1 can be viewed as a problem of scheduling a single broadcast server (the MBS) or multiple multicast servers (the SBSs) to parallel (request) queues with general arrivals. The structural analysis is more challenging than the existing structural analysis for the scheduling of a single broadcast server.
First, by RVIA and the special structure of the request queue dynamics, we can prove the following property of the value function.
\begin{lemma}[Monotonicity of $V(\mathbf{Q})$]\label{lemma:propertyV}
For any $\mathbf{Q}^1$,$\mathbf{Q}^2\in\mathbf{\mathcal{Q}}$ such that $\mathbf{Q}^2\succeq\mathbf{Q}^1$, we have $V(\mathbf{Q}^2)\geq V(\mathbf{Q}^1)$.\footnote{The notation $\succeq$ indicates component-wise $\geq$.}
\end{lemma}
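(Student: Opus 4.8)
The plan is to establish the monotonicity at the level of the \emph{relative value iteration algorithm} (RVIA) and then pass to the limit. Starting from $V_0\equiv 0$, I would define the value iteration operator
\begin{equation}
V_{k+1}(\mathbf{Q}) = \min_{\mathbf{u}\in\mathbf{\mathcal{U}}}\left\{g(\mathbf{Q},\mathbf{u})+\mathbb{E}\left[V_k(\mathbf{Q}')\right]\right\}, \quad \forall \mathbf{Q}\in\mathbf{\mathcal{Q}},
\end{equation}
where $\mathbf{Q}'$ is the post-arrival state defined in Lemma~\ref{lemma:bellman}, recalling that RVIA merely subtracts a state-independent reference value $V_k(\mathbf{Q}^{\mathrm{ref}})$ at each step, which does not affect any comparison between two states. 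Under the unichain assumption invoked in Lemma~\ref{lemma:bellman}, $V_k\to V$ pointwise as $k\to\infty$. Hence it suffices to prove by induction on $k$ that each $V_k$ is non-decreasing with respect to $\succeq$; the limit then inherits the property since monotonicity is a closed condition.

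The base case is immediate because $V_0\equiv 0$. The two ingredients for the inductive step are both consequences of the special structure of the model. First, the per-stage cost $g(\mathbf{Q},\mathbf{u})=d(\mathbf{Q})+wp(\mathbf{u})$ is non-decreasing in $\mathbf{Q}$ for every fixed $\mathbf{u}$, since $d(\mathbf{Q})=\sum_{n}\sum_{m}Q_{n,m}$ is a sum of the components. Second, and this is the crux, I would show that the queue dynamics preserve the partial order: for any fixed feasible action $\mathbf{u}$ and any fixed arrival realization $\mathbf{A}$, if $\mathbf{Q}^2\succeq\mathbf{Q}^1$ then the resulting next states satisfy $\mathbf{Q}^{2\prime}\succeq\mathbf{Q}^{1\prime}$. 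This follows component-wise from \eqref{eqn:queue1} and \eqref{eqn:queue2}: the indicator factors $\mathbf{1}(u_0\neq m\ \&\ u_n\neq m)$ and $\mathbf{1}(u_0\neq m)$ depend only on $\mathbf{u}$ and not on the queue state, so $Q^2_{n,m}\geq Q^1_{n,m}$ gives $\mathbf{1}(\cdot)Q^2_{n,m}+A_{n,m}\geq\mathbf{1}(\cdot)Q^1_{n,m}+A_{n,m}$, and the outer $\min\{\cdot,N_{n,m}\}$ is non-decreasing in its first argument. Taking expectation over the common arrival distribution and invoking the inductive hypothesis that $V_k$ is monotone then yields $\mathbb{E}[V_k(\mathbf{Q}^{2\prime})]\geq\mathbb{E}[V_k(\mathbf{Q}^{1\prime})]$ whenever the same action is applied at both states.

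To close the inductive step I would use the standard device of applying the larger state's optimizer to the smaller state. Let $\mathbf{u}^\star$ attain the minimum defining $V_{k+1}(\mathbf{Q}^2)$. Since the feasible set $\mathbf{\mathcal{U}}$ is independent of the queue state, $\mathbf{u}^\star$ is also feasible at $\mathbf{Q}^1$ and therefore provides an upper bound on $V_{k+1}(\mathbf{Q}^1)$. Writing $\mathbf{Q}^{2\prime}$ and $\mathbf{Q}^{1\prime}$ for the next states obtained from $\mathbf{Q}^2$ and $\mathbf{Q}^1$, respectively, under the common action $\mathbf{u}^\star$, the two ingredients above give
\begin{align}
V_{k+1}(\mathbf{Q}^2) &= g(\mathbf{Q}^2,\mathbf{u}^\star)+\mathbb{E}\left[V_k(\mathbf{Q}^{2\prime})\right] \nonumber\\
&\geq g(\mathbf{Q}^1,\mathbf{u}^\star)+\mathbb{E}\left[V_k(\mathbf{Q}^{1\prime})\right] \geq V_{k+1}(\mathbf{Q}^1),
\end{align}
which is exactly the desired monotonicity at stage $k+1$.

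I expect the main obstacle to be the second ingredient, namely rigorously verifying that the controlled queue recursion is order-preserving. The argument hinges on the facts that the scheduling action enters the dynamics only through state-independent indicators and that both the arrival addition and the saturation at $N_{n,m}$ are coordinatewise non-decreasing operations; any model in which the served amount depended on the queue length rather than fully clearing it would break this clean coupling. A secondary point requiring care is the justification that the pointwise convergence $V_k\to V$ holds in the finite-state unichain setting and that the monotonicity established for each $V_k$ survives the limit, which rests on the same convergence guarantees for RVIA already used to establish Lemma~\ref{lemma:bellman}.
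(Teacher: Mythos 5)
Your proposal is correct and follows essentially the same route as the paper's own proof: induction on the (relative) value iteration iterates starting from $V_0\equiv 0$, using the state-independence of the scheduling indicators to show the queue recursion preserves $\succeq$ under a common action, applying the larger state's optimizer at the smaller state, and passing to the limit via the RVIA convergence guarantee. The only cosmetic difference is that you phrase the induction for the plain Bellman operator and note that the RVIA reference subtraction is state-independent, whereas the paper carries the subtracted term $J_{l+1}(\mathbf{Q}^\dag,\mu_l^*(\mathbf{Q}^\dag))$ explicitly through the same inequalities.
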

\begin{proof}
  Please see Appendix B.
\end{proof}

Next, we introduce the state-action cost function:
\begin{equation}
J(\mathbf{Q},\mathbf{u})\triangleq g(\mathbf{Q},\mathbf{u})+\mathbb{E}\left[V(\mathbf{Q}')\right].
\end{equation}
Note that $J(\mathbf{Q},\mathbf{u})$ is related to the R.H.S. of the Bellman equation in \eqref{eqn:bellman}.
Then, based on $J(\mathbf{Q},\mathbf{u})$, we introduce:
\begin{equation}
  \Delta_{\mathbf{u},\mathbf{v}}(\mathbf{Q})\triangleq J(\mathbf{Q},\mathbf{u})-J(\mathbf{Q},\mathbf{v}).\label{eqn:delta_func}
\end{equation}
 Note that $\Delta_{\mathbf{u},\mathbf{v}}(\mathbf{Q})=-\Delta_{\mathbf{v},\mathbf{u}}(\mathbf{Q})$.
Action $\mathbf{u}$ is said to dominate $\mathbf{v}$ at state $\mathbf{Q}$ if $\Delta_{\mathbf{u},\mathbf{v}}(\mathbf{Q})\leq 0$.
In particular, by Lemma~\ref{lemma:bellman}, if $\mathbf{u}$ dominates all $\mathbf{v}\in\mathbf{\mathcal{U}}$ at state $\mathbf{Q}$, then $\mu^*(\mathbf{Q})=\mathbf{u}$.
Based on Lemma~\ref{lemma:propertyV},  we have the following property of the function defined in \eqref{eqn:delta_func}.
\begin{lemma}[Monotonicity of $\Delta_{\mathbf{u},\mathbf{v}}(\mathbf{Q})$]\label{lemma:property_delta}
For any $\mathbf{Q}\in\mathbf{\mathcal{Q}}$ and $\mathbf{u},\mathbf{v}\in\mathbf{\mathcal{U}}$, $\Delta_{\mathbf{u},\mathbf{v}}(\mathbf{Q})$ has the following properties.



\begin{enumerate}
\item If $u_0\in\mathcal{M}_0$, then $\Delta_{\mathbf{u},\mathbf{v}}(\mathbf{Q})$ is monotonically non-increasing with $Q_{0,m}$ and $Q_{n,m}$ for all $n\in\mathcal{N}_{m}$, where $m=u_0$.

\item If $u_n\in\mathcal{M}_n$ for some $n\in\mathcal{N}^+$, then $\Delta_{\mathbf{u},\mathbf{v}}(\mathbf{Q})$ is monotonically non-increasing with  $Q_{n,m}$,  where $m=u_n$.
\end{enumerate}
\end{lemma}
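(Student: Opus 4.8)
The plan is to reduce everything to the monotonicity of $V$ established in Lemma~\ref{lemma:propertyV}, after stripping away the parts of $J$ that either cancel or do not depend on $\mathbf{Q}$. First I would write the per-stage cost as $g(\mathbf{Q},\mathbf{u})=d(\mathbf{Q})+wp(\mathbf{u})$ and substitute into \eqref{eqn:delta_func}. The state-dependent delay term $d(\mathbf{Q})$ is common to $J(\mathbf{Q},\mathbf{u})$ and $J(\mathbf{Q},\mathbf{v})$ and thus cancels, while the power terms contribute only the $\mathbf{Q}$-independent constant $w(p(\mathbf{u})-p(\mathbf{v}))$. Hence
\[
\Delta_{\mathbf{u},\mathbf{v}}(\mathbf{Q})=w\big(p(\mathbf{u})-p(\mathbf{v})\big)+\mathbb{E}\big[V(\mathbf{Q}'(\mathbf{u}))\big]-\mathbb{E}\big[V(\mathbf{Q}'(\mathbf{v}))\big],
\]
where $\mathbf{Q}'(\mathbf{a})$ denotes the random next state reached from $\mathbf{Q}$ under action $\mathbf{a}$, and it remains only to understand how the two expected value-function terms vary with a single queue coordinate.

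The key structural observation, read off \eqref{eqn:queue1}--\eqref{eqn:queue2}, is that each next-state coordinate $Q'_{n,m}$ depends on the current state solely through $Q_{n,m}$ (the arrivals being exogenous), so perturbing one coordinate of $\mathbf{Q}$ perturbs at most the matching coordinate of $\mathbf{Q}'$. I would then proceed coordinate by coordinate, coupling the two states under a common arrival realization $\mathbf{A}$ so that Lemma~\ref{lemma:propertyV} can be applied pathwise before taking expectations. For statement~1, fix $m=u_0\in\mathcal{M}_0$ and compare a state $\mathbf{Q}^1$ with a state $\mathbf{Q}^2$ that agrees with it except for a larger value in the coordinate $Q_{0,m}$ (and likewise for $Q_{n,m}$ with $n\in\mathcal{N}_m$). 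Because $u_0=m$ forces the indicators in \eqref{eqn:queue1} and \eqref{eqn:queue2} to vanish, the coordinates $Q'_{0,m}$ and $Q'_{n,m}$ (for every $n\in\mathcal{N}_m$) are reset to quantities not involving the current backlog; thus $\mathbf{Q}'(\mathbf{u})$ is pathwise identical for $\mathbf{Q}^1$ and $\mathbf{Q}^2$, and $\mathbb{E}[V(\mathbf{Q}'(\mathbf{u}))]$ is constant in that coordinate. Under the competing action $\mathbf{v}$, the same coordinate of $\mathbf{Q}'(\mathbf{v})$ is non-decreasing in the perturbed backlog while all other coordinates are unchanged, so the next state under $\mathbf{v}$ is componentwise larger when started from $\mathbf{Q}^2$ than from $\mathbf{Q}^1$, and Lemma~\ref{lemma:propertyV} makes $\mathbb{E}[V(\mathbf{Q}'(\mathbf{v}))]$ non-decreasing in that coordinate. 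Subtracting, $\Delta_{\mathbf{u},\mathbf{v}}(\mathbf{Q}^2)-\Delta_{\mathbf{u},\mathbf{v}}(\mathbf{Q}^1)=0-(\text{non-negative})\leq 0$, which is the claimed monotonicity.

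Statement~2 follows from the identical template, with the clearing localized to the SBS: when $u_n=m\in\mathcal{M}_n$ the indicator $\mathbf{1}(u_0\neq m~\&~u_n\neq m)$ in \eqref{eqn:queue1} vanishes, so $Q'_{n,m}(\mathbf{u})$ is reset independently of $Q_{n,m}$ and $\mathbb{E}[V(\mathbf{Q}'(\mathbf{u}))]$ is constant in $Q_{n,m}$, whereas $\mathbb{E}[V(\mathbf{Q}'(\mathbf{v}))]$ remains non-decreasing in $Q_{n,m}$ by Lemma~\ref{lemma:propertyV}. I do not anticipate a deep obstacle; the conceptual heart is simply that scheduling content $m$ erases the dependence of the corresponding next-state coordinate(s) on the current queue length, which is exactly what turns the difference $\Delta_{\mathbf{u},\mathbf{v}}$ monotone. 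The one technical point demanding care is the coupling: one must fix the same realization of $\mathbf{A}$ for both states so that the componentwise comparison of next states, and hence the application of Lemma~\ref{lemma:propertyV}, is valid realization-by-realization, with the inequality surviving the expectation.
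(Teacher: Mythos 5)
Your proposal is correct and follows essentially the same route as the paper's Appendix C: cancel the $d(\mathbf{Q})$ and power terms in $\Delta_{\mathbf{u},\mathbf{v}}$, observe that scheduling content $m=u_0$ (resp.\ $m=u_n$) clears the relevant queue(s) so that the next state under $\mathbf{u}$ is unchanged by the perturbed coordinate, couple the two states under a common arrival realization so the next states under $\mathbf{v}$ are componentwise ordered, and invoke Lemma~\ref{lemma:propertyV}. The only cosmetic difference is that you perturb one coordinate at a time while the paper perturbs all coordinates in $\{(n,j):n\in\{0\}\cup\mathcal{N}_j,\,j=u_0\}$ simultaneously; both establish the claimed coordinate-wise monotonicity.
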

\begin{proof}
   Please see Appendix C.
\end{proof}

Lemma~\ref{lemma:property_delta} indicates that, if $\mathbf{u}$ dominates $\mathbf{v}$ at some state $\mathbf{Q}$, then by increasing $Q_{0,u_0}$ or $Q_{n,u_0}$ for any $n\in\mathcal{N}_{u_0}$ and $u_0\neq 0$, or by increasing $Q_{n,u_n}$ for any $n\in\mathcal{N}^+$ and $u_n\neq 0$, $\mathbf{u}$ still dominates $\mathbf{v}$.
The properties of $\Delta_{\mathbf{u},\mathbf{v}}(\mathbf{Q})$ in Lemma~\ref{lemma:property_delta} are similar to the diminishing-return property of submodular functions used in the existing structural analysis \cite{Koole}. Lemma~\ref{lemma:property_delta} stems from the special properties of multicasting and is essential to characterize the optimality properties.
By Lemma~\ref{lemma:property_delta}, we can characterize the structural properties of the optimal policy $\mu^*$. We start with several definitions.
Define
\begin{align}
\Phi_{\mathbf{u}}&(\mathbf{Q}_{-n,-m})\triangleq\{Q_{n,m}|Q_{n,m}\in\mathcal{Q}_{n,m}~\text{and}\nonumber\\&\Delta_{\mathbf{u},\mathbf{v}}(Q_{n,m},\mathbf{Q}_{-n,-m})\leq 0~\forall \mathbf{v}\in\mathbf{\mathcal{U}}~\text{and}~\mathbf{v}\neq \mathbf{u}\},
\end{align}
where
$\mathbf{Q}_{-n,-m}$
\noindent$\triangleq (Q_{i,j})_{i\in\mathcal{N},j\in\mathcal{M}_i,(i,j)\neq(n,m)}$.
Based on $\Phi_{\mathbf{u}}(\cdot)$, we define:
\begin{align}
&\phi_{\mathbf{u}}^+(\mathbf{Q}_{-n,-m})\triangleq\begin{cases}\max\Phi_{\mathbf{u}}(\mathbf{Q}_{-n,-m}),  & \text{if}~\Phi_{\mathbf{u}}(\mathbf{Q}_{-n,-m})\neq\emptyset \\
            -\infty,  &\text{otherwise}
  \end{cases},\label{eqn:phi}\\
&\phi_{\mathbf{u}}^-(\mathbf{Q}_{-n,-m})
\triangleq\begin{cases}\min\Phi_{\mathbf{u}}(\mathbf{Q}_{-n,-m}),  & \text{if}~\Phi_{\mathbf{u}}(\mathbf{Q}_{-n,-m})\neq\emptyset \\
            +\infty,  &\text{otherwise}
  \end{cases}.\label{eqn:psi}
\end{align}
Let $\mathbf{0}$ denote the $1\times (N+1)$ vector with all entries $0$.
Then, we have the following theorem.
\begin{theorem}[Structural properties of $\mu^*$] \label{theorem:optimal}
For any $\mathbf{Q}\in\mathbf{\mathcal{Q}}$, the optimal policy $\mu^*$ has the following structural properties.


\begin{enumerate}
  \item $\mu^*(\mathbf{Q})=\mathbf{0}$ for all $\mathbf{Q}\in\mathbf{\mathcal{Q}}_0\triangleq\{\mathbf{Q}|Q_{n,m}\leq \phi_{\mathbf{0}}^+(\mathbf{Q}_{-n,-m}), \forall m\in\mathcal{M}_n~\text{and}~n\in\mathcal{N}\}$.
\item If $\exists n\in\mathcal{N}$, such that $u_n^*\in\mathcal{M}_n$, then $\mu^*(\mathbf{Q})=\mathbf{u}^*$ for all $\mathbf{Q}\in\mathbf{\mathcal{Q}}$ such that
 \begin{equation}
Q_{n,m}\geq \phi_{\mathbf{u}^*}^-(\mathbf{Q}_{-n,-m}),
 \end{equation}
where $m=u_n^*$.
 Moreover, $\phi_{\mathbf{u}^*}^-(\mathbf{Q}_{-0,-m})$ is monotonically  non-increasing with $Q_{n,m}$ for all $n\in\mathcal{N}_{m}$.
\end{enumerate}
\end{theorem}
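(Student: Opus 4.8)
The plan is to read off both parts from Lemma~\ref{lemma:property_delta} together with two elementary facts: the reflection identity $\Delta_{\mathbf{u},\mathbf{v}}(\mathbf{Q})=-\Delta_{\mathbf{v},\mathbf{u}}(\mathbf{Q})$, and the dominance characterization implied by Lemma~\ref{lemma:bellman}, namely that $\mu^*(\mathbf{Q})=\mathbf{u}$ exactly when $\Delta_{\mathbf{u},\mathbf{v}}(\mathbf{Q})\le 0$ for every $\mathbf{v}\neq\mathbf{u}$ --- equivalently, when $Q_{n,m}\in\Phi_{\mathbf{u}}(\mathbf{Q}_{-n,-m})$ for a reference coordinate $(n,m)$. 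The whole argument then reduces to deciding, coordinate by coordinate, whether the dominance sets $\Phi_{\mathbf{u}}(\mathbf{Q}_{-n,-m})$ are up-sets or down-sets in $Q_{n,m}$, which I read directly off the sign of the monotonicity in Lemma~\ref{lemma:property_delta}. Throughout I use that each $\Phi_{\mathbf{u}}(\mathbf{Q}_{-n,-m})\subseteq\mathcal{Q}_{n,m}$ is finite, so whenever it is nonempty its maximum and minimum are attained, and the $\pm\infty$ conventions in \eqref{eqn:phi}--\eqref{eqn:psi} correctly handle the empty case.

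Part~1 is the delicate one and is where I expect the main obstacle. Because the idle action $\mathbf{0}$ serves no queue, Lemma~\ref{lemma:property_delta} yields \emph{no} monotonicity of $\Delta_{\mathbf{0},\mathbf{v}}$ in any coordinate, so $\Phi_{\mathbf{0}}(\mathbf{Q}_{-n,-m})$ need not be a down-set and a one-coordinate threshold argument fails. The remedy is to transfer the monotonicity onto the \emph{competing} action. Fix $\mathbf{Q}\in\mathbf{\mathcal{Q}}_0$ and any $\mathbf{v}\neq\mathbf{0}$; being non-idle, $\mathbf{v}$ serves at least one queue $(n^*,m^*)$ (either $v_0=m^*$ with $n^*\in\{0\}\cup\mathcal{N}_{m^*}$, or $v_{n^*}=m^*$ with $n^*\in\mathcal{N}^+$), and for that coordinate Lemma~\ref{lemma:property_delta} makes $\Delta_{\mathbf{v},\mathbf{0}}(\mathbf{Q})=-\Delta_{\mathbf{0},\mathbf{v}}(\mathbf{Q})$ non-increasing in $Q_{n^*,m^*}$. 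Since $\mathbf{Q}\in\mathbf{\mathcal{Q}}_0$, the threshold $q^*:=\phi_{\mathbf{0}}^+(\mathbf{Q}_{-n^*,-m^*})=\max\Phi_{\mathbf{0}}(\mathbf{Q}_{-n^*,-m^*})$ is finite and $Q_{n^*,m^*}\le q^*$; as $q^*\in\Phi_{\mathbf{0}}$, the action $\mathbf{0}$ dominates $\mathbf{v}$ at $(q^*,\mathbf{Q}_{-n^*,-m^*})$, i.e. $\Delta_{\mathbf{v},\mathbf{0}}(q^*,\mathbf{Q}_{-n^*,-m^*})\ge 0$, and monotonicity with $Q_{n^*,m^*}\le q^*$ gives $\Delta_{\mathbf{v},\mathbf{0}}(\mathbf{Q})\ge 0$. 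Hence $\mathbf{0}$ dominates every $\mathbf{v}\neq\mathbf{0}$ and $\mu^*(\mathbf{Q})=\mathbf{0}$. This also shows the all-coordinate form of $\mathbf{\mathcal{Q}}_0$ is unavoidable: if some $Q_{n,m}$ exceeded $\phi_{\mathbf{0}}^+(\mathbf{Q}_{-n,-m})=\max\Phi_{\mathbf{0}}(\mathbf{Q}_{-n,-m})$, then $Q_{n,m}\notin\Phi_{\mathbf{0}}(\mathbf{Q}_{-n,-m})$, so $\mathbf{0}$ would fail to dominate some action and idling could not be optimal.

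Part~2 is then the direct ``served-coordinate'' version of the same idea. Since $u_n^*=m\in\mathcal{M}_n$, Lemma~\ref{lemma:property_delta} (its first part if $n=0$, its second if $n\in\mathcal{N}^+$) makes $\Delta_{\mathbf{u}^*,\mathbf{v}}(\mathbf{Q})$ non-increasing in $Q_{n,m}$ for every $\mathbf{v}$; thus each set $\{Q_{n,m}:\Delta_{\mathbf{u}^*,\mathbf{v}}(\mathbf{Q})\le 0\}$ is an up-set, and so is their intersection $\Phi_{\mathbf{u}^*}(\mathbf{Q}_{-n,-m})$ over $\mathbf{v}\neq\mathbf{u}^*$. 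Consequently $Q_{n,m}\ge\phi_{\mathbf{u}^*}^-(\mathbf{Q}_{-n,-m})=\min\Phi_{\mathbf{u}^*}(\mathbf{Q}_{-n,-m})$ forces $Q_{n,m}\in\Phi_{\mathbf{u}^*}(\mathbf{Q}_{-n,-m})$, i.e. $\mu^*(\mathbf{Q})=\mathbf{u}^*$. For the final monotonicity of $\phi_{\mathbf{u}^*}^-(\mathbf{Q}_{-0,-m})$ (here $u_0^*=m$) in $Q_{n,m}$ with $n\in\mathcal{N}_m$, I would show that $\Phi_{\mathbf{u}^*}(\mathbf{Q}_{-0,-m})$, viewed as a subset of $\mathcal{Q}_{0,m}$, only enlarges as $Q_{n,m}$ grows: by the first part of Lemma~\ref{lemma:property_delta} each $\Delta_{\mathbf{u}^*,\mathbf{v}}$ is non-increasing in $Q_{n,m}$ for $n\in\mathcal{N}_m$, so any $Q_{0,m}$ achieving dominance at one value of $Q_{n,m}$ still achieves it at all larger values; the minimum of an enlarging set can only decrease, which is exactly the claimed non-increasing threshold.
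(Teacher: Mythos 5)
Your proof is correct and follows essentially the same route as the paper's Appendix D: both parts rest on Lemma~\ref{lemma:property_delta}, the reflection identity $\Delta_{\mathbf{u},\mathbf{v}}(\mathbf{Q})=-\Delta_{\mathbf{v},\mathbf{u}}(\mathbf{Q})$ (to transfer monotonicity onto the competing action's served coordinate when $\mathbf{u}=\mathbf{0}$), and the definitions of $\phi^{+}_{\mathbf{u}}$ and $\phi^{-}_{\mathbf{u}}$, with the threshold monotonicity in Part 2 obtained by enlarging the dominance set $\Phi_{\mathbf{u}^*}(\mathbf{Q}_{-0,-m})$ as $Q_{n,m}$, $n\in\mathcal{N}_m$, grows. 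Your up-set phrasing and per-competitor choice of served coordinate make explicit what the paper leaves implicit, but the argument is the same.
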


\begin{figure*}[!t]
\begin{minipage}[t]{0.245\linewidth}
\centering
\includegraphics[scale=0.325]{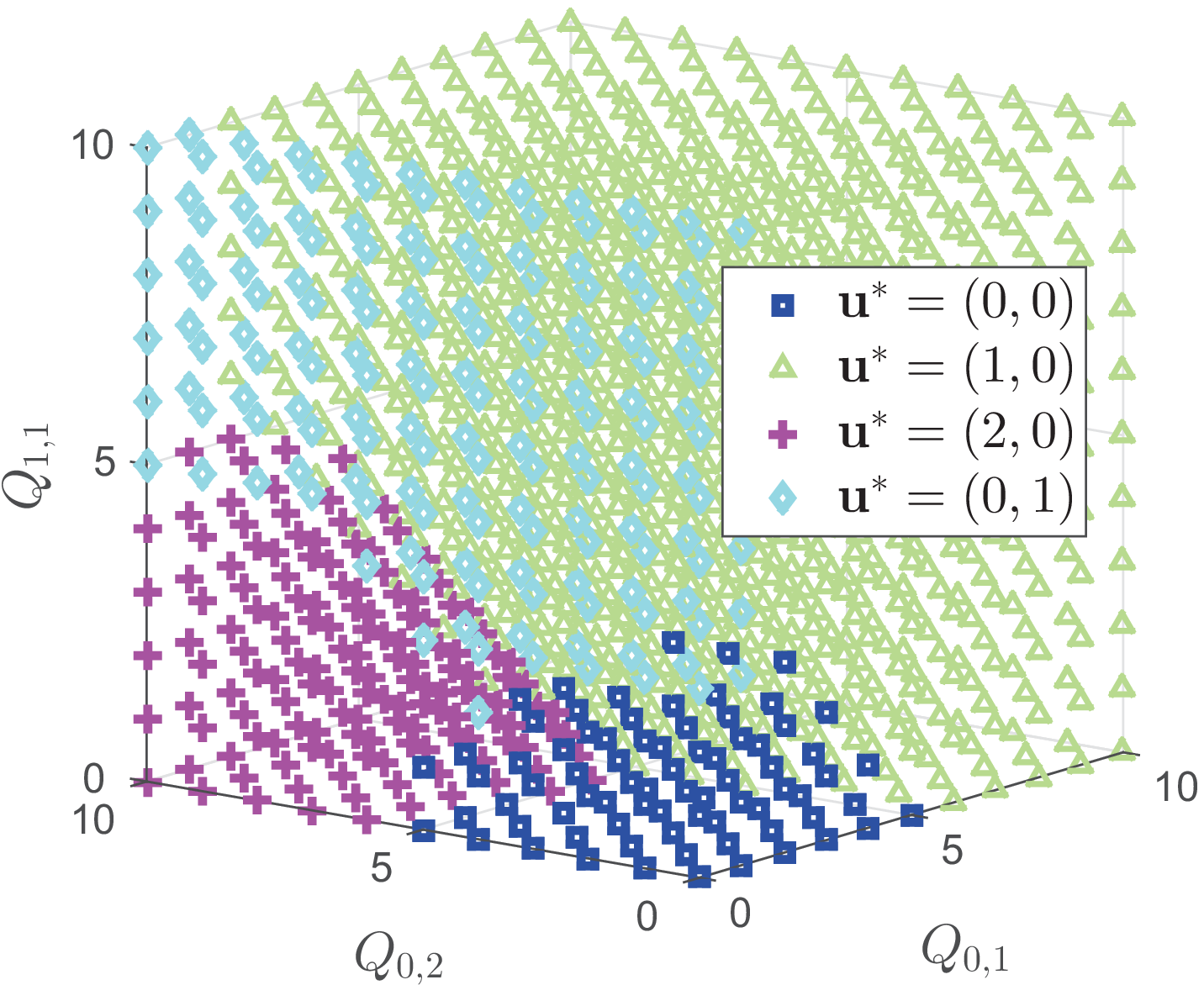}
\subcaption{Whole space.}\label{fig:3D}
\end{minipage}%
\begin{minipage}[t]{.245\linewidth}
\centering
\includegraphics[scale=0.33]{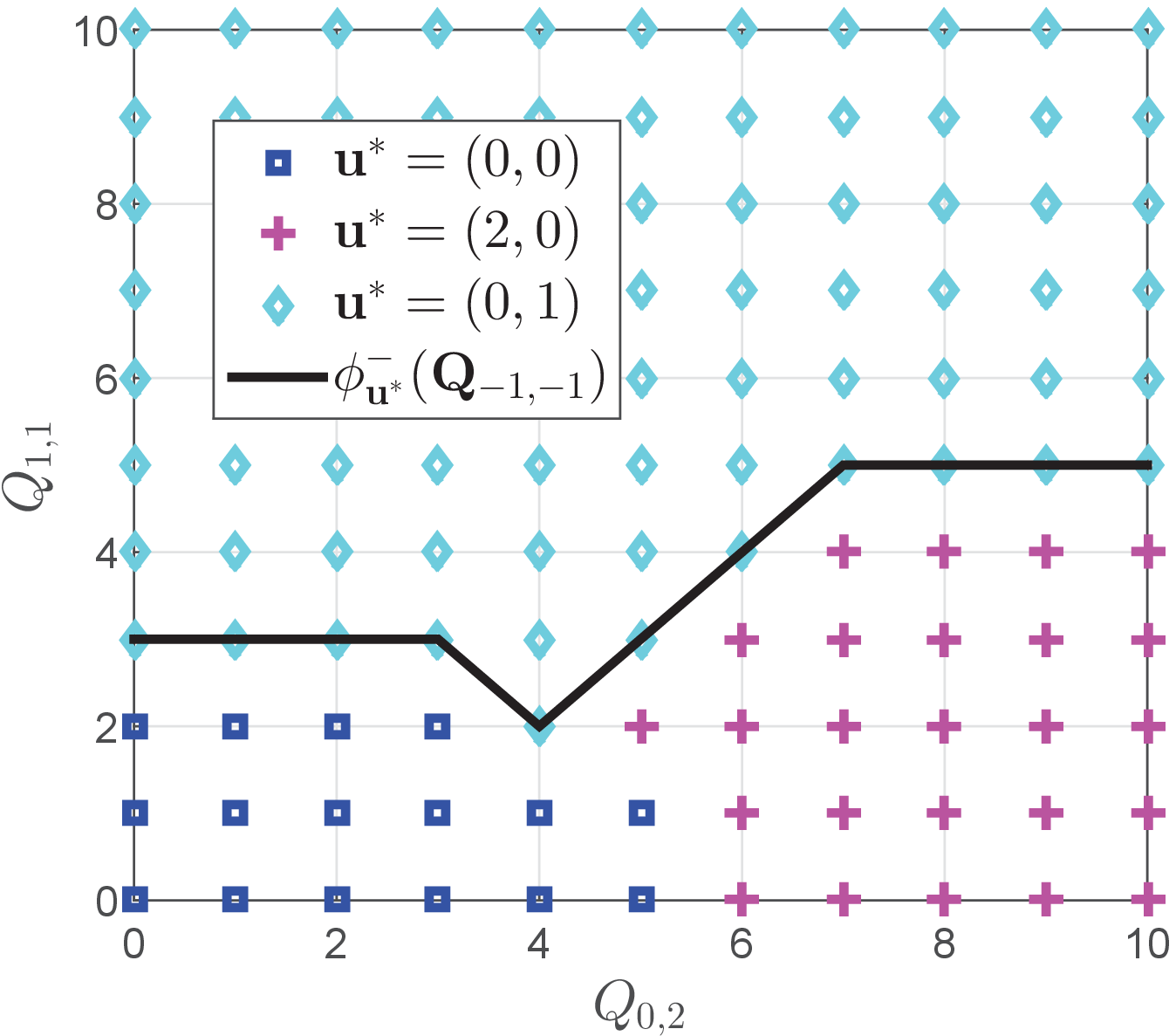}
\subcaption{$Q_{0,1}=0$.}\label{fig:2D_fixed_Q01}
\end{minipage}
\begin{minipage}[t]{.245\linewidth}
\centering
\includegraphics[scale=0.33]{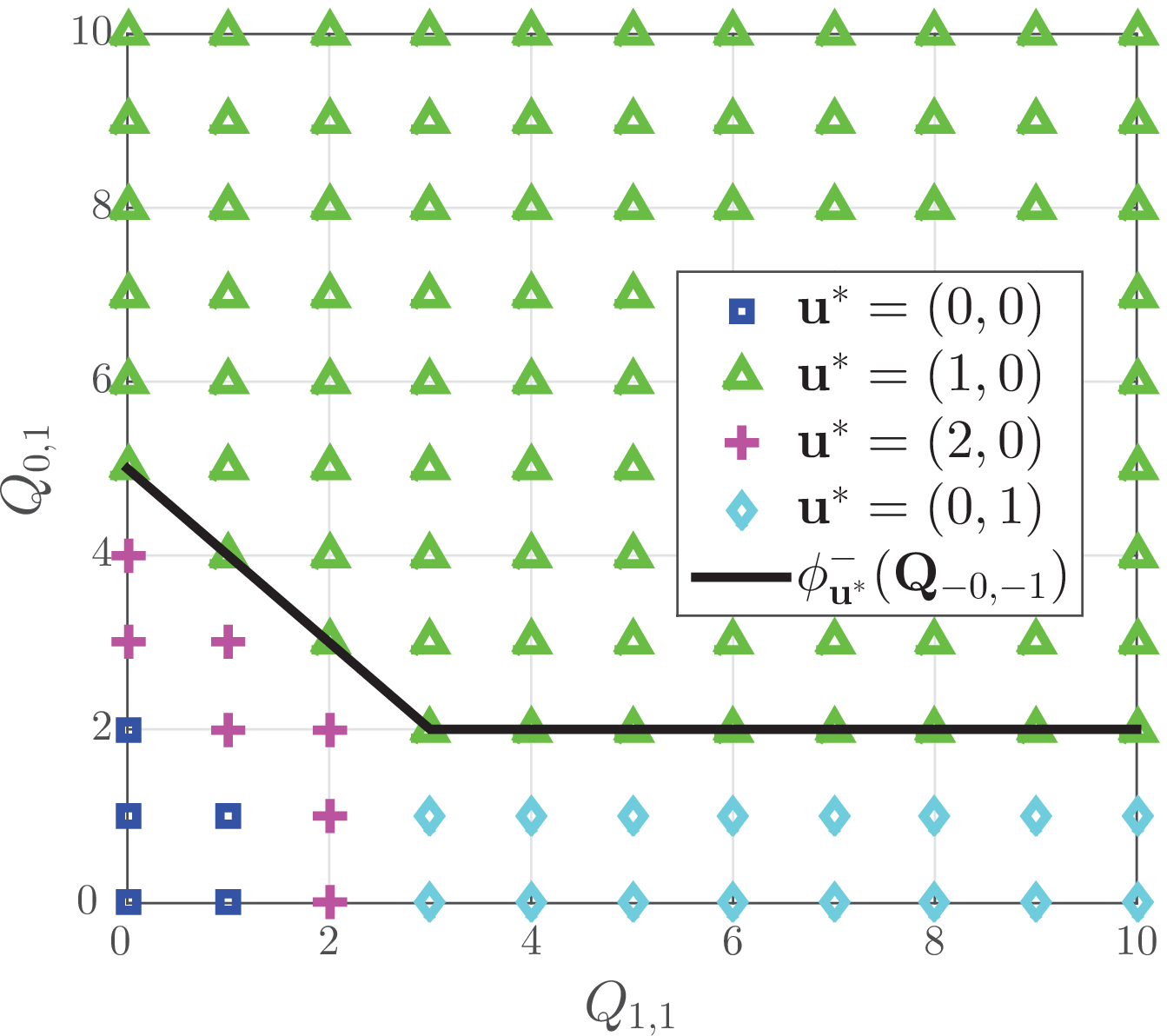}
\subcaption{$Q_{0,2}=5$.}\label{fig:2D_fixed_Q02}
\end{minipage}
\begin{minipage}[t]{.245\linewidth}
\centering
\includegraphics[scale=0.33]{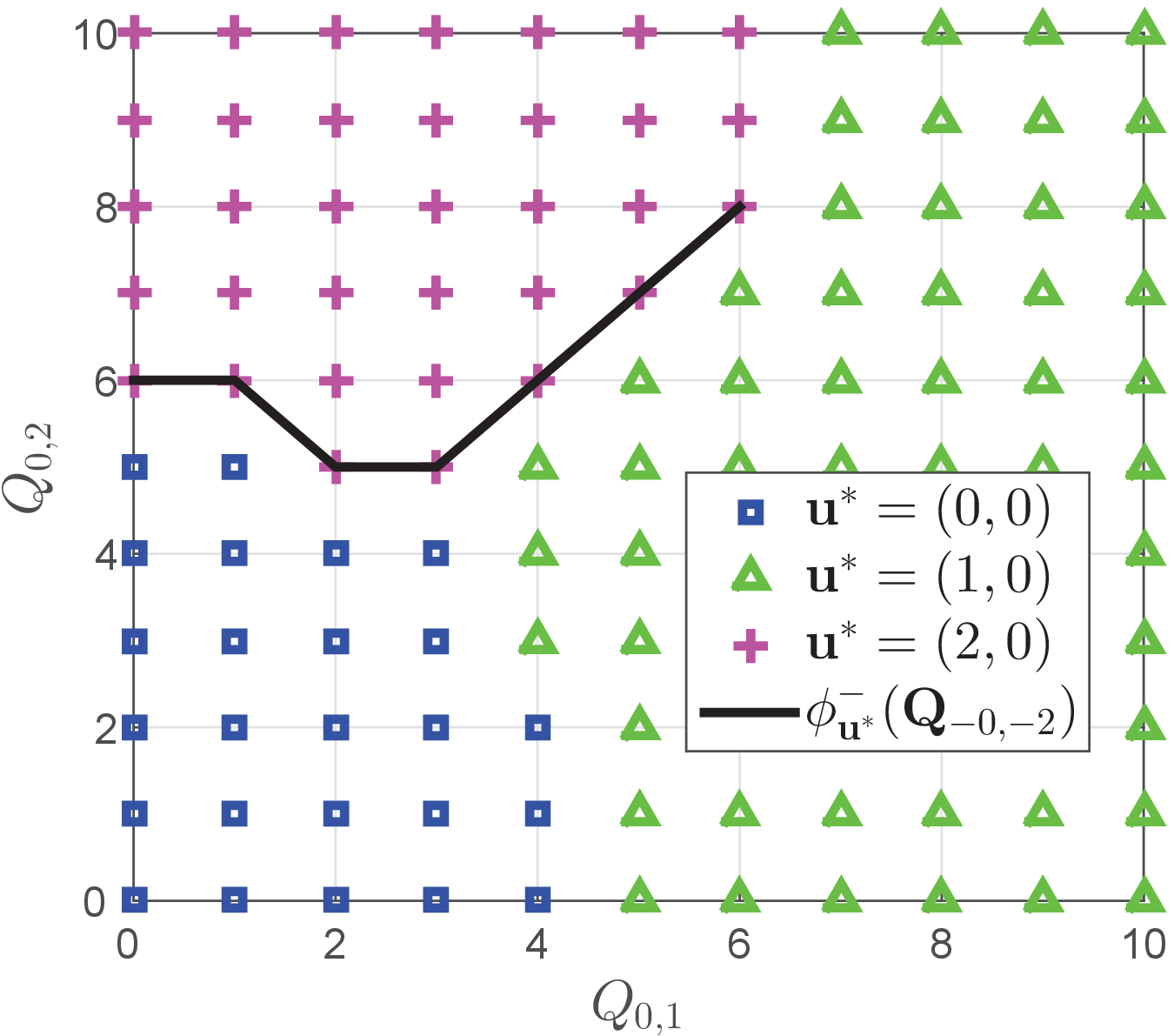}
\subcaption{$Q_{1,1}=1$.}\label{fig:2D_fixed_Q11}
\end{minipage}
\caption{Structure of the optimal multicast scheduling. $N=1$, $\mathcal{K}=\{1,2,3,4\}$, $\mathcal{M}=\{1,2\}$, $\mathcal{K}_1=\{1,2\}$, and $\mathcal{M}_1=\{1\}$. In each slot, each user requests one content, which is content 1 with probability 0.6 and content 2 with probability 0.4.}\label{fig:structure}
\end{figure*}

\begin{proof}
  Please see Appendix D.
\end{proof}

We illustrate the analytical results of Theorem~\ref{theorem:optimal} in Fig.~\ref{fig:structure}, where the optimal policy is computed numerically using \emph{policy iteration algorithm} (PIA)\cite[Chapter 8.6]{puterman}.
We observe from Fig.~\ref{fig:3D} that, if the queue state falls in the region of blue squares (i.e., $\mathbf{\mathcal{Q}}_0$), the optimal control is $(0,0)$, i.e., both the MBS and the SBS keep idle. Hence, we refer to $\mathbf{\mathcal{Q}}_0$ as the idle region of the optimal policy.
From Fig.~\ref{fig:2D_fixed_Q01}-\ref{fig:2D_fixed_Q11}, we observe that given $\mathbf{Q}_{-n,-m}$, the scheduling for content $m\in\mathcal{M}_n$ by BS $n\in\mathcal{N}$ is of threshold type (Property 2 of Theorem~\ref{theorem:optimal}).
This indicates that, it is not efficient to schedule content $m$ by BS $n$ when $Q_{n,m}$ is small (i.e., the delay cost is small), as a higher power cost per request is consumed. This shows the tradeoff between the delay cost and the power cost.
Fig.~\ref{fig:2D_fixed_Q02} illustrates the monotonically non-increasing property of $\phi_{\mathbf{u}^*}^-(\mathbf{Q}_{-0,-1})$ in terms of  $Q_{1,1}$. This reveals that the MBS is more willing to multicast content $1$ when $Q_{1,1}$ is large. The reason is that the MBS can satisfy more requests than any SBS.
These optimality properties provide design insights for multicast scheduling in practical
cache-enabled HetNets.


\section{Structure-Aware Optimal Algorithm}\label{sec:SPIA}
The results in Theorem~\ref{theorem:optimal} can be exploited to substantially reduce the computational complexity for solving the Bellman equation in \eqref{eqn:bellman} in obtaining $\mu^*$. In particular, by Property 2 in Theorem~\ref{theorem:optimal}, for all $\mathbf{Q}\in\mathbf{\mathcal{Q}}$, $\mathbf{u}\in\mathbf{\mathcal{U}}$ and $\mathbf{Q}'=(Q_{n,m}')_{n\in\mathcal{N},m\in\mathcal{M}_n}$ satisfying
\begin{equation}
\begin{cases}
Q_{n,m}'\geq Q_{n,m},  & \text{if}~u_n=m \\
Q_{n,m}'= Q_{n,m},  &\text{if}~u_n\neq m
  \end{cases}
\end{equation}
for each $n\in\mathcal{N}$ and $m\in\mathcal{M}_n$,  we have
\begin{equation}\label{eqn:indication}
  \mu^*(\mathbf{Q})=\mathbf{u}~\Rightarrow~\mu^*(\mathbf{Q}')=\mathbf{u}.
\end{equation}
Therefore, by incorporating the property in \eqref{eqn:indication} into the standard PIA, we develop a structure-aware algorithm in Algorithm~\ref{alg:SPIA}, which is referred to as the structured policy iteration algorithm (SPIA).
According to Theorem 8.6.6 and Chapter 8.11.2 in \cite{puterman}, we know that SPIA converges to the optimal policy $\mu^*$ in \eqref{eqn:mu} within a finite number of iterations, and hence is an optimal algorithm.

\begin{algorithm}[!t]
\caption{Structured Policy Iteration Algorithm}
\label{alg:SPIA}
\begin{algorithmic}[1]
\State Set $\mu^*_0(\mathbf{Q})=\mathbf{0}$ for all $\mathbf{Q}\in\mathbf{\mathcal{Q}}$, select reference state $\mathbf{Q}^\dag$, and set $l=0$.
\State (Policy Evaluation) Given policy $\mu_l^*$, compute the average cost $\theta_l$ and value function $V_l(\mathbf{Q})$ from the linear system of equations\footnotemark
\begin{equation}\label{eqn:spia_eva}
\begin{cases}
     \theta_l+V_l(\mathbf{Q})=g(\mathbf{Q},\mu_l^*(\mathbf{Q}))+\mathbb{E}\left[V_l(\mathbf{Q}')\right],~\forall \mathbf{Q}\in\mathbf{\mathcal{Q}}\\
  V_l(\mathbf{Q}^\dag)=0
\end{cases},
\end{equation}
\Statex where $\mathbf{Q}'$ is defined in Lemma~\ref{lemma:bellman}.\label{code:spia_eva}
\State (Structured Policy Improvement) Obtain a new policy $\mu_{l+1}^*$, where for each $\mathbf{Q}\in\mathbf{\mathcal{Q}}$, $\mu_{l+1}^*(\mathbf{Q})$ is such that:\label{code:spia_imp}
\Statex\textbf{if}~$\exists n\in\mathcal{N}$ and $\mathbf{Q}'\in\mathbf{\mathcal{Q}}$ such that $\mu_l^*(\mathbf{Q}')=\mathbf{u}$, $u_n\in\mathcal{M}_n$, $Q_{n,m}'\leq Q_{n,m}$, and $Q_{i,j}'=Q_{i,j}$ for all $(i,j)\neq (n,m)$, where $m=u_n$~\textbf{then}
        \begin{equation*}
         \mu_{l+1}^*(\mathbf{Q})=\mathbf{u}.
        \end{equation*}
\Statex\textbf{else}
        \begin{equation*}
            \mu_{l+1}^*(\mathbf{Q})=\arg\min_{\mathbf{u}\in\mathbf{\mathcal{U}}}\left\{g(\mathbf{Q},\mathbf{u})+\mathbb{E}\left[V_l(\mathbf{Q}')\right]\right\}.
        \end{equation*}
\Statex\textbf{endif}
\State Go to Step \ref{code:spia_eva} until $\mu_{l+1}^*=\mu_{l}^*$.
\end{algorithmic}
\end{algorithm}
\footnotetext{The solution to \eqref{eqn:spia_eva} can be obtained directly using Gaussian elimination or iteratively using the relative value iteration method\cite{bertsekas}.}
Note that, in Step~\ref{code:spia_imp} (structured policy improvement) of Algorithm~\ref{alg:SPIA}, we do not need to perform the minimization over $\mathbf{\mathcal{U}}$ when the condition is satisfied (which is the case for a large amount of queue states in $\mathbf{\mathcal{Q}}$). This can be seen in Fig.~\ref{fig:structure}.
While, in the standard policy improvement step of PIA, the new policy $\mu_{l+1}^*$is obtained by:
    \begin{eqnarray}
      \mu_{l+1}^*(\mathbf{Q})=\arg\min_{\mathbf{u}\in\mathbf{\mathcal{U}}}\left\{g(\mathbf{Q},\mathbf{u})+\mathbb{E}\left[V_l(\mathbf{Q}')\right]\right\},~\forall\mathbf{Q}\in\mathbf{\mathcal{Q}}.\label{eqn:pia_imp}
    \end{eqnarray}
By \eqref{eqn:pia_imp}, obtaining $\mu_{l+1}^*$ requires a brute-force minimization over  $\mathbf{\mathcal{U}}$ for each $\mathbf{Q}\in\mathbf{\mathcal{Q}}$, which can be very computationally expensive when the numbers of the contents $M$ and the SBSs $N$ are large.
By comparing the structured policy improvement step of SPIA with the standard policy improvement step of PIA, we can see that SPIA can achieve considerable computational saving.
Note that, the complexity of SPIA depends on the specific algorithm used for solving the linear system equations in \eqref{eqn:spia_eva}. For instance, using Gaussian elimination for solving \eqref{eqn:spia_eva}, the complexity of each iteration in SPIA is $O(|\mathbf{\mathcal{Q}}|^3 + |\mathbf{\mathcal{U}}||\mathbf{\mathcal{Q}}|^2)$ \cite{puterman}.

Although the proposed structure-aware optimal algorithm, i.e., SPIA, can alleviate the computational burden of the standard PIA, it still suffers from the curse of dimensionality\cite{bertsekas}, due to the exponential growth of the cardinality of the system state space ($|\mathbf{\mathbf{Q}}|=\prod_{n\in\mathcal{N}}\prod_{m\in\mathcal{M}_n}|\mathcal{Q}_{n,m}|$). When the number of the SBSs and the cache sizes are large, the resulting huge state space may render SPIA computationally impractical.

\section{Low Complexity Suboptimal Solution}\label{sec:suboptimal}
To further reduce the complexity of the proposed structure-aware optimal algorithm (i.e., SPIA) and relieve the curse of dimensionality, we would like to develop low-complexity suboptimal solutions. Note that the structural properties of the optimal policy may be one key reason that leads to good performance. Thus, in this section, we focus on design of suboptimal solutions which can maintain the structures of the optimal policy. Specifically,
 based on a randomized base policy, we first propose a low-complexity suboptimal deterministic policy using approximate dynamic programming\cite{bertsekas}. We show that the deterministic policy improves the randomized base policy and possesses similar properties to the optimal policy. Then, we design a low-complexity structured algorithm to compute the proposed policy, by exploiting these structural properties.
\subsection{Low Complexity Suboptimal Policy}\label{sec:hatmu}
The structural properties of the optimal policy come from the monotonicity property of the value function. Therefore, to maintain these structures in designing a suboptimal solution, we consider a value function decomposition method that can preserve the structural properties of the value function. Based on this decomposition, we propose a low-complexity suboptimal deterministic policy, which will be shown to possess similar structural properties to the optimal policy.
We first introduce a randomized base policy.
\begin{definition}[Randomized Base Policy]\label{def:randomized}
A randomized base policy for the multicast scheduling control $\hat{\mu}$ is given by a distribution on the feasible multicast scheduling action space $\mathbf{\mathcal{U}}$.
\end{definition}

We restrict our attention to randomized unichain base policies. Let $\hat{\theta}$ and $\hat{V}(\mathbf{Q})$ denote the average cost and the value function under a randomized unichain base policy $\hat{\mu}$, respectively.
By \cite[Proposition 4.2.2]{bertsekas}, there exists $(\hat{\theta},\{\hat{V}(\mathbf{Q})\})$ satisfying:
\begin{align}
  \hat{\theta}+\hat{V}(\mathbf{Q})=\mathbb{E}^{\hat{\mu}}\left[g(\mathbf{Q},\mathbf{u})\right]+\sum_{\mathbf{Q}'\in\mathbf{\mathcal{Q}}}\mathbb{E}^{\hat{\mu}}&\left[\Pr[\mathbf{Q}'|\mathbf{Q},\mathbf{u}]\right]\hat{V}(\mathbf{Q}'),\nonumber\\
  &~~~~~~\forall \mathbf{Q}\in\mathbf{\mathcal{Q}},\label{eqn:randombellman}
\end{align}
Next, we show that $\hat{V}(\mathbf{Q})$ has a additive separable structure in the following lemma.
\begin{lemma}[Additive Separable Structure of $\hat{V}(\mathbf{Q})$]
  Given any randomized unichain base policy $\hat{\mu}$, the value function $\{\hat{V}(\mathbf{Q})\}$ in \eqref{eqn:randombellman} can be expressed as $\hat{V}(\mathbf{Q})=\sum_{n\in\mathcal{N}}\sum_{m\in\mathcal{M}_n}\hat{V}_{n,m}(Q_{n,m})$, where $\{\hat{V}_{n,m}(Q_{n,m})\}$ satisfies:
  \begin{align}
    \hat{\theta}_{n,m}&+\hat{V}_{n,m}(Q_{n,m})=\mathbb{E}^{\hat{\mu}}\left[g_{n,m}(Q_{n,m},\mathbf{u})\right]\nonumber\\
    &+\sum_{Q_{n,m}'\in\mathcal{Q}_{n,m}}\mathbb{E}^{\hat{\mu}}\left[\Pr[Q_{n,m}'|Q_{n,m},\mathbf{u}]\right]\hat{V}_{n,m}(Q_{n,m}'),\nonumber\\
    &\hspace{40mm}\forall Q_{n,m}\in\mathcal{Q}_{n,m},
\label{eqn:perrandombellman}
  \end{align}
for all $n\in\mathcal{N}$ and $m\in\mathcal{M}_n$.  Here, $\hat{\theta}_{n,m}$ and $\hat{V}_{n,m}(Q_{n,m})$ denote the per-BS-content average cost and value function under $\hat{\mu}$, respectively,   $g_{n,m}(Q_{n,m},\mathbf{u})\triangleq Q_{n,m}+w\mathbf{1}(u_n=m)p(n,m)$, and $\Pr[Q_{n,m}'|Q_{n,m},\mathbf{u}]\triangleq\Pr[Q_{n,m}(t+1)=Q_{n,m}'|Q_{n,m}(t)=Q_{n,m},\mathbf{u}(t)=\mathbf{u}]$.
\label{lemma:separable}
\end{lemma}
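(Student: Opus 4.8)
The plan is to construct the per-BS-content value functions $\hat{V}_{n,m}$ directly from the per-queue Poisson equation \eqref{eqn:perrandombellman}, and then verify that their sum, together with $\hat{\theta}=\sum_{n\in\mathcal{N}}\sum_{m\in\mathcal{M}_n}\hat{\theta}_{n,m}$, solves the full Poisson equation \eqref{eqn:randombellman}; the conclusion then follows from uniqueness of the solution of \eqref{eqn:randombellman} for a unichain policy. The argument rests on two structural facts. First, the per-stage cost is additively separable: since $u_n\in\mathcal{M}_n\cup\{0\}$ and $p(n,0)=0$, we have $\sum_{m\in\mathcal{M}_n}\mathbf{1}(u_n=m)p(n,m)=p(n,u_n)$, so that $\sum_{n\in\mathcal{N}}\sum_{m\in\mathcal{M}_n}g_{n,m}(Q_{n,m},\mathbf{u})=d(\mathbf{Q})+wp(\mathbf{u})=g(\mathbf{Q},\mathbf{u})$. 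Second, and crucially, because the base policy is a fixed distribution on $\mathbf{\mathcal{U}}$ that does \emph{not} depend on the state, each marginal process $\{Q_{n,m}(t)\}$ is itself a finite-state Markov chain: by the dynamics \eqref{eqn:queue1}--\eqref{eqn:queue2}, the next value $Q'_{n,m}$ is a function only of $Q_{n,m}$, the action $\mathbf{u}$, and the arrival feeding that queue ($A_{n,m}$ or $\tilde{A}_{0,m}$). Hence, for every fixed $\mathbf{u}$, marginalizing the joint kernel $\Pr[\mathbf{Q}'|\mathbf{Q},\mathbf{u}]$ over all coordinates except $(n,m)$ yields $\Pr[Q'_{n,m}|\mathbf{Q},\mathbf{u}]=\Pr[Q'_{n,m}|Q_{n,m},\mathbf{u}]$, i.e. it depends on $\mathbf{Q}$ only through $Q_{n,m}$. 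Full factorization of the joint kernel is neither claimed nor needed; only this marginalization is used.

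Given these facts, I would proceed as follows. Each marginal chain inherits the unichain property from $\hat{\mu}$ --- a path in the full chain projects to a path in the marginal chain, so the projection of the single recurrent class of $\{\mathbf{Q}(t)\}$ lies in one communicating class of $\{Q_{n,m}(t)\}$ --- hence by \cite[Proposition~4.2.2]{bertsekas} the per-queue equation \eqref{eqn:perrandombellman} admits a solution $(\hat{\theta}_{n,m},\hat{V}_{n,m})$. Summing \eqref{eqn:perrandombellman} over all $n\in\mathcal{N}$, $m\in\mathcal{M}_n$ and writing $\hat{V}(\mathbf{Q})\triangleq\sum_{n\in\mathcal{N}}\sum_{m\in\mathcal{M}_n}\hat{V}_{n,m}(Q_{n,m})$, the cost terms assemble into $\mathbb{E}^{\hat{\mu}}[g(\mathbf{Q},\mathbf{u})]$ by the first fact. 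For the continuation terms, since $\hat{V}_{n,m}$ depends on a single coordinate I can replace the per-queue kernel by the full kernel before taking the $\hat{\mu}$-expectation: $\sum_{Q'_{n,m}}\Pr[Q'_{n,m}|Q_{n,m},\mathbf{u}]\hat{V}_{n,m}(Q'_{n,m})=\sum_{\mathbf{Q}'}\Pr[\mathbf{Q}'|\mathbf{Q},\mathbf{u}]\hat{V}_{n,m}(Q'_{n,m})$ by the marginalization identity above. Summing over $(n,m)$ and taking $\mathbb{E}^{\hat{\mu}}$ then reconstructs exactly the continuation term $\sum_{\mathbf{Q}'\in\mathbf{\mathcal{Q}}}\mathbb{E}^{\hat{\mu}}[\Pr[\mathbf{Q}'|\mathbf{Q},\mathbf{u}]]\hat{V}(\mathbf{Q}')$ of \eqref{eqn:randombellman}. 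Thus $(\hat{\theta},\hat{V})$ with $\hat{\theta}=\sum_{n,m}\hat{\theta}_{n,m}$ solves \eqref{eqn:randombellman}, and since the solution of the Poisson equation under a unichain policy is unique (up to an additive constant in $\hat{V}$, fixed by a reference state), the constructed $\hat{V}$ coincides with the value function in \eqref{eqn:randombellman}, establishing the decomposition.

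I expect the main obstacle to be the interchange of the $\hat{\mu}$-expectation with the marginalization over the non-$(n,m)$ coordinates in the continuation term. The tempting shortcut is to assert that the transition kernel factorizes across queues under $\hat{\mu}$, which is false in general because the multiple access constraint \eqref{eqn:u_constraint} correlates $u_0$ and $(u_n)_{n\in\mathcal{N}^+}$ under the randomized policy. The correct and more delicate point is that decomposition of the value function does not require the chain to split into independent chains; it only requires that, conditioned on each fixed action $\mathbf{u}$, the marginal dynamics of each queue be local, so that the single-coordinate function $\hat{V}_{n,m}$ can be pushed through the marginalization for every realization of $\mathbf{u}$ before the outer expectation is applied. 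A secondary point to handle carefully is verifying that each marginal chain is unichain so that \eqref{eqn:perrandombellman} is solvable; this follows from the projection argument above but should be stated explicitly rather than taken for granted.
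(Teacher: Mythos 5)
Your proof is correct and follows essentially the same route as the paper's own Appendix~E: both arguments rest on the additive separability of the per-stage cost, $g(\mathbf{Q},\mathbf{u})=\sum_{n\in\mathcal{N}}\sum_{m\in\mathcal{M}_n}g_{n,m}(Q_{n,m},\mathbf{u})$, and on the marginalization identity $\Pr[Q'_{n,m}|\mathbf{Q},\mathbf{u}]=\Pr[Q'_{n,m}|Q_{n,m},\mathbf{u}]$ (rather than any full factorization of the kernel), and both conclude by substituting the additive form into \eqref{eqn:randombellman} so that it decomposes into the per-queue equations \eqref{eqn:perrandombellman}. The only difference is one of completeness: you spell out the existence of the per-queue solutions (via unichain-ness of each marginal chain) and the uniqueness-up-to-a-constant step identifying the sum with $\hat{V}$, which the paper delegates to its citation of Lemma~3 in \cite{harvest}.
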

\begin{proof}
  Please see Appendix E.
\end{proof}

To alleviate the curse of dimensionality, we approximate the value function in \eqref{eqn:bellman} with $\hat{V}(\mathbf{Q})$:
\begin{equation}
  V(\mathbf{Q})\thickapprox\hat{V}(\mathbf{Q})=\sum_{n\in\mathcal{N}}\sum_{m\in\mathcal{M}_n}\hat{V}_{n,m}(Q_{n,m}),\label{eqn:approx}
\end{equation}
where $\{\hat{V}_{n,m}(Q_{n,m})\}$ is given by the per-BS-content fixed point equation in \eqref{eqn:perrandombellman}.
Then, we develop the following low-complexity deterministic policy $\hat{\mu}^*$:
\begin{align}
  &\hat{\mu}^*(\mathbf{Q})=\arg\min_{\mathbf{u}\in\mathbf{\mathcal{U}}}\left\{g(\mathbf{Q},\mathbf{u})+\sum_{n\in\mathcal{N}}\sum_{m\in\mathcal{M}_n}\mathbb{E}\left[\hat{V}_{n,m}(Q_{n,m})\right]\right\},\nonumber\\&\hspace{65mm}\forall \mathbf{Q}\in\mathbf{\mathcal{Q}}.\label{eqn:hatmu}
\end{align}
In the following proposition, we show that the deterministic policy $\hat{\mu}^*$ generated by \eqref{eqn:hatmu} always improves the corresponding randomized unichain base policy $\hat{\mu}$.
\begin{proposition}[Performance Improvement]
  $\hat{\theta}^*\geq\hat{\theta}$, where $\hat{\theta}$ is the average network cost under a randomized unichain base policy $\hat{\mu}$ and $\hat{\theta}^*$ is the average network cost under the proposed solution.
\end{proposition}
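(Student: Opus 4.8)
The plan is to recognize the proposition as a one-step policy-improvement (comparison) statement for average-cost MDPs and to establish it by the standard drift-plus-averaging argument. The ingredients are already in place: the base policy $\hat{\mu}$ satisfies the fixed-point (Poisson) equation \eqref{eqn:randombellman} with average cost $\hat{\theta}$ and relative value function $\hat{V}$, and by Lemma~\ref{lemma:separable} this $\hat{V}$ admits the additive separable form $\hat{V}(\mathbf{Q})=\sum_{n\in\mathcal{N}}\sum_{m\in\mathcal{M}_n}\hat{V}_{n,m}(Q_{n,m})$. The crucial observation is that $\hat{V}$ is the \emph{exact} relative value function of $\hat{\mu}$, not merely an approximation, so the deterministic policy $\hat{\mu}^*$ defined by the greedy one-step lookahead in \eqref{eqn:hatmu} is precisely the policy-improvement step of $\hat{\mu}$ taken with respect to its own value function. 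This is exactly the setting in which the classical comparison lemma relates $\hat{\theta}^*$ and $\hat{\theta}$.

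First I would establish a per-state drift relation. By the definition of $\hat{\mu}^*$ in \eqref{eqn:hatmu}, for every $\mathbf{Q}\in\mathbf{\mathcal{Q}}$ the action $\hat{\mu}^*(\mathbf{Q})$ attains the minimum of $g(\mathbf{Q},\mathbf{u})+\mathbb{E}[\hat{V}(\mathbf{Q}')]$ over $\mathbf{u}\in\mathbf{\mathcal{U}}$. Comparing this minimized value against the value obtained by substituting the base policy's randomized action into the same one-step expression, i.e., the right-hand side of \eqref{eqn:randombellman}, which equals $\hat{\theta}+\hat{V}(\mathbf{Q})$, yields a one-step relation linking $g(\mathbf{Q},\hat{\mu}^*(\mathbf{Q}))+\mathbb{E}^{\hat{\mu}^*}[\hat{V}(\mathbf{Q}')]$ and $\hat{\theta}+\hat{V}(\mathbf{Q})$. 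Next I would aggregate this per-state relation over the stationary distribution $\pi^*$ induced by $\hat{\mu}^*$: the expected per-stage cost term reproduces $\hat{\theta}^*$ through the definition of the average network cost in \eqref{eqn:system_cost}, while the two relative-value contributions cancel because $\pi^*$ is invariant under the transition kernel of $\hat{\mu}^*$, i.e. $\sum_{\mathbf{Q}}\pi^*(\mathbf{Q})\mathbb{E}^{\hat{\mu}^*}[\hat{V}(\mathbf{Q}')]=\sum_{\mathbf{Q}}\pi^*(\mathbf{Q})\hat{V}(\mathbf{Q})$. What survives is precisely the ordering of $\hat{\theta}^*$ and $\hat{\theta}$ asserted in the proposition.

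The main obstacle is the averaging/telescoping step rather than the one-step relation itself. I must first guarantee that $\hat{\mu}^*$ is unichain, so that a unique stationary distribution $\pi^*$ exists and the average cost $\hat{\theta}^*$ is well defined and independent of the initial state; the finiteness of $\mathbf{\mathcal{Q}}$ coming from the finite counter bounds $N_{n,m}$ makes this tractable and lets me reuse the same existence machinery invoked for Lemma~\ref{lemma:bellman}. I would then justify carefully the interchange of the stationary expectation with the one-step transition that produces the cancellation of the $\hat{V}$ terms. A secondary point worth stating explicitly is that, because $\hat{V}$ is the exact value function of the base policy, no approximation error enters the comparison, so the relation between $\hat{\theta}^*$ and $\hat{\theta}$ holds exactly and uniformly over $\mathbf{\mathcal{Q}}$.
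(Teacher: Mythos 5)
Your proof is sound, and it is considerably more informative than the paper's, whose entire proof of this proposition is the single sentence that the result ``follows directly from \cite{harvest}.'' The route you take---the Poisson equation \eqref{eqn:randombellman} for the base policy, the observation via Lemma~\ref{lemma:separable} that $\hat{V}$ is the \emph{exact} (not approximate) relative value function of $\hat{\mu}$ so that \eqref{eqn:hatmu} is a genuine one-step policy-improvement step, the pointwise bound $g(\mathbf{Q},\hat{\mu}^*(\mathbf{Q}))+\mathbb{E}^{\hat{\mu}^*}\left[\hat{V}(\mathbf{Q}')\right]\leq\hat{\theta}+\hat{V}(\mathbf{Q})$ obtained because a minimum over $\mathbf{\mathcal{U}}$ is dominated by the $\hat{\mu}$-average, and cancellation of the $\hat{V}$ terms under the stationary distribution of $\hat{\mu}^*$---is exactly the classical comparison argument that underlies the cited lemma, so you have reconstructed the proof the paper outsources rather than found a genuinely new one. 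What your version buys is transparency: it isolates the two facts the citation hides (exactness of $\hat{V}$, hence no approximation error in the comparison, and the stationarity cancellation), at the cost of having to worry about well-posedness of $\hat{\theta}^*$, which the citation silently absorbs.

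Two refinements. First, the unichain verification you flag as the main obstacle is avoidable: since $\mathbf{\mathcal{Q}}$ is finite, apply the Ces\`aro-limit matrix $\Pi^*=\lim_{T\to\infty}\frac{1}{T}\sum_{t=0}^{T-1}(P^*)^t$ of the chain induced by $\hat{\mu}^*$ to the pointwise inequality; since $\Pi^*$ is stochastic and $\Pi^*P^*=\Pi^*$, the $\hat{V}$ terms cancel and the average cost of $\hat{\mu}^*$ from \emph{every} initial state is at most $\hat{\theta}$, with no unichain hypothesis on $\hat{\mu}^*$ needed. Second, and more importantly, state the direction of the inequality explicitly: your derivation proves $\hat{\theta}^*\leq\hat{\theta}$, which is what ``performance improvement'' means for a cost-minimization problem, whereas the proposition as printed reads $\hat{\theta}^*\geq\hat{\theta}$---an evident sign typo in the paper, given that the surrounding text claims $\hat{\mu}^*$ ``always improves'' $\hat{\mu}$. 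Your closing phrase that the surviving inequality is ``precisely the ordering asserted in the proposition'' quietly endorses the mis-signed statement; had you written the final inequality out, the discrepancy would have surfaced and been resolved in favor of $\hat{\theta}^*\leq\hat{\theta}$.
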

\begin{proof}
  This result follows directly from \cite{harvest}.
\end{proof}

Note that, to obtain $\hat{\mu}^*$ in \eqref{eqn:hatmu} via solving \eqref{eqn:perrandombellman} for all $n\in\mathcal{N}$ and $m\in\mathcal{M}_n$,  we only need to compute $\{\hat{V}_{n,m}(Q_{n,m})\}$ (a total of $O(\sum_{n\in\mathcal{N}}\sum_{m\in\mathcal{M}_n}|\mathcal{Q}_{n,m}|)$ values). The computational complexity is much lower than computing $\{V(\mathbf{Q})\}$ (a total of $O(\prod_{n\in\mathcal{N}}\prod_{m\in\mathcal{M}_n}|\mathcal{Q}_{n,m}|)$ values) via solving \eqref{eqn:bellman} in obtaining $\mu^*$ in \eqref{eqn:mu}.

\subsection{Structural Properties of Suboptimal Policy}
In this part, we investigate the structural properties of the suboptimal policy $\hat{\mu}^*$ in \eqref{eqn:hatmu}.
Along the lines of the structural analysis for the optimal policy in Section~\ref{sec:optimality}, we first introduce the state-action cost function for $\hat{\mu}^*$:
\begin{equation}
  \hat{J}(\mathbf{Q},\mathbf{u})\triangleq g(\mathbf{Q},\mathbf{u})+\sum_{n\in\mathcal{N}}\sum_{m\in\mathcal{M}_n}\mathbb{E}\left[\hat{V}_{n,m}(Q_{n,m})\right].\label{eqn:hat_J}
\end{equation}
Note that $\hat{J}(\mathbf{Q},\mathbf{u})$ is related to the R.H.S. of \eqref{eqn:hatmu}. Then, we introduce:
\begin{align}
  &\hat{\Delta}_{\mathbf{u},\mathbf{v}}(\mathbf{Q})\triangleq \hat{J}(\mathbf{Q},\mathbf{u})-\hat{J}(\mathbf{Q},\mathbf{v}),\label{eqn:hat_delta_func}\\
&\hat{\Phi}_{\mathbf{u}}(\mathbf{Q}_{-n,-m})\triangleq\{Q_{n,m}|Q_{n,m}\in\mathcal{Q}_{n,m}~\text{and}\nonumber\\&\hspace{5mm}\hat{\Delta}_{\mathbf{u},\mathbf{v}}(Q_{n,m},\mathbf{Q}_{-n,-m})\leq 0~\forall \mathbf{v}\in\mathbf{\mathcal{U}}~\text{and}~\mathbf{v}\neq \mathbf{u}\}.\label{eqn:hat_Phi}
\end{align}
By replacing $\Phi(\cdot)$ with $\hat{\Phi}(\cdot)$ in \eqref{eqn:phi} and \eqref{eqn:psi}, we have $\hat{\phi}_{\mathbf{u}}^+(\cdot)$ and $\hat{\phi}_{\mathbf{u}}^-(\cdot)$, respectively.
In the following theorem, we show that the proposed deterministic low-complexity suboptimal policy possesses similar structural properties to the optimal policy.
This similarity would be one key reason for the good performance of the proposed suboptimal policy, as will be shown in Section~\ref{sec:simulation}.
\begin{theorem}[Structural properties of $\hat{\mu}^*$] \label{theorem:suboptimal}
For any $\mathbf{Q}\in\mathbf{\mathcal{Q}}$, the optimal policy $\hat{\mu}^*$ has the following structural properties.
\begin{enumerate}
  \item $\hat{\mu}^*(\mathbf{Q})=\mathbf{0}$ for all $\mathbf{Q}\in\hat{\mathbf{\mathcal{Q}}}_0\triangleq\{\mathbf{Q}|Q_{n,m}\leq \hat{\phi}_{\mathbf{0}}^+(\mathbf{Q}_{-n,-m}), \forall m\in\mathcal{M}_n~\text{and}~n\in\mathcal{N}\}$.
 \item If $\exists n\in\mathcal{N}$, such that $\hat{u}_n^*\in\mathcal{M}_n$, then $\hat{\mu}^*(\mathbf{Q})=\hat{\mathbf{u}}^*$ for all $\mathbf{Q}\in\mathbf{\mathcal{Q}}$ such that
 \begin{equation}
Q_{n,m}\geq \hat{\phi}_{\hat{\mathbf{u}}^*}^-(\mathbf{Q}_{-n,-m}).
 \end{equation}
where $m=\hat{u}_n^*$.
 Moreover, $\hat{\phi}_{\hat{\mathbf{u}}^*}^-(\mathbf{Q}_{-0,-m})$ is monotonically  non-increasing with $Q_{n,m}$ for all $n\in\mathcal{N}_{m}$.
\end{enumerate}
\end{theorem}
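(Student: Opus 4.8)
The plan is to mirror the three-tier argument that yields Theorem~\ref{theorem:optimal} for the optimal policy (monotonicity of the value function, then monotonicity of the action-difference function, then the threshold structure), but to exploit the additive separable structure of $\hat{V}(\mathbf{Q})$ established in Lemma~\ref{lemma:separable}, which makes each tier lower-dimensional and more transparent. Since $\hat{V}(\mathbf{Q})=\sum_{n\in\mathcal{N}}\sum_{m\in\mathcal{M}_n}\hat{V}_{n,m}(Q_{n,m})$, the monotonicity of $\hat{V}$ reduces to the monotonicity of each scalar $\hat{V}_{n,m}(\cdot)$, and the action-difference function $\hat{\Delta}_{\mathbf{u},\mathbf{v}}$ in \eqref{eqn:hat_delta_func} decomposes into a sum of per-queue terms plus a state-independent power difference.

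First I would establish that each $\hat{V}_{n,m}(Q_{n,m})$ is non-decreasing in $Q_{n,m}$, the decoupled analog of Lemma~\ref{lemma:propertyV}. This I would prove by applying RVIA to the per-BS-content fixed point equation \eqref{eqn:perrandombellman}: under the fixed randomized base policy $\hat{\mu}$, the action distribution is state-independent, so queue $(n,m)$ is served with a constant marginal probability, and for a common arrival realization a larger $Q_{n,m}$ maps to a weakly larger next state. An induction on the RVIA iterates, initialized at a non-decreasing function, then preserves non-decreasingness in the limit. Because the per-queue chains evolve independently under $\hat{\mu}$, this is a one-dimensional specialization of the coupling argument in Appendix~B.

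Next I would prove the analog of Lemma~\ref{lemma:property_delta} for $\hat{\Delta}_{\mathbf{u},\mathbf{v}}$. The key observation is that the delay component $d(\mathbf{Q})$ of $g(\mathbf{Q},\mathbf{u})$ is action-independent and cancels, leaving
\[
\hat{\Delta}_{\mathbf{u},\mathbf{v}}(\mathbf{Q})=w\big(p(\mathbf{u})-p(\mathbf{v})\big)+\sum_{n\in\mathcal{N}}\sum_{m\in\mathcal{M}_n}\mathbb{E}\big[\hat{V}_{n,m}(Q'_{n,m}(\mathbf{u}))-\hat{V}_{n,m}(Q'_{n,m}(\mathbf{v}))\big],
\]
where $Q'_{n,m}(\mathbf{u})$ denotes the next state of queue $(n,m)$ under action $\mathbf{u}$. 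If $u_0=m$, then $\mathbf{u}$ clears queues $(0,m)$ and $(n,m)$ for all $n\in\mathcal{N}_m$, so $Q'_{0,m}(\mathbf{u})$ and $Q'_{n,m}(\mathbf{u})$ do not depend on $Q_{0,m}$ or $Q_{n,m}$, while the corresponding $\mathbf{v}$-terms are non-decreasing in these coordinates by the previous step; hence the difference is non-increasing, which is Property~1. The identical argument applied to the single queue cleared by $u_n=m$ gives Property~2.

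Finally I would derive the two structural claims following Appendix~D. For the idle region, if every coordinate satisfies $Q_{n,m}\leq\hat{\phi}_{\mathbf{0}}^+(\mathbf{Q}_{-n,-m})$, then $\mathbf{0}$ dominates every competing action by the defining property of $\hat{\phi}_{\mathbf{0}}^+$ through \eqref{eqn:hat_Phi}, so $\hat{\mu}^*(\mathbf{Q})=\mathbf{0}$. For the threshold claim, once $Q_{n,m}\geq\hat{\phi}_{\hat{\mathbf{u}}^*}^-(\mathbf{Q}_{-n,-m})$ the monotonicity of $\hat{\Delta}$ ensures $\hat{\mathbf{u}}^*$ continues to dominate all $\mathbf{v}\neq\hat{\mathbf{u}}^*$; and the monotonicity of $\hat{\phi}_{\hat{\mathbf{u}}^*}^-(\mathbf{Q}_{-0,-m})$ in $Q_{n,m}$ for $n\in\mathcal{N}_m$ follows because raising $Q_{n,m}$ only decreases $\hat{\Delta}_{\hat{\mathbf{u}}^*,\mathbf{v}}$ (Property~1 with $u_0=m$), so the MBS action becomes favorable at a smaller value of $Q_{0,m}$. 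I expect the main obstacle to be the first step: confirming that the RVIA recursion on \eqref{eqn:perrandombellman} genuinely preserves monotonicity under the expectation over $\hat{\mu}$, since everything downstream rests on it. Once the marginal per-queue dynamics are verified to be order-preserving with a state-independent service probability, the remaining steps are separable, lower-dimensional restatements of the arguments for $\mu^*$ and present no new difficulty.
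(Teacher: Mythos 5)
Your proposal is correct and follows essentially the same route as the paper's own proof: first establishing monotonicity of each per-BS-content value function $\hat{V}_{n,m}(\cdot)$ by RVIA induction on the decoupled fixed-point equation \eqref{eqn:perrandombellman}, then transferring this to monotonicity of $\hat{\Delta}_{\mathbf{u},\mathbf{v}}(\mathbf{Q})$ as in Lemma~\ref{lemma:property_delta}, and finally running the dominance/threshold arguments of Theorem~\ref{theorem:optimal}. Your explicit per-queue decomposition of $\hat{\Delta}_{\mathbf{u},\mathbf{v}}$ is just a spelled-out version of the step the paper invokes by reference, not a different argument.
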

\begin{proof}
  Please see Appendix F.
\end{proof}
\subsection{Structured Suboptimal Algorithm}
By employing the relation between $\hat{\mu}$ and $\hat{\mu}^*$ as well as the structural properties of $\hat{\mu}^*$ in Theorem~\ref{theorem:suboptimal}, we develop a low-complexity algorithm, referred to as the structured suboptimal algorithm (SSA), to obtain $\hat{\mu}^*$ in \eqref{eqn:hatmu}, as summarized in Algorithm~\ref{alg:SSA}. For SSA, it requires only one iteration to obtain $\hat{\mu}^*$. The complexity depends on the specific algorithm used for solving  \eqref{eqn:perrandombellman}.  For instance, using Gaussian elimination for solving \eqref{eqn:perrandombellman},  the complexity of SSA is $O(\sum_{n\in{\mathcal{N}}}\sum_{m\in\mathcal{N}}|\mathcal{Q}_{n,m}|^3 + |\mathbf{\mathcal{U}}||\mathbf{\mathcal{Q}}|^2)$.
\begin{algorithm}[!h]
\caption{Structured Suboptimal Algorithm}
\label{alg:SSA}
\begin{algorithmic}[1]
\State Given a randomized base unichain policy $\hat{\mu}$, compute the per-BS-content value function $\{\hat{V}_{n,m}(Q_{n,m})\}$ for all $n\in\mathcal{N}$ and $m\in\mathcal{M}_n$ by solving the linear system of equations in \eqref{eqn:perrandombellman}.\footnotemark \label{code:hatmu}
\State Obtain the deterministic policy $\hat{\mu}^*$, where for each $\mathbf{Q}\in\mathbf{\mathcal{Q}}$, $\hat{\mu}^*(\mathbf{Q})$ is such that:\label{code:ssa}
\Statex\textbf{if}~$\exists n\in\mathcal{N}$ and $\mathbf{Q}'\in\mathbf{\mathcal{Q}}$ such that $\hat{\mu}^*(\mathbf{Q}')=\mathbf{u}$, $u_n\in\mathcal{M}_n$, $Q_{n,m}'\leq Q_{n,m}$, and $Q_{i,j}'=Q_{i,j}$ for all $(i,j)\neq (n,m)$, where $m=u_n$~\textbf{then}
$$\hat{\mu}^*(\mathbf{Q})=\mathbf{u}.$$
\Statex\textbf{else}
\Statex \hspace{20mm}Compute  $\hat{\mu}^*(\mathbf{Q})$ using \eqref{eqn:hatmu}.
\Statex\textbf{endif}
\end{algorithmic}
\end{algorithm}

Now, we compare the computational complexity of SSA and SPIA.
SSA is similar to one iteration of SPIA. As discussed in Section~\ref{sec:hatmu}, in Step~\ref{code:hatmu} of SSA, the number of value functions required to be computed is much smaller than that in each iteration of the policy evaluation step (Step~\ref{code:spia_imp} in Algorithm~\ref{alg:SPIA}) of SPIA. In Step~\ref{code:ssa} of SSA, the number of optimizations required to solve is comparable to that in each iteration of the structured policy improvement step of SPIA.
Therefore, SSA has a significantly lower computational complexity than SPIA.
\footnotetext{The solution to \eqref{eqn:perrandombellman} can be obtained directly using Gaussian elimination or iteratively using the relative value iteration method \cite{bertsekas}.}

\section{Numerical results and discussions}\label{sec:simulation}
In this section, we evaluate the performance of the proposed optimal and suboptimal solutions through numerical examples.
In the simulations, we assume that in each slot, each user requests one content (independently), which is content $m$ with probability $P_m$. We assume that $\{P_m\}$ follows a (normalized) Zipf distribution with parameter $\alpha$ \cite{zipf}.
For simplicity, we assume that each content is of the same size and each SBS has the same cache size. These assumptions are commonly used in the literature on wireless caching in HetNets, e.g., \cite{femto,7218465,6665021,Bastug2015,7194828}. As the coverage areas of the SBSs are assumed to be disjoint, we adopt a commonly used content placement strategy, i.e., each SBS stores the most popular contents\cite{femto,Bastug2015,7194828}.
\begin{figure*}[!t]
\begin{minipage}[t]{.32\linewidth}
\centering
\includegraphics[scale=0.33]{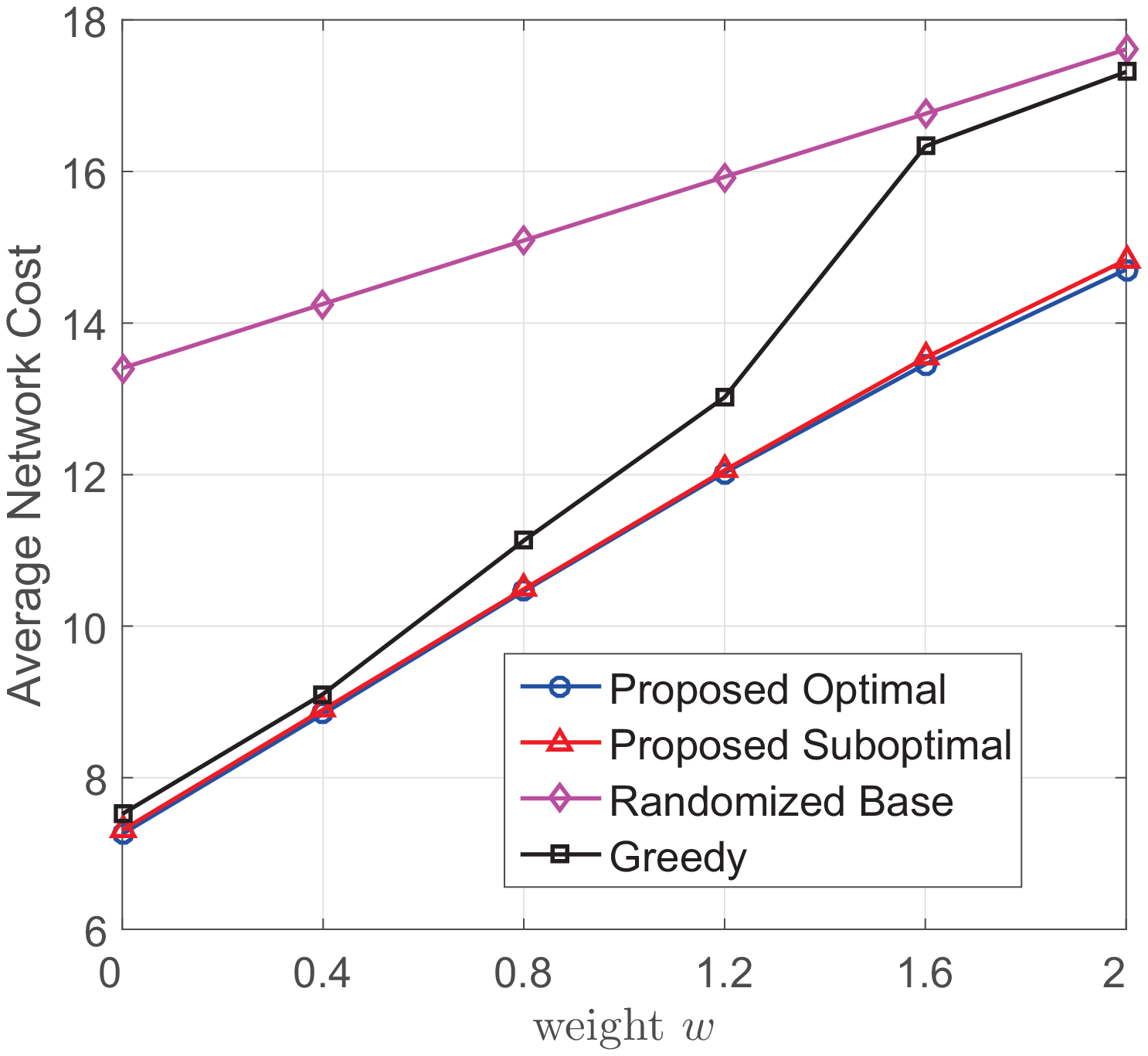}
\subcaption{Average network cost.}\label{fig:optimal_weight_cost}
\end{minipage}
\begin{minipage}[t]{.32\linewidth}
\centering
\includegraphics[scale=0.33]{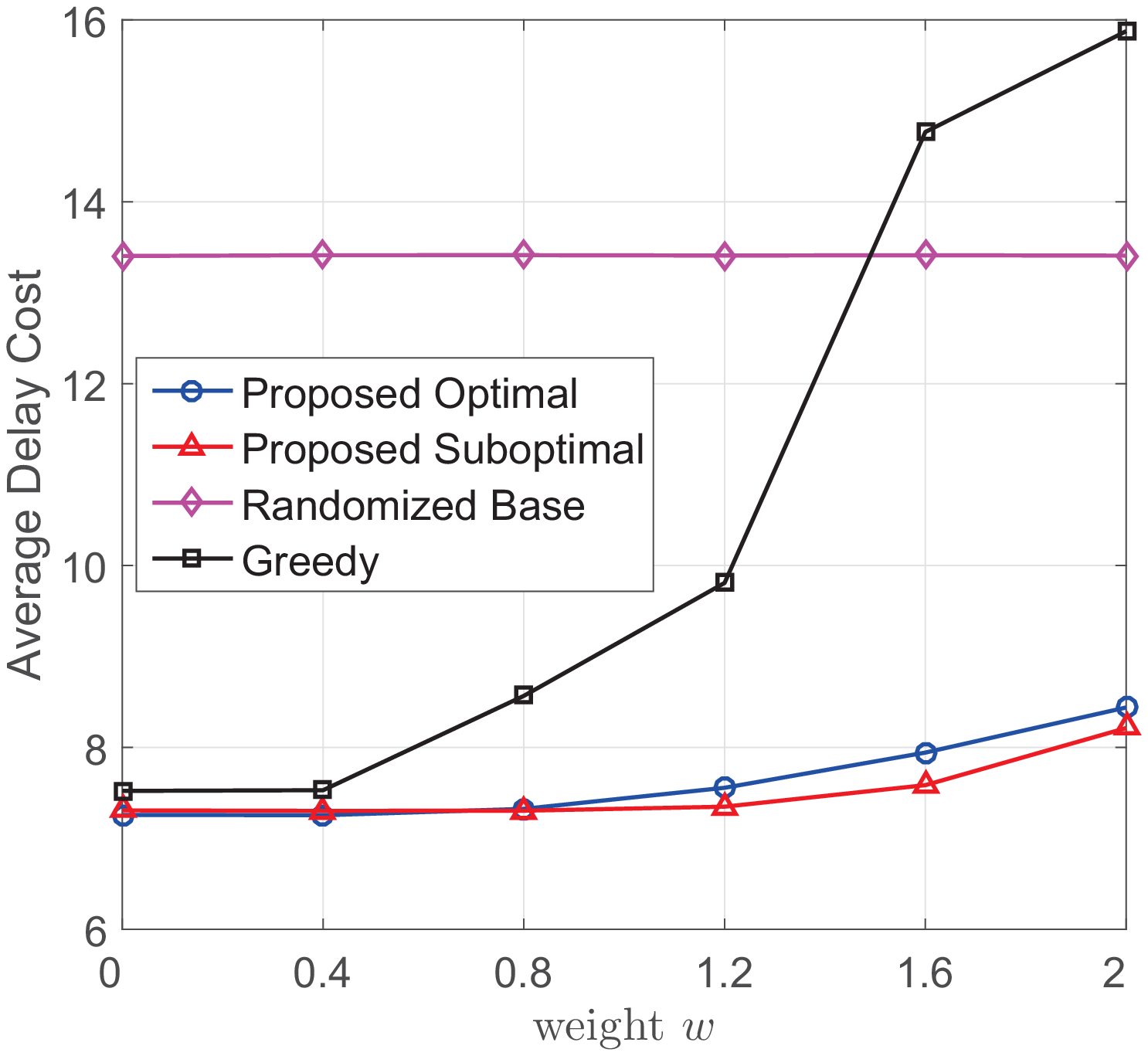}
 \subcaption{Average delay cost.}\label{fig:optimal_weight_delay}
\end{minipage}
\begin{minipage}[t]{.32\linewidth}
\centering
\includegraphics[scale=0.33]{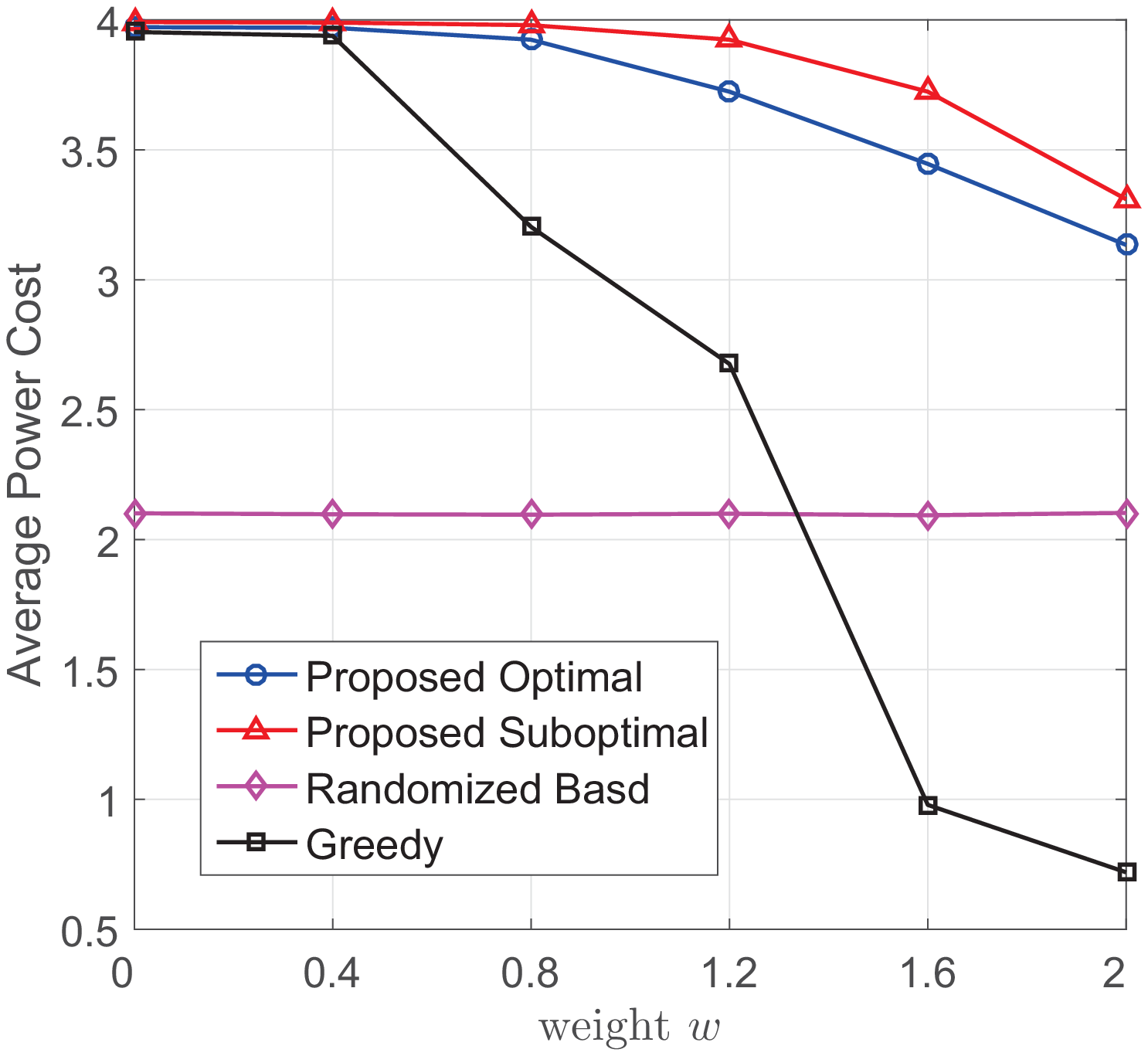}
 \subcaption{Average power cost.}\label{fig:optimal_weight_power}
\end{minipage}
\caption{Average network, delay, and power costs versus weight $w$. }\label{fig:optimal_weight}
\end{figure*}

First, we compare the average costs of the proposed optimal and suboptimal solutions with two baseline policies, i.e., a randomized base policy in Definition~\ref{def:randomized} and a greedy policy.
In particular, for the randomized base policy, in each slot, we randomly select the MBS or all the SBSs to operate, with probability  $P_{MBS}$ and  $1-P_{MBS}$, respectively.
Then, if BS $n$ is allowed to operate, it keeps idle with probability $P^n_{idle}$ or schedules one cached content $m\in\mathcal{M}_n$ for multicasting with probability $P^n_m=(1-P^n_{idle})P_m/\sum_{m\in\mathcal{M}_n}P_m$. Note that, the suboptimal policy is obtained based on this randomized base policy, as illustrated in Section~\ref{sec:suboptimal}.
In the following simulations, we set $P_{MBS}=0.5$ and $P_{idle}^n=0.3$ for all $n\in\mathcal{N}$.
To illustrate the greedy policy, we first introduce its cost function $C(\mathbf{Q},\mathbf{u})\triangleq\sum_{n\in\mathcal{N}}c_n(\mathbf{Q},u_n)$, where $\mathbf{Q}\in\mathbf{\mathcal{Q}}$, $\mathbf{u}\in\mathbf{\mathcal{U}}$, and
\begin{align*}
&c_0(\mathbf{Q},u_0)\nonumber\\
&\triangleq\begin{cases}w p(0,m) - Q_{0,m} - \sum_{n\in\mathcal{N}_m}Q_{n,m} & \text{if}~u_0=m\in\mathcal{M}_0 \\
            0  &\text{if}~u_0=0
  \end{cases},\\
&c_n(\mathbf{Q},u_n)\nonumber\\
&\triangleq\begin{cases}  w p(n,m) - Q_{n,m} & \text{if}~u_n=m\in\mathcal{M}_n \\
            0  &\text{if}~u_n=0
  \end{cases}, ~\forall n\in\mathcal{N}^+.
\end{align*}

For the greedy policy, in each slot, we choose the multicast scheduling action that minimizes the cost function, i.e.,
$\mathbf{u}(t)=\arg\min_{\mathbf{u}(t)\in\mathbf{\mathcal{U}}}C\left(\mathbf{Q}(t),\mathbf{u}(t)\right)$.
This greedy policy can also be treated as an approximate solution to Problem~\ref{problem:originalproblem} through approximating $V(\mathbf{Q})$ with $\mathbf{Q}$ in \eqref{eqn:mu}.
Note that this policy determines the scheduling action myopically, without accurately considering the impact of the action on the future costs.
This type of policy is a commonly used baseline policy in the literature (see e.g., \cite{powell2007approximate} and references therein).

Fig.~\ref{fig:optimal_weight} illustrates the average network cost, delay cost, and power cost versus the weight of the power cost (i.e., $w$) for a network with 1 MBS, 1 SBS, 4 users, and 3 contents ($\mathcal{M}=\{1,2,3\}$). We set $\alpha = 0.75$, $|\mathcal{K}_0|=|\mathcal{K}_1|=2$, $\mathcal{M}_0 = \{1,2,3\}$, $\mathcal{M}_1=\{1\}$, $p(0,m)=4$ for all $m\in\mathcal{M}_0$, and  $p(1,m)=2$ for all $m\in\mathcal{M}_1$.
It can be observed in Fig~\ref{fig:optimal_weight_cost} that the average network costs of the proposed optimal and suboptimal policies are very close to each other, and are lower than those of the randomized base policy and the greedy policy. The reason is that the proposed two policies can make foresighted decisions by better utilizing system state information and considering the immediate cost as well as the future costs.
From Fig~\ref{fig:optimal_weight_delay} and Fig~\ref{fig:optimal_weight_power}, we can see that, as $w$ increases, the average power costs of the proposed two policies  decrease, at the expense of the average delay costs. This reveals the tradeoff between the average delay cost and the average power cost.
\begin{figure*}[!t]
\begin{minipage}[t]{.32\linewidth}
\centering
\includegraphics[scale=0.36]{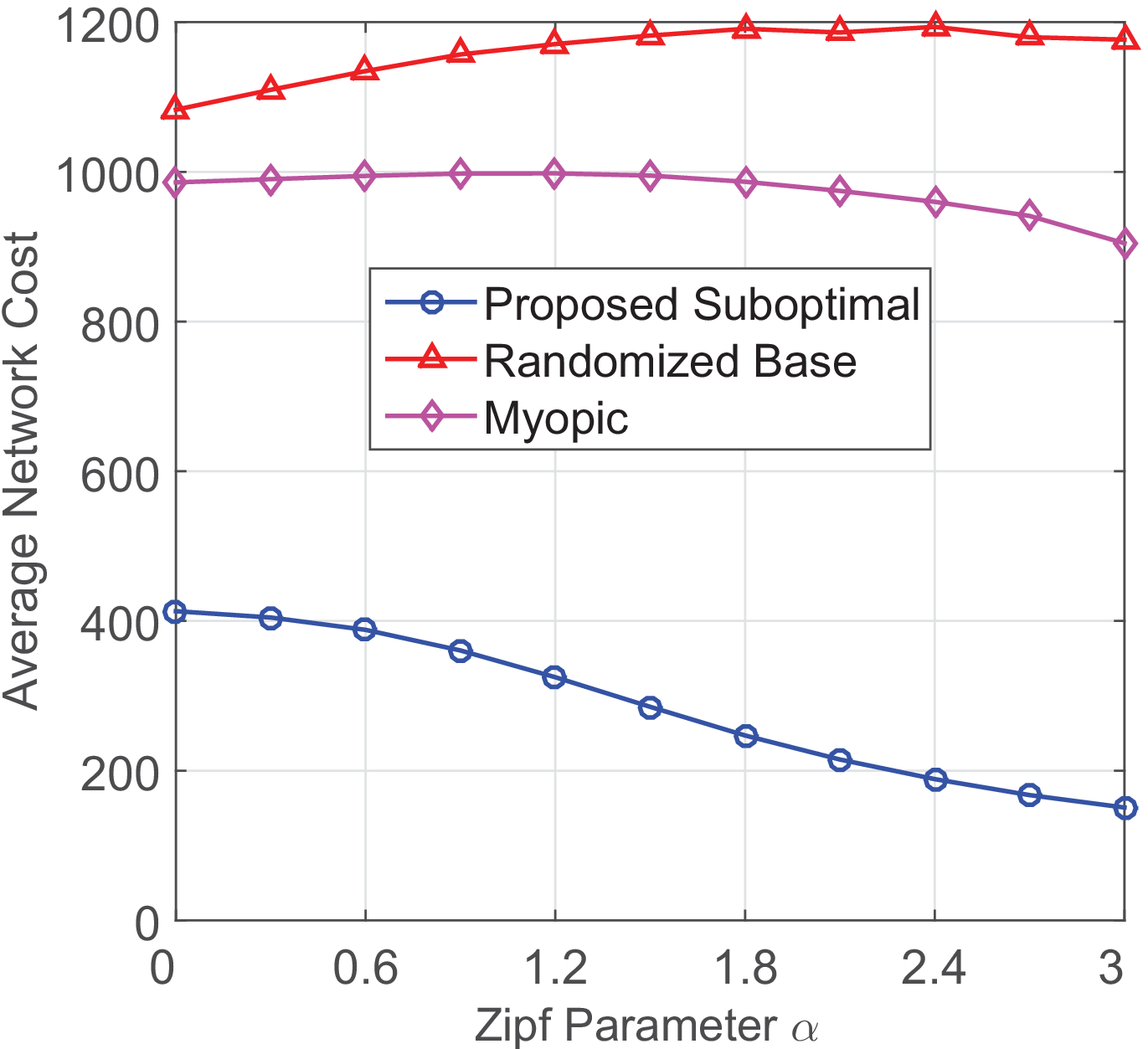}
\subcaption{Average network cost.}\label{fig:suboptimal_Zipf_cost}
\end{minipage}
\begin{minipage}[t]{.33\linewidth}
\centering
\includegraphics[scale=0.36]{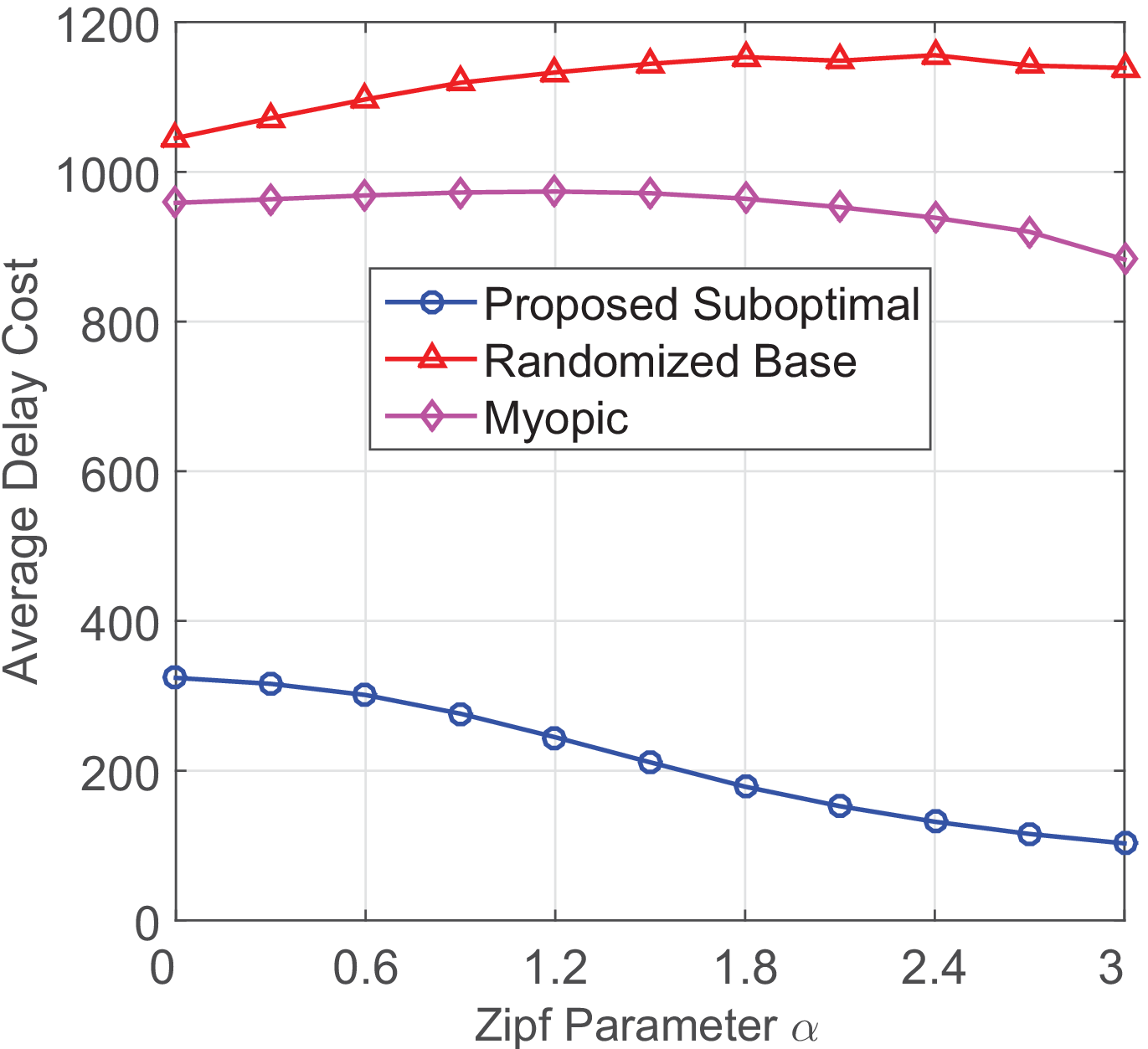}
 \subcaption{Average delay cost.}\label{fig:suboptimal_Zipf_delay}
\end{minipage}
\begin{minipage}[t]{.32\linewidth}
\centering
\includegraphics[scale=0.36]{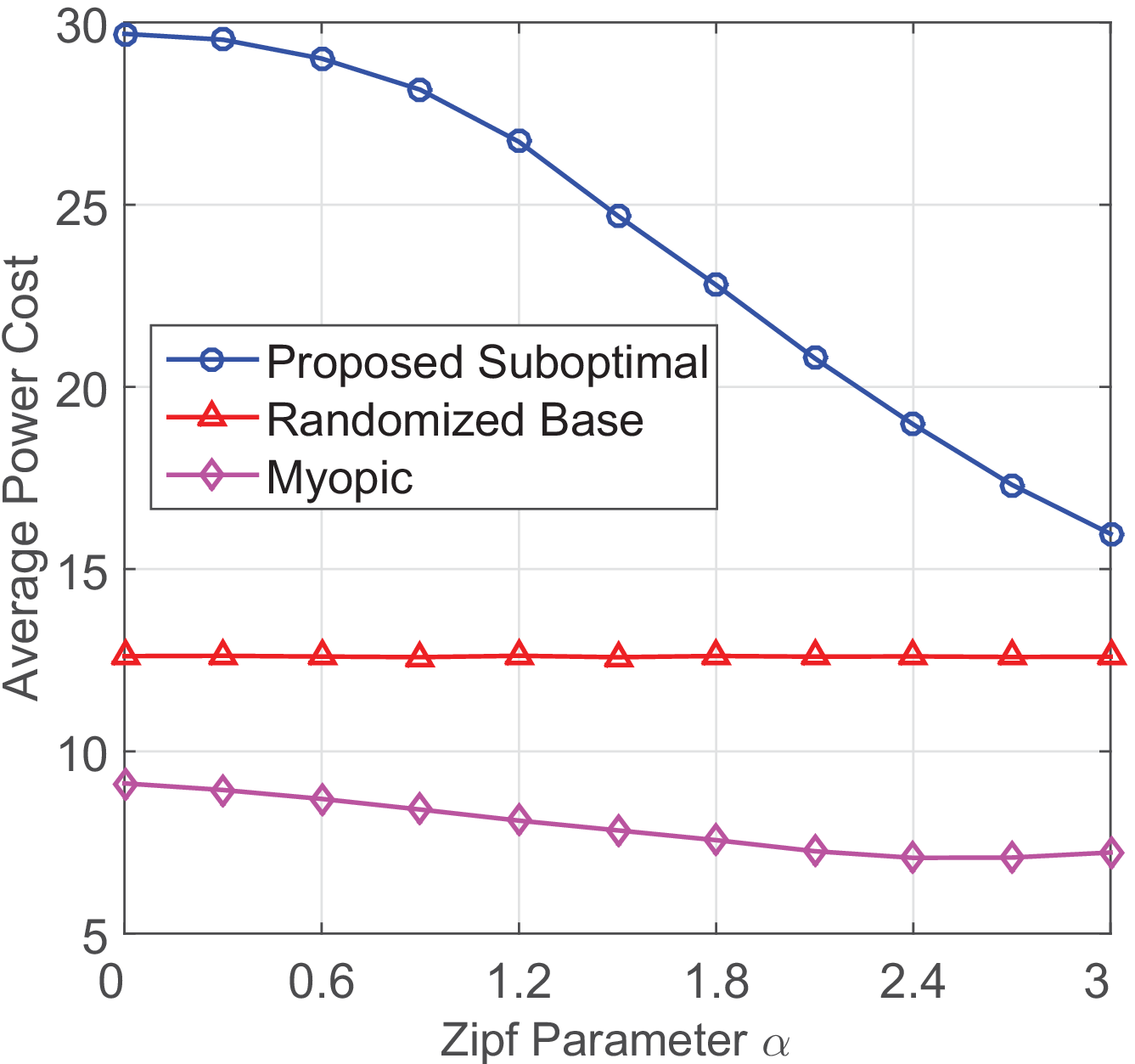}
 \subcaption{Average power cost.}\label{fig:suboptimal_Zipf_power}
\end{minipage}
\caption{Average network, delay, and power costs versus Zipf parameter $\alpha$. $K=30$ and $|\mathcal{M}_1|=|\mathcal{M}_2|=10$.}\label{fig:suboptimal_Zipf}
\end{figure*}

In Fig.~\ref{fig:suboptimal_Zipf}, Fig.~\ref{fig:suboptimal_CacheSize}, and Fig.~\ref{fig:suboptimal_NumUser}, we investigate the impacts of the Zipf parameter, the cache size of SBSs, and the number of users  on the performance of the proposed suboptimal policy and the two baseline policies, respectively.  We consider a network with 1 MBS, 2 SBSs and 20 contents. We set $w=3$, $\mathcal{M}_0 = \mathcal{M}$,  $|\mathcal{K}_0|:|\mathcal{K}_1|:|\mathcal{K}_2|=1:2:2$, $p(0,m)=30$ for all $m\in\mathcal{M}_0$, and  $p(n,m)=3$ for $n=1,2$ and $m\in\mathcal{M}_n$.
Here, $|\mathcal{K}_0|:|\mathcal{K}_1|:|\mathcal{K}_2|$ indicates the ratio of the numbers of
the users in $\mathcal{K}_0$, $\mathcal{K}_1$, and $\mathcal{K}_2$.
From  Fig.~\ref{fig:suboptimal_Zipf} -~\ref{fig:suboptimal_NumUser}, it can be seen that the proposed suboptimal policy outperforms the two baseline policies in terms of the average network cost.

Fig.~\ref{fig:suboptimal_Zipf} illustrates the average costs versus the Zipf parameter $\alpha$ for the aforementioned three policies.
The $\alpha$ parameter determines the ``skewness'' of the content popularity distribution, i.e., a large $\alpha$ indicates that a small number of contents account for the majority of content requests.
We observe that for the proposed suboptimal policy, as $\alpha$ increases, the average network cost decreases and better delay performance can be achieved with less transmission power.
This reveals that the proposed suboptimal policy can utilize caching more effectively as the content popularity distribution gets steeper.
\begin{figure*}[!t]
\begin{minipage}[t]{.32\linewidth}
\centering
\includegraphics[scale=0.36]{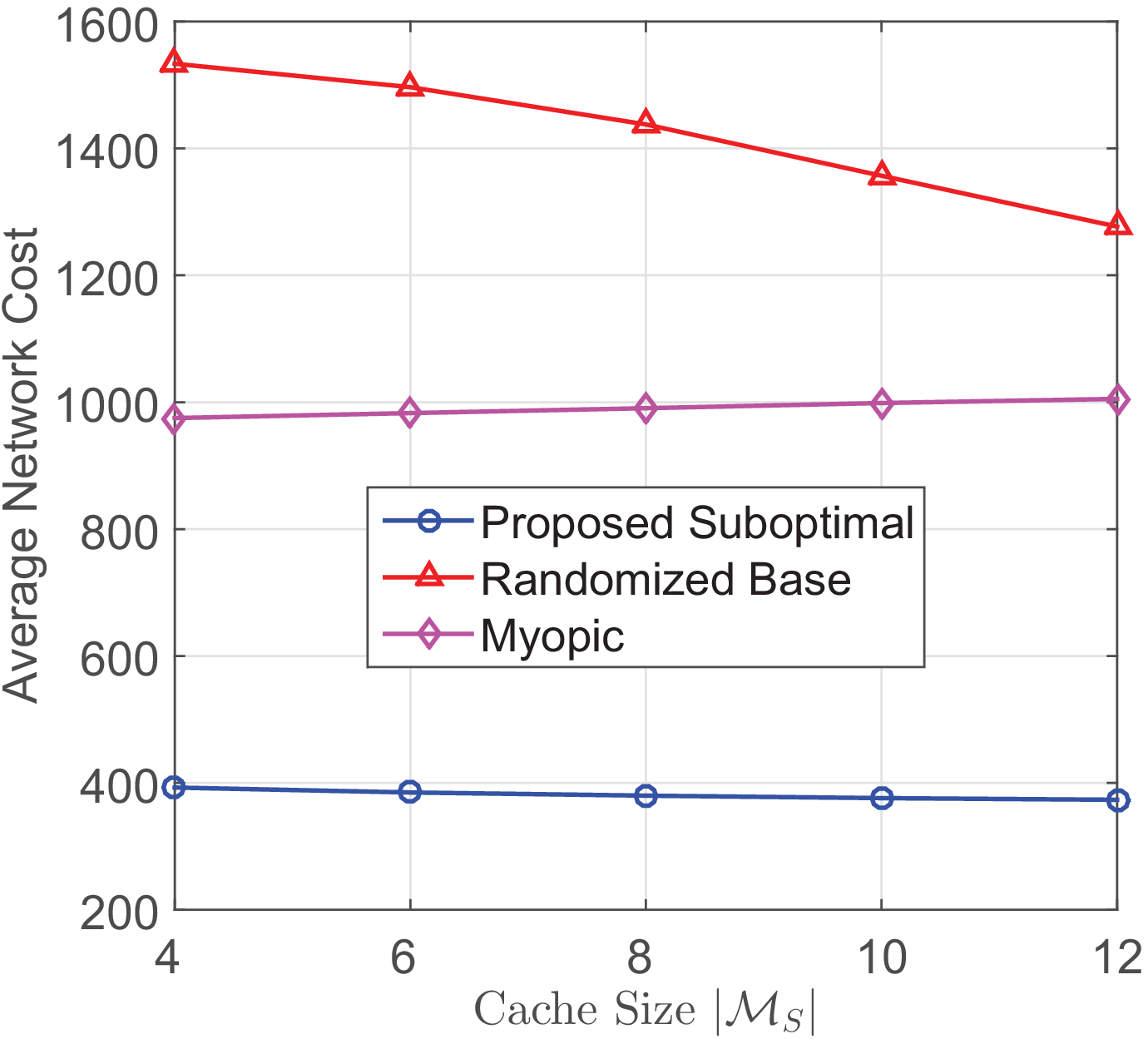}
\subcaption{Average network cost.}\label{fig:suboptimal_CacheSize_cost}
\end{minipage}
\begin{minipage}[t]{.33\linewidth}
\centering
\includegraphics[scale=0.36]{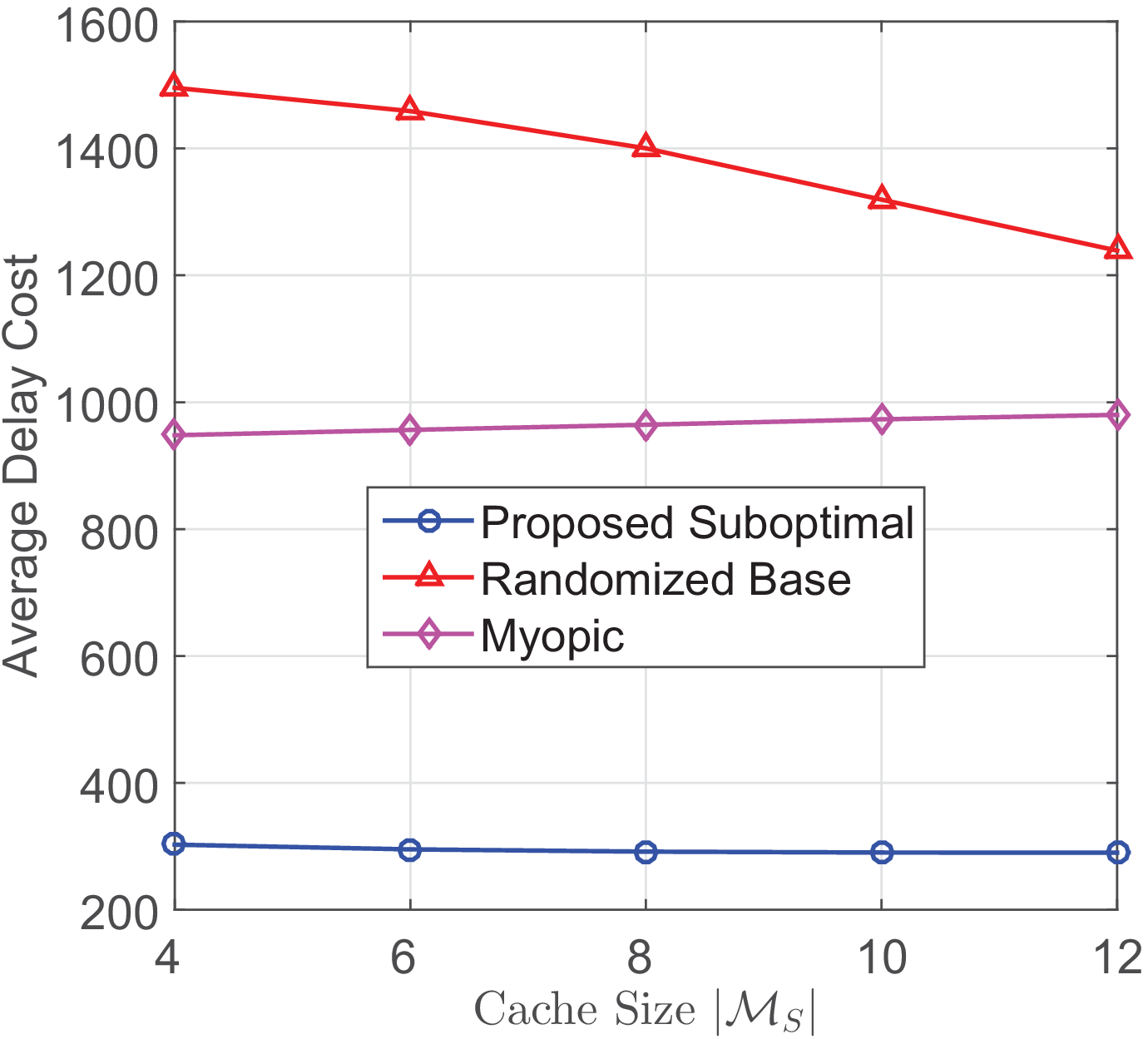}
 \subcaption{Average delay cost.}\label{fig:suboptimal_CacheSize_delay}
\end{minipage}
\begin{minipage}[t]{.32\linewidth}
\centering
\includegraphics[scale=0.36]{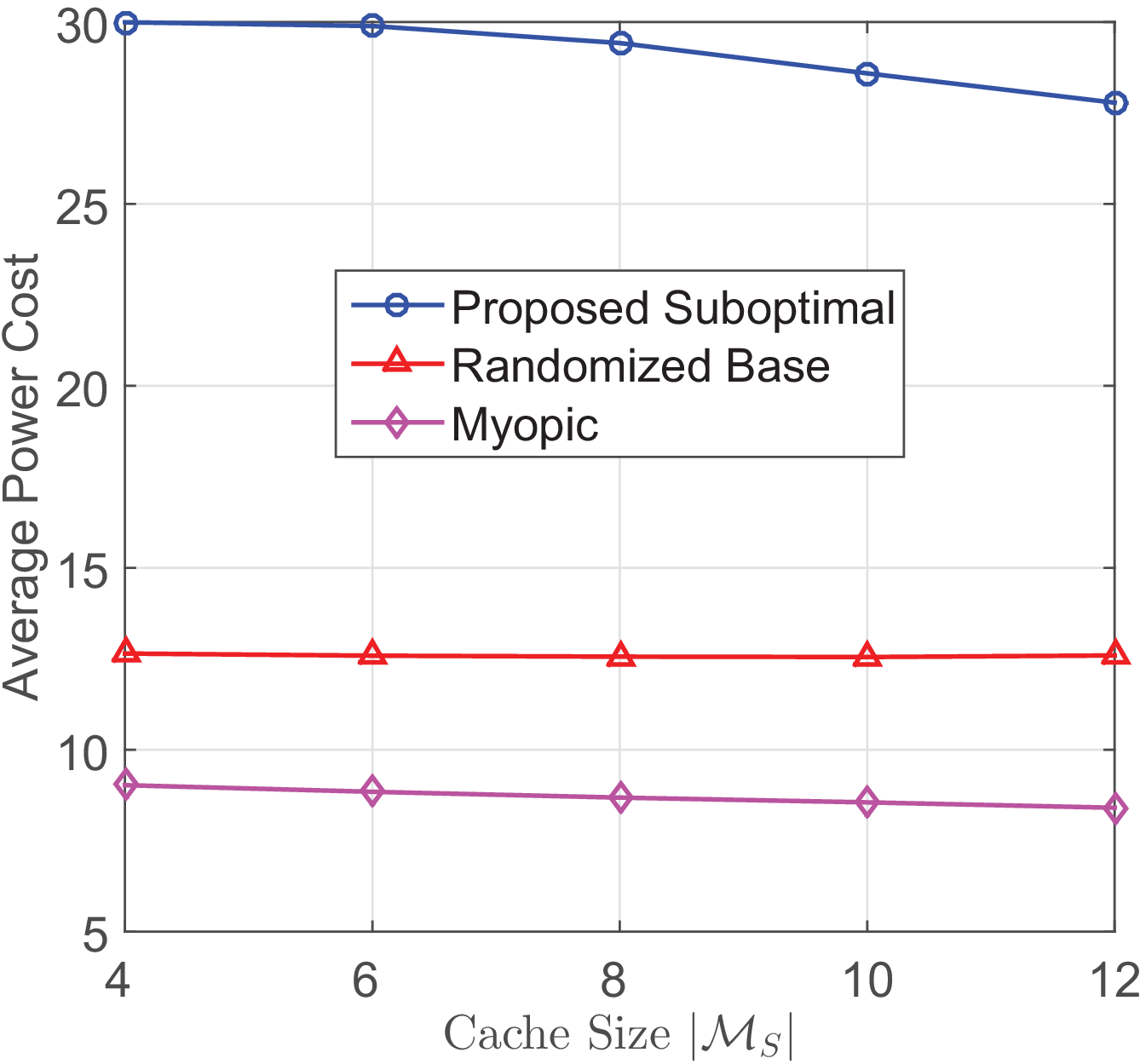}
 \subcaption{Average power cost.}\label{fig:suboptimal_CacheSize_power}
\end{minipage}
\caption{Average network, delay, and power costs versus cache size of SBSs $|\mathcal{M}_S|$. $|\mathcal{M}_S|=|\mathcal{M}_1|=|\mathcal{M}_2|$, $K=30$, and $\alpha=0.75$.}\label{fig:suboptimal_CacheSize}
\end{figure*}
\begin{figure*}[!t]
\begin{minipage}[t]{0.245\linewidth}
\centering
\includegraphics[scale=0.297]{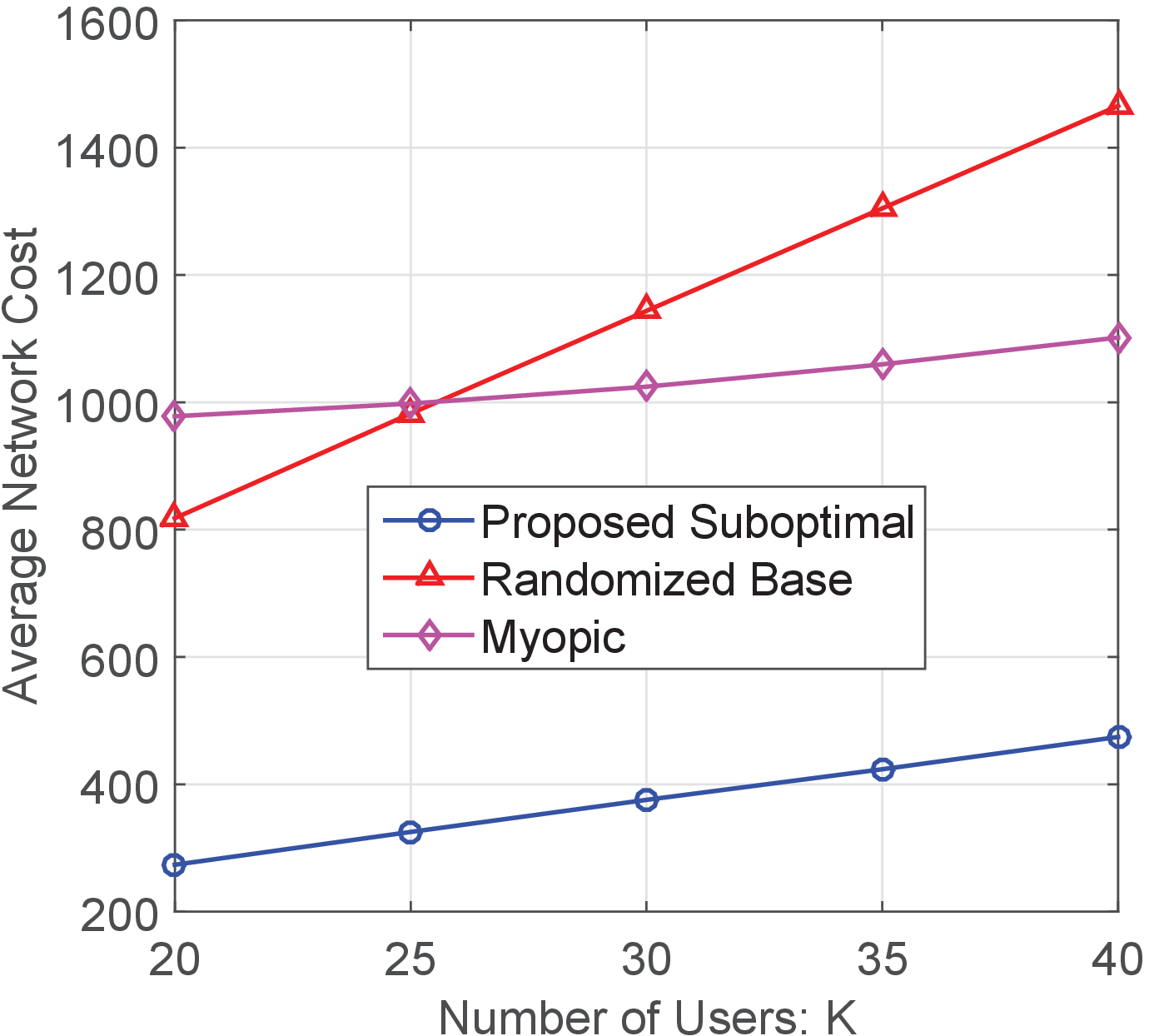}
\subcaption{Average network cost.}\label{fig:suboptimal_NumUser_cost}
\end{minipage}
\begin{minipage}[t]{.245\linewidth}
\centering
\includegraphics[scale=0.297]{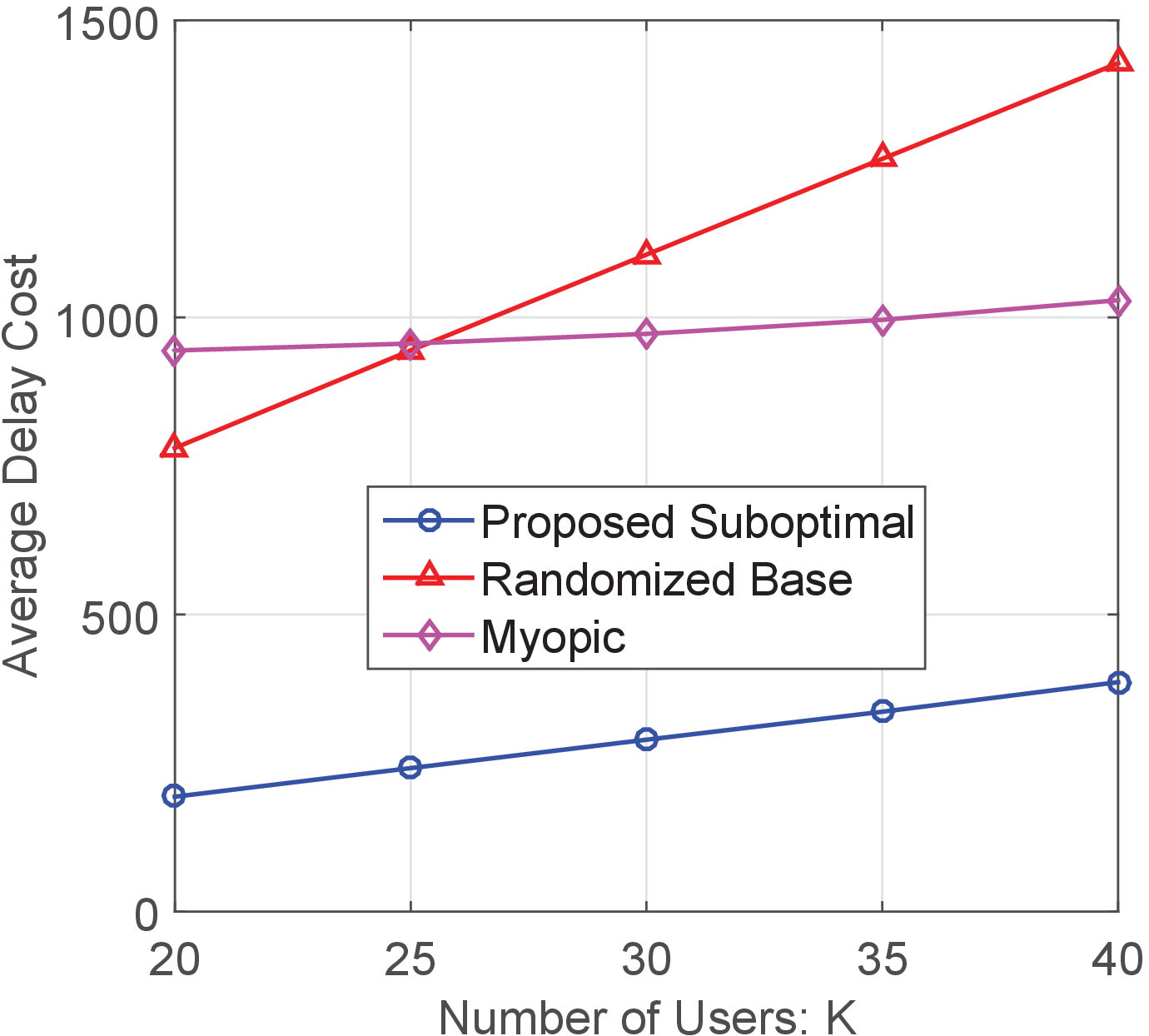}
 \subcaption{Average delay cost.}\label{fig:suboptimal_NumUser_delay}
\end{minipage}
\begin{minipage}[t]{.245\linewidth}
\centering
\includegraphics[scale=0.297]{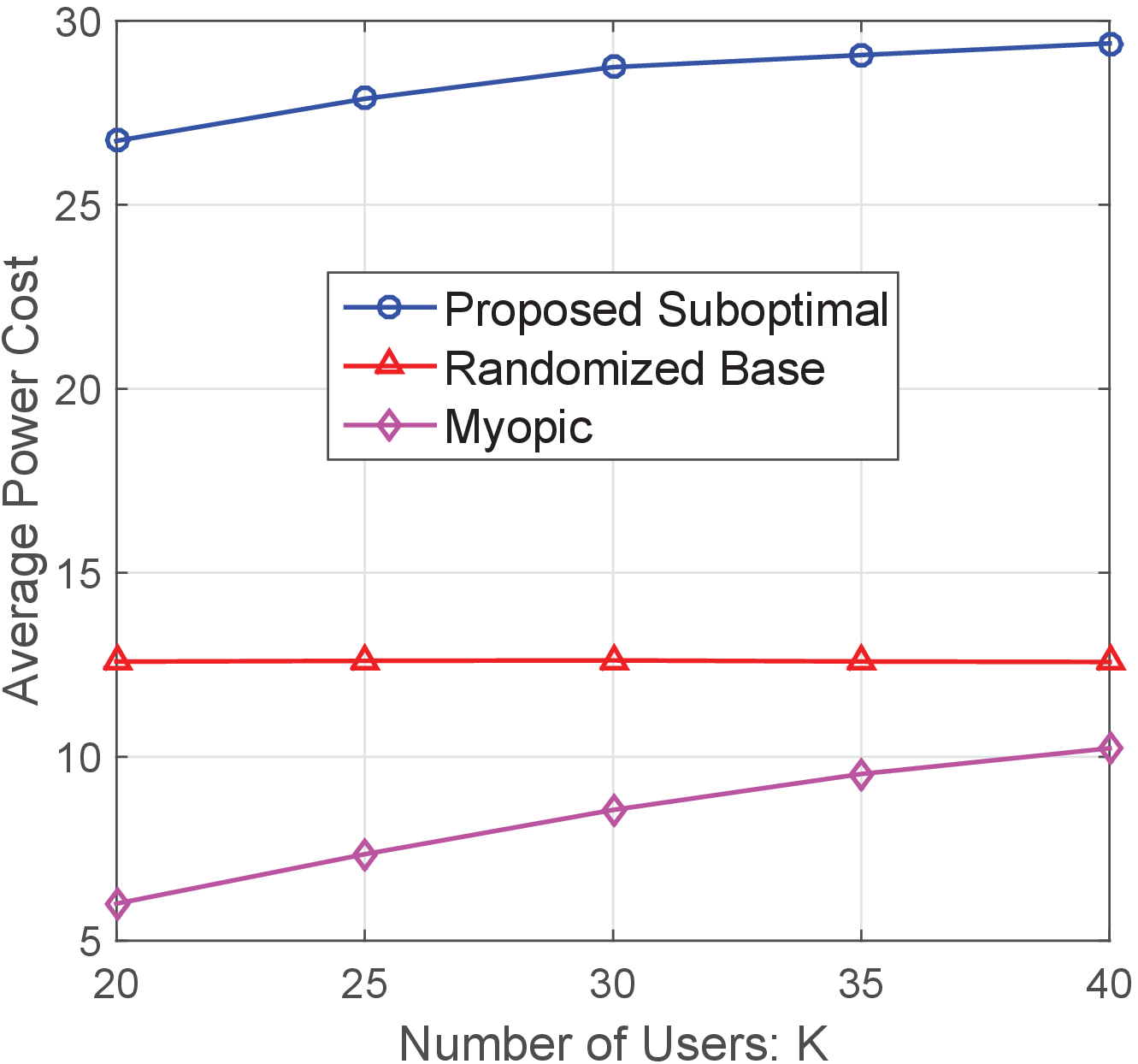}
 \subcaption{Average power cost.}\label{fig:suboptimal_NumUser_power}
\end{minipage}
\begin{minipage}[t]{.245\linewidth}
\centering
\includegraphics[scale=0.297]{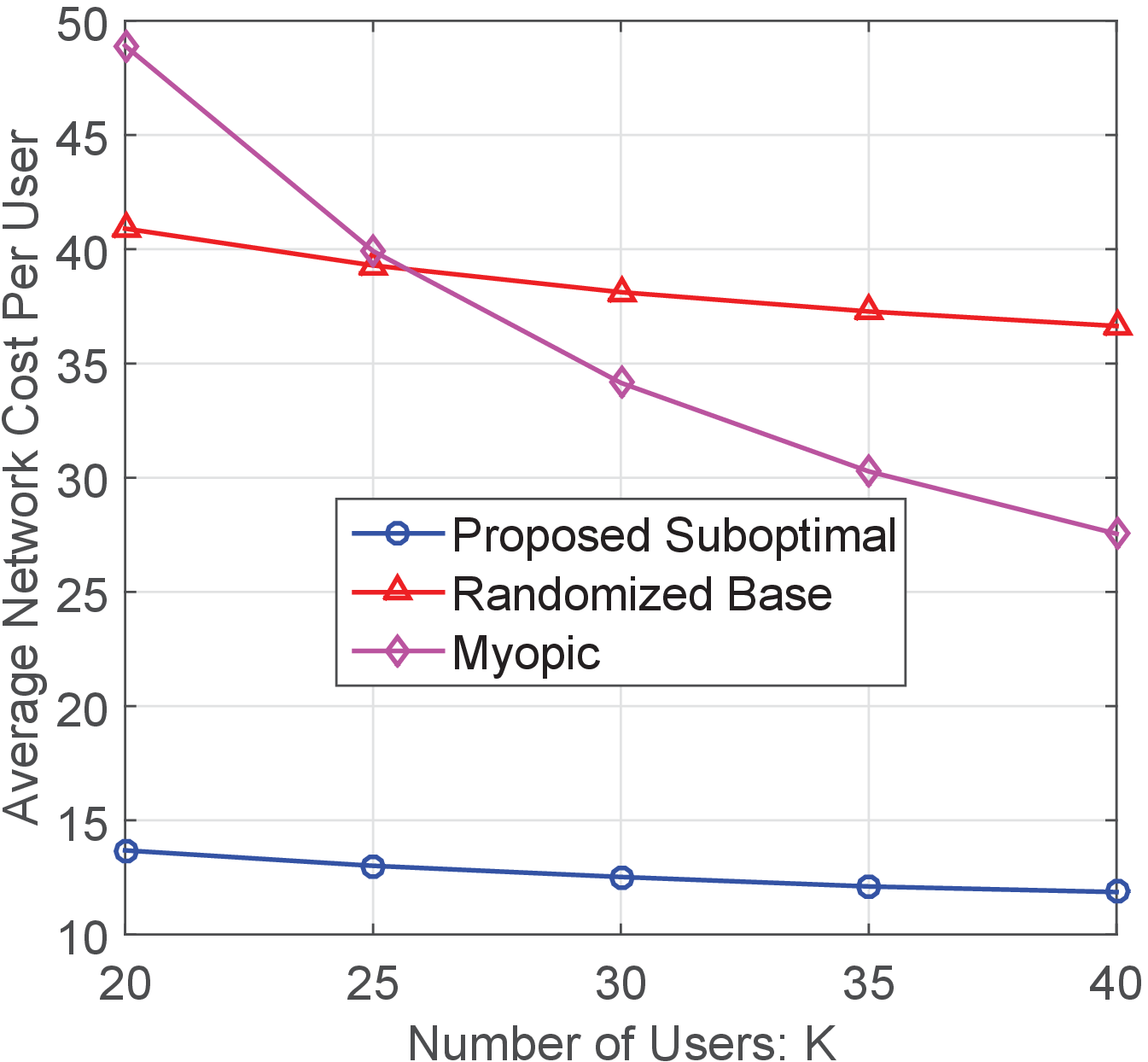}
 \subcaption{Average network cost per user.}\label{fig:suboptimal_NumUser_cost_peruser}
\end{minipage}
\caption{Average network cost, delay cost, power cost, and network cost per user versus number of users $K$. $\alpha=0.75$ and $|\mathcal{M}_1|=|\mathcal{M}_2|=10$.}\label{fig:suboptimal_NumUser}
\end{figure*}

Fig.~\ref{fig:suboptimal_CacheSize} illustrates the average costs versus the cache size of SBSs $|\mathcal{M}_S|$.
For the proposed suboptimal policy, we observe that with the increase of $|\mathcal{M}_S|$, the average network cost decreases (slightly) and the average power cost decreases without sacrificing the delay performance.
In addition, for the greedy policy, when $|\mathcal{M}_S|$ increases, the average power cost decreases, however, at the expense of the increase of the average delay and network costs.
The reason is that, when more contents are cached in SBSs, the transmission opportunities of SBSs increase and the proposed suboptimal policy can utilize these opportunities more properly than the greedy policy.
We also notice that for the greedy policy, when $|\mathcal{M}_S|$ increases, the average network and delay costs increase and the average power cost decreases. The intuitive reasons are as follows. First, the cost function  $C(\mathbf{Q},\mathbf{u})$ is not equivalent to the original objective function in Problem~\ref{problem:originalproblem}. Second, under the simulation settings, when $|\mathcal{M}_S|$ increases, i.e., the transmission opportunities of the SBSs increase, the greedy policy would be more likely to schedule the SBSs, which leads to certain reduction of the average power cost but much larger increase of the average delay cost, i.e., the increase of the average network cost. Therefore, this reveals that the greedy policy could not properly utilize the transmission opportunities offered by the increase of cache sizes.

Fig.~\ref{fig:suboptimal_NumUser} illustrates the average costs versus the number of users $K$.
From Fig.~\ref{fig:suboptimal_NumUser_cost}-\ref{fig:suboptimal_NumUser_power}, it can be seen that with the increase of $K$, the average network, delay, and power costs of the proposed suboptimal policy increase.
The reason is that, when the average request arrival rate increases (as $K$ increases), there will be more requests waiting for service and the BSs are more willing to operate instead of keeping idle.
From Fig.~\ref{fig:suboptimal_NumUser_cost_peruser}, we can see that with the increase of $K$, the average network costs per user of all policies decrease. This reflects the benefit of multicast.
\begin{table*}[!htbp]
\caption{Average computation time (sec) for different algorithms.}
\centering
\begin{tabular}{|c|c|c|c|c|}
\hline
\multicolumn{2}{|c|}{Algorithms}   & $|\mathcal{M}_0|=2, |\mathcal{M}_1|=1$  & $|\mathcal{M}_0|=3, |\mathcal{M}_1|=1$ & $|\mathcal{M}_0|=3,|\mathcal{M}_1|=2$ \\ \hline
\multirow{3}{*}{\begin{tabular}{c}PIA:\\ \cite[Chapter 8.6]{puterman}\end{tabular}} & evaluation  &  3.283 & 28.955  &  330.88 \\ \cline{2-5}
                     & improvement &  2.103 & 19.136  &  187.73 \\ \cline{2-5}
                     & total       &  5.386 & 48.091  &  518.61 \\ \hline
\multirow{3}{*}{\begin{tabular}{c}SPIA:\\ Algorithm~\ref{alg:SPIA}\end{tabular}} & evaluation  & 3.288  & 28.436  & 318.63  \\ \cline{2-5}
                     & improvement & 1.133  & 8.887  &  88.360 \\ \cline{2-5}
                     & total       & 4.421  & 37.323  &  406.99 \\ \hline
\multirow{3}{*}{\begin{tabular}{c}SSA:\\ Algorithm~\ref{alg:SSA}\end{tabular}} & step~\ref{code:hatmu}  &   0.008 & 0.015  &  0.054 \\ \cline{2-5}
                     & step~\ref{code:ssa} & 0.223  & 3.859  &  36.240 \\ \cline{2-5}
                     & total       & 0.231  & 3.874  &  36.294 \\ \hline
\end{tabular}
\label{table:complexity}

\end{table*}

Next, we compare the computational complexity of the standard optimal algorithm (PIA), the proposed structure-aware optimal algorithm (SPIA) and the proposed low-complexity suboptimal algorithm (SSA).
Table~\ref{table:complexity} illustrates the computational time comparison for a network with 1 MBS, 1 SBS and 4 users.
The simulation is carried out on a Windows x86 machine with a dual-core 2.93 GHz Intel processor and 4GB RAM using Python 3.
We set $|\mathcal{K}_0|=|\mathcal{K}_1|=2$, $w=1$, $\alpha=0.75$, $N_{n,m}=4$ for all $n\in\mathcal{N}$ and $m\in\mathcal{M}_n$.
It can be seen that SPIA has much lower computational complexity than PIA, with a reduction of about 50\% in the policy improvement step and a reduction of about 20\% in the total algorithm.
Moreover, we observe that the computation times of PIA and SPIA grow rapidly with the cardinality of the system state space ($|\mathbf{\mathbf{Q}}|=\prod_{n\in\mathcal{N}}\prod_{m\in\mathcal{M}_n}|\mathcal{Q}_{n,m}| = 5^{|\mathcal{M}_0|+|\mathcal{M}_1|}$),
while the computation time of SSA grows almost linearly with $|\mathbf{\mathbf{Q}}|$.
The computational savings of SSA compared with PIA and SPIA are significant.
These verify the discussions in Section \ref{sec:SPIA} and Section \ref{sec:suboptimal}.

\section{Conclusion}
In this paper, we study optimal content delivery strategy in a  cache-enabled HetNet by taking into account the inherent multicast capability of wireless medium.
We establish a content-centric request queue model and then formulate a stochastic content multicast scheduling problem to jointly minimize the average network delay and power costs under a multiple access constraint.
This stochastic optimization problem is an infinite horizon average cost MDP.
We show that the optimal multicast scheduling policy, which is adaptive to the request
queue state, is of threshold type.
Then, we propose a structure-aware optimal algorithm to obtain the optimal policy by exploiting its structural properties.
To further reduce the complexity, we propose a low-complexity suboptimal policy, which has similar
structural properties to the optimal policy, and develop a low-complexity algorithm to obtain this policy.

This work opens up several directions for future research. First, this work focuses on multicast scheduling for given cache placement. It would be interesting to investigate the joint optimal design of content delivery and cache placement/replacement. Second, in this work, IRM is used to model the content request arrivals. It is of particular importance to consider more practical request traffic models. Finally, in this work, we assume that the coverage areas of the SBSs are disjoint and the SBSs can operate concurrently without mutual interference.  It is also of interest to take into account the interference management in cache-enabled HetNets under  more general topology and interference models.

\appendices
\section*{Appendix A: Proof of Lemma~\ref{lemma:bellman}}\label{app:bellman}
By Proposition 4.2.5 in \cite{bertsekas}, the Weak Accessibly (WA) condition holds for unichain policies. Thus, by Proposition 4.2.3 in \cite{bertsekas}, the optimal average network cost of the MDP in Problem~\ref{problem:originalproblem} is the same for all initial states. In addition, by Proposition 4.2.1 in \cite{bertsekas}, we know that the solution $(\theta,\{V(\mathbf{Q})\})$ to the following Bellman equation exists.
\begin{align}\label{eqn:original_bellman}
  &\theta+V(\mathbf{Q})=\min_{\mathbf{u}\in\mathbf{\mathcal{U}}}\Bigg\{g(\mathbf{Q},\mathbf{u})+\sum_{\mathbf{Q}'\in\mathbf{\mathcal{Q}}}\Pr[\mathbf{Q}'|\mathbf{Q},\mathbf{u}]V(\mathbf{Q}')\Bigg\},\nonumber\\
&  \hspace{65mm}\forall \mathbf{Q}\in\mathbf{\mathcal{Q}}.
\end{align}
By substituting the transition probability $\Pr[\mathbf{Q}'|\mathbf{Q},\mathbf{u}]$ given in \eqref{eqn:tranb} into \eqref{eqn:original_bellman}, we have \eqref{eqn:bellman},  which completes the proof.\QEDA

\section*{Appendix B: Proof of Lemma~\ref{lemma:propertyV}}\label{app:propertyV}
We prove Lemma~\ref{lemma:propertyV} using RVIA and mathematical induction.

First, we introduce RVIA\cite[Chapter 4.3]{bertsekas}. For each state $\mathbf{Q}\in\mathbf{\mathcal{Q}}$, let $V_l(\mathbf{Q})$ be the value function in the $l$-th iteration, where $l=0,1,\cdots$.
Define the state-action cost function in the $l$-th iteration as
\begin{align}
&J_{l+1}(\mathbf{Q},\mathbf{u})\triangleq g(\mathbf{Q},\mathbf{u})+\mathbb{E}[V_l(\mathbf{Q}')],\label{eqn:jl}
\end{align}
where $g(\mathbf{Q},\mathbf{u})$ and $\mathbf{Q}'$ are given in \eqref{eqn:system_cost} and Lemma~\ref{lemma:bellman}, respectively.
Note that $J_{l+1}(\mathbf{Q},\mathbf{u})$ is related to the R.H.S. of the Bellman equation in \eqref{eqn:bellman}.
For each $\mathbf{Q}$, RVIA calculates $V_{l+1}(\mathbf{Q})$ according to
\begin{equation}\label{eqn:RVIA}
  V_{l+1}(\mathbf{Q})=\min_{\mathbf{u}\in\mathbf{\mathcal{U}}} J_{l+1}(\mathbf{Q},\mathbf{u})-\min_{\mathbf{u}\in\mathbf{\mathcal{U}}} J_{l+1}(\mathbf{Q}^\dag,\mathbf{u}),~\forall l,
\end{equation}
 where  $\mathbf{Q}^\dag\in \mathbf{\mathcal{Q}}$ is some fixed state. Under any initialization of $V_0(\mathbf{Q})$, the generated sequence $\{V_l(\mathbf{Q})\}$ converges to $V(\mathbf{Q})$\cite[Proposition 4.3.2]{bertsekas}, i.e.,
 \begin{equation}
   \lim_{l\to\infty}V_l(\mathbf{Q})=V(\mathbf{Q}),~\forall \mathbf{Q}\in\mathbf{\mathcal{Q}},\label{eqn:converge}
 \end{equation}
 where $V(\mathbf{Q})$ satisfies the Bellman equation in \eqref{eqn:bellman}.
Let $\mu^*_l(\mathbf{Q})$  denote the control that attains the minimum of the first term in \eqref{eqn:RVIA} in the $l$-th iteration for all $\mathbf{Q}$, i.e.,
\begin{equation}
  \mu^*_l(\mathbf{Q})=\arg\min_{\mathbf{u}\in\mathbf{\mathcal{U}}}J_{l+1}(\mathbf{Q},\mathbf{u}),~~\forall \mathbf{Q}\in\mathbf{\mathcal{Q}}.\label{eqn:optimal_l}
\end{equation}
Define $\mu^*_l(\mathbf{Q})\triangleq(\mu^*_{l,n}(\mathbf{Q}))_{n\in\mathcal{N}}$, where $\mu^*_{l,n}(\mathbf{Q})$ denotes the control action of BS $n$ for state $\mathbf{Q}$.
We refer to $\mu^*_n$ as the optimal policy for the $l$-th iteration.

Next, we prove Lemma~\ref{lemma:propertyV} through induction using RVIA.
Denote $\mathbf{Q}^1\triangleq(Q^1_{n,m})_{n\in\mathcal{N},m\in\mathcal{M}_n}$ and $\mathbf{Q}^2\triangleq(Q^2_{n,m})_{n\in\mathcal{N},m\in\mathcal{M}_n}$. To prove Lemma~\ref{lemma:propertyV}, it is sufficient to show that for any $\mathbf{Q}^1,\mathbf{Q}^2 \in\mathbf{\mathbf{Q}}$ such that $\mathbf{Q}^2\succeq\mathbf{Q}^1$,
\begin{equation}
V_l(\mathbf{Q}^2)\geq V_l(\mathbf{Q}^1),\label{eqn:vl}
\end{equation}
holds for all $l=0,1,\cdots$.
Note that, $\mathbf{Q}^1$ and $\mathbf{Q}^2$ can lie on the boundaries of the state space $\mathbf{\mathcal{Q}}$.

First, we initialize $V_0(\mathbf{Q})=0$ for all $\mathbf{Q}\in\mathbf{\mathcal{Q}}$. Thus, we have $V_0(\mathbf{Q}^1)=V_0(\mathbf{Q}^2)=0$, i.e., \eqref{eqn:vl} holds for $l=0$.
Assume that \eqref{eqn:vl} holds for  some $l>0$. We will prove that \eqref{eqn:vl} also holds for $l+1$. By \eqref{eqn:RVIA}, we have
\begin{align}
&V_{l+1}(\mathbf{Q}^1)=J_{l+1}\left(\mathbf{Q}^1,\mu^*_l(\mathbf{Q}^1)\right)-J_{l+1}(\mathbf{Q}^\dag,\mu^*_l(\mathbf{Q}^\dag))\nonumber\\
&\overset{(a)}{\leq}J_{l+1}\left(\mathbf{Q}^1,\mu^*_l(\mathbf{Q}^2)\right)-J_{l+1}(\mathbf{Q}^\dag,\mu^*_l(\mathbf{Q}^\dag))\nonumber\\
&\overset{(b)}{=}\mathbb{E}[V_l(\mathbf{Q}^{1'})]+d(\mathbf{Q}^1)+wp(\mu^*_l(\mathbf{Q}^2))-J_{l+1}(\mathbf{Q}^\dag,\mu^*_l(\mathbf{Q}^\dag)),
\label{eqn:vl1q1}
\end{align}
where  $(a)$ follows from the optimality of $\mu_l^*(\mathbf{Q}^1)$ for $\mathbf{Q}^1$ in the $l$-th iteration, $(b)$ directly follows from \eqref{eqn:jl}, and $\mathbf{Q}^{1'}=(Q^{1'}_{n,m})_{n\in\mathcal{N},m\in\mathcal{M}_n}$ with $Q^{1'}_{0,m}\triangleq\min\{\mathbf{1}(\mu^*_{l,0}(\mathbf{Q}^2)\neq m)Q^{1}_{0,m}+\tilde{A}_{0,m},N_{0,m}\}$ for all $m\in\mathcal{M}_0$ and $Q^{1'}_{n,m}\triangleq\min\{\mathbf{1}(\mu^*_{l,0}(\mathbf{Q}^2)\neq m~\&~\mu^*_{l,n}(\mathbf{Q}^2)\neq m)Q^{1}_{n,m}+A_{n,m},N_{n,m}\}$ for all $n\in\mathcal{N}^+$ and $m\in\mathcal{M}_n$.
By \eqref{eqn:jl} and \eqref{eqn:RVIA}, we also have
\begin{align}
&V_{l+1}(\mathbf{Q}^2)=J_{l+1}\left(\mathbf{Q}^2,\mu^*_l(\mathbf{Q}^2)\right)-J_{l+1}(\mathbf{Q}^\dag,\mu^*_l(\mathbf{Q}^\dag))\nonumber\\
&=\mathbb{E}[V_l(\mathbf{Q}^{2'})]+d(\mathbf{Q}^2)+wp(\mu^*_l(\mathbf{Q}^2))-J_{l+1}(\mathbf{Q}^\dag,\mu^*_l(\mathbf{Q}^\dag)),
\label{eqn:vl1q2}
\end{align}
where $\mathbf{Q}^{2'}=(Q^{2'}_{n,m})_{n\in\mathcal{N},m\in\mathcal{M}_n}$ with $Q^{2'}_{0,m}\triangleq\min\{\mathbf{1}(\mu^*_{l,0}(\mathbf{Q}^2)\neq m)Q^{2}_{0,m}+\tilde{A}_{0,m},N_{0,m}\}$ for all $m\in\mathcal{M}_0$ and $Q^{2'}_{n,m}\triangleq\min\{\mathbf{1}(\mu^*_{l,0}(\mathbf{Q}^2)\neq m~\&~\mu^*_{l,n}(\mathbf{Q}^2)\neq m)Q^{2}_{n,m}+A_{n,m},N_{n,m}\}$ for all $n\in\mathcal{N}^+$ and $m\in\mathcal{M}_n$.
Then, we compare \eqref{eqn:vl1q1} and \eqref{eqn:vl1q2} term by term.

Due to $\mathbf{Q}^2\succeq\mathbf{Q}^1$, we have $\mathbf{Q}^{2'}\succeq\mathbf{Q}^{1'}$, implying that $\mathbf{E}[V_l(\mathbf{Q}^{2'})]\geq \mathbf{E}[V_l(\mathbf{Q}^{1'})]$ by the induction hypothesis.
In addition, since $d(\mathbf{Q})$ is a monotonically non-decreasing function of $\mathbf{Q}$, we have $d(\mathbf{Q}^2)\geq d(\mathbf{Q}^1)$.
Thus, we have $V_{l+1}(\mathbf{Q}^2)\geq V_{l+1}(\mathbf{Q}^1)$, i.e., \eqref{eqn:vl} holds for $l+1$. Therefore, by induction, we can show that \eqref{eqn:vl} holds for any $l$.
 By taking limits on both sides of \eqref{eqn:vl} and by \eqref{eqn:converge}, we complete the proof of Lemma~\ref{lemma:propertyV}.\QEDA

\section*{Appendix C: Proof of Lemma~\ref{lemma:property_delta}}\label{app:property_delta}
First, we derive the general relationship between $\Delta_{\mathbf{u},\mathbf{v}}(\mathbf{Q}^1)$ and $\Delta_{\mathbf{u},\mathbf{v}}(\mathbf{Q}^2)$ for any $\mathbf{u},\mathbf{v}\in\mathbf{\mathcal{U}}$ and any $\mathbf{Q}^1, \mathbf{Q}^2\in\mathbf{\mathcal{Q}}$,
which will be used to prove the two properties in Lemma~\ref{lemma:property_delta}.
By \eqref{eqn:delta_func}, for any $\mathbf{u},\mathbf{v}\in\mathbf{\mathcal{U}}$ and any $\mathbf{Q}^1, \mathbf{Q}^2\in\mathbf{\mathcal{Q}}$, we have
\begin{align}\label{eqn:prooflemma3}
 &\Delta_{\mathbf{u},\mathbf{v}}(\mathbf{Q}^1)-\Delta_{\mathbf{u},\mathbf{v}}(\mathbf{Q}^2)\nonumber\\
=&\Big(d(\mathbf{Q}^1)+wp(\mathbf{u})+\mathbb{E}[V(\mathbf{Q}^{1,\mathbf{u}})] - d(\mathbf{Q}^1)-wp(\mathbf{v})\nonumber\\&-\mathbb{E}[V(\mathbf{Q}^{1,\mathbf{v}})]\Big)
-\Big(d(\mathbf{Q}^2)+wp(\mathbf{u})+\mathbb{E}[V(\mathbf{Q}^{2,\mathbf{u}})] \nonumber\\&- d(\mathbf{Q}^2)-wp(\mathbf{v})-\mathbb{E}[V(\mathbf{Q}^{2,\mathbf{v}})]\Big)\nonumber\\
=&\mathbb{E}[V(\mathbf{Q}^{1,\mathbf{u}})]-\mathbb{E}[V(\mathbf{Q}^{1,\mathbf{v}})]-\mathbb{E}[V(\mathbf{Q}^{2,\mathbf{u}})]+\mathbb{E}[V(\mathbf{Q}^{2,\mathbf{v}})],
\end{align}
where
\begin{subequations}
\begin{align}
&\left\{
   \begin{array}{ll}
       Q^{1,\mathbf{u}}_{0,m}=\min\{\mathbf{1}(u_0\neq m)Q^1_{0,m}+\tilde{A}_{0,m},N_{0,m}\},~\forall m\in\mathcal{M}_0\\
       Q^{1,\mathbf{u}}_{n,m}=\min\{\mathbf{1}(u_0\neq m~\&~u_n\neq m)Q^1_{n,m}+A_{n,m},N_{n,m}\},\\
       \hspace{50mm}\forall n\in\mathcal{N}^+,~\forall m\in\mathcal{M}_n
   \end{array}
\right.\label{eqn:q1prime}\\
&\left\{
   \begin{array}{ll}
       Q^{1,\mathbf{v}}_{0,m}=\min\{\mathbf{1}(v_0\neq m)Q^1_{0,m}+\tilde{A}_{0,m},N_{0,m}\},~\forall m\in\mathcal{M}_0\\
       Q^{1,\mathbf{v}}_{n,m}=\min\{\mathbf{1}(v_0\neq m~\&~v_n\neq m)Q^1_{n,m}+A_{n,m},N_{n,m}\},\\
       \hspace{50mm}\forall n\in\mathcal{N}^+,~\forall m\in\mathcal{M}_n
   \end{array}
\right.\label{eqn:q2prime}\\
&\left\{
   \begin{array}{ll}
      Q^{2,\mathbf{u}}_{0,m}=\min\{\mathbf{1}(u_0\neq m)Q^2_{0,m}+\tilde{A}_{0,m},N_{0,m}\},~\forall m\in\mathcal{M}_0\\
      Q^{2,\mathbf{u}}_{n,m}=\min\{\mathbf{1}(u_0\neq m~\&~u_n\neq m)Q^2_{n,m}+A_{n,m},N_{n,m}\},~\\
       \hspace{50mm}\forall n\in\mathcal{N}^+,~\forall m\in\mathcal{M}_n
   \end{array}
\right.\label{eqn:q3prime}\\
&\left\{
   \begin{array}{ll}
       Q^{2,\mathbf{v}}_{0,m}=\min\{\mathbf{1}(v_0\neq m)Q^2_{0,m}+\tilde{A}_{0,m},N_{0,m}\},~\forall m\in\mathcal{M}_0\\
       Q^{2,\mathbf{v}}_{n,m}=\min\{\mathbf{1}(v_0\neq m~\&~v_n\neq m)Q^2_{n,m}+A_{n,m},N_{n,m}\},\\
       \hspace{50mm}\forall n\in\mathcal{N}^+,~\forall m\in\mathcal{M}_n
   \end{array}
\right.\label{eqn:q4prime}
\end{align}
\end{subequations}

Next, based on \eqref{eqn:prooflemma3}, we prove Property 1 in Lemma~\ref{lemma:property_delta}.
Suppose action $\mathbf{u}$ satisfies that  $u_0=j\in\mathcal{M}_0$, and $\mathbf{Q}^1$ and $\mathbf{Q}^2$ satisfy the following relation:
 \begin{equation}
 \begin{cases} Q^{1}_{n,m} \geq  Q^{2}_{n,m} & \text{if}~n\in\{0\}\cup\mathcal{N}_j~\text{and}~ m=j,\\
           Q^{1}_{n,m} = Q^{2}_{n,m}  &\text{otherwise.}
\end{cases}
\end{equation}
for each $n\in\mathcal{N}$ and $m\in\mathcal{M}_n$.
By comparing \eqref{eqn:q1prime} with \eqref{eqn:q3prime}, we can see that $Q^{1,\mathbf{u}}_{n,m}=Q^{2,\mathbf{u}}_{n,m}$ for all $n\in\mathcal{N}$ and $m\in\mathcal{M}_n$, i.e., $\mathbf{Q}^{1,\mathbf{u}}=\mathbf{Q}^{2,\mathbf{u}}$. Thus, we have $\mathbb{E}[V(\mathbf{Q}^{1,\mathbf{u}})]=\mathbb{E}[V(\mathbf{Q}^{2,\mathbf{u}})]$.
By comparing \eqref{eqn:q2prime} with \eqref{eqn:q4prime}, we can see that $\mathbf{Q}^{1,\mathbf{v}}\succeq\mathbf{Q}^{2,\mathbf{v}}$. Thus, by Lemma~\ref{lemma:propertyV}, we have $\mathbb{E}[V(\mathbf{Q}^{1,\mathbf{v}})]\geq\mathbb{E}[V(\mathbf{Q}^{2,\mathbf{v}})]$. Therefore, by \eqref{eqn:prooflemma3}, we have $\Delta_{\mathbf{u},\mathbf{v}}(\mathbf{Q}^1)\leq \Delta_{\mathbf{u},\mathbf{v}}(\mathbf{Q}^2)$, which completes the proof of Property 1 in Lemma~\ref{lemma:property_delta}.

Finally, based on \eqref{eqn:prooflemma3}, we prove Property 2 in Lemma~\ref{lemma:property_delta}. Suppose action $\mathbf{u}$ satisfies that  $u_i=j\in\mathcal{M}_i$ for some $j\in\mathcal{N}^+$, and $\mathbf{Q}^1$ and $\mathbf{Q}^2$ satisfy the following relation:
 \begin{equation}
 \begin{cases} Q^{1}_{n,m} \geq  Q^{2}_{n,m} & \text{if}~n=i~\text{and}~m=j,\\
           Q^{1}_{n,m} = Q^{2}_{n,m}  &\text{otherwise.}
\end{cases}
\end{equation}
 for each $n\in\mathcal{N}$ and $m\in\mathcal{M}_n$.
  Similarly, by comparing \eqref{eqn:q1prime} with \eqref{eqn:q3prime} and comparing \eqref{eqn:q2prime} with \eqref{eqn:q4prime}, we can see that $\mathbf{Q}^{1,\mathbf{u}}=\mathbf{Q}^{2,\mathbf{u}}$ and $\mathbf{Q}^{1,\mathbf{v}}\succeq\mathbf{Q}^{2,\mathbf{v}}$. Therefore, by Lemma~\ref{lemma:propertyV} and \eqref{eqn:prooflemma3}, we have $\Delta_{\mathbf{u},\mathbf{v}}(\mathbf{Q}^1)\leq \Delta_{\mathbf{u},\mathbf{v}}(\mathbf{Q}^2)$, which completes the proof of Property 2 in Lemma~\ref{lemma:property_delta}.
\QEDA

\section*{Appendix D: Proof of Theorem~\ref{theorem:optimal}}\label{app:theorem1}
We first prove Property 1 of Theorem~\ref{theorem:optimal}.
Consider multicast scheduling action $\mathbf{u}=\mathbf{0}$, BS $n\in\mathcal{N}$, content $m\in\mathcal{M}_n$, another action $\mathbf{v}=(v_i)_{i\in\mathcal{N}}\in\mathbf{\mathcal{U}}$ where $v_n=m$, and request queue state $\mathbf{Q}=(Q_{i,j})_{i\in\mathcal{N},j\in\mathcal{M}_i}\in\mathbf{\mathcal{Q}}$ where $Q_{n,m} = \phi_{\mathbf{0}}^+(\mathbf{Q}_{-n,-m})$.
Note that, if  $\phi_{\mathbf{0}}^+(\mathbf{Q}_{-n,-m})=-\infty$, Property 1 of Theorem~\ref{theorem:optimal} always holds.
Therefore, in the following, we only consider $\phi_{\mathbf{0}}^+(\mathbf{Q}_{-n,-m})>-\infty$.
According to the definition of $\phi_{\mathbf{u}}^+(\mathbf{Q}_{-n,-m})$ in \eqref{eqn:phi}, we can see that, $\Delta_{\mathbf{u},\mathbf{v}}(\mathbf{Q})\leq 0$, i.e., $\mathbf{u}$ dominates $\mathbf{v}$ at state $\mathbf{Q}$.
Now consider another queue state $\mathbf{Q}'=(Q'_{i,j})_{i\in\mathcal{N},j\in\mathcal{M}_i}\in\mathbf{\mathcal{Q}}$ where $Q'_{n,m}\leq Q_{n,m}$ and $Q'_{i,j} = Q_{i,j}$ for all $(i,j)\neq (n,m)$. Since $\Delta_{\mathbf{u},\mathbf{v}}(\mathbf{Q})=-\Delta_{\mathbf{v},\mathbf{u}}(\mathbf{Q})$, by Lemma~\ref{lemma:property_delta}, we know that $\Delta_{\mathbf{u},\mathbf{v}}(\mathbf{Q})$ is monotonically non-decreasing with $Q_{n,m}$. Thus, we have
\begin{equation}
 \Delta_{\mathbf{u},\mathbf{v}}(\mathbf{Q}') \leq \Delta_{\mathbf{u},\mathbf{v}}(\mathbf{Q})\leq 0,
\end{equation}
i.e.,  $\mathbf{u}$ dominates $\mathbf{v}$ at state $\mathbf{Q}'$.
By the definition of $\mathbf{\mathcal{Q}}_0$, we can see that if $\mathbf{Q}\in\mathbf{\mathcal{Q}}_0$, $\mathbf{u}$ dominates all $\mathbf{v}\in\mathbf{\mathcal{U}}$ and $\mathbf{v}\neq \mathbf{u}$ at state $\mathbf{Q}$, i.e., $\mu^*(\mathbf{Q})=\mathbf{0}$. We complete the proof of Property 1 in Theorem~\ref{theorem:optimal}.

 Next, we prove Property 2 of Theorem~\ref{theorem:optimal}.
Consider BS $n\in\mathcal{N}$, content $m\in\mathcal{M}_n$, multicast scheduling action $\mathbf{u}=(u_i)_{i\in\mathcal{N}}\in\mathbf{\mathcal{U}}$ where $u_n=m$, and request queue state $\mathbf{Q}=(Q_{i,j})_{i\in\mathcal{N},j\in\mathcal{M}_i}\in\mathbf{\mathcal{Q}}$ where $Q_{n,m} = \phi_{\mathbf{u}}^-(\mathbf{Q}_{-n,-m})$.
Similar to the proof of Property 1 of Theorem~\ref{theorem:optimal}, we only need to consider that $\phi_{\mathbf{u}}^-(\mathbf{Q}_{-n,-m})<+\infty$.
According to the definition of $\phi_{\mathbf{u}}^-(\mathbf{Q}_{-n,-m})$ in \eqref{eqn:phi}, we can see that, $\Delta_{\mathbf{u},\mathbf{v}}(\mathbf{Q})\leq 0$ for all $\mathbf{v}\in\mathbf{\mathcal{U}}$ and $\mathbf{v}\neq \mathbf{u}$, i.e., $\mu^*(\mathbf{Q})=\mathbf{u}$.
\begin{itemize}
\item We first prove the first part of Property 2.
Consider queue state $\mathbf{Q}^1=(Q^1_{i,j})_{i\in\mathcal{N},j\in\mathcal{M}_i}\in\mathbf{\mathcal{Q}}$ where $Q^1_{n,m}\geq Q_{n,m}$ and $Q^1_{i,j} = Q_{i,j}$ for all $(i,j)\neq (n,m)$. To prove the first part of Property 2, it is equivalent to show that $\mu^*(\mathbf{Q}^1)=\mathbf{u}$.
By Lemma~\ref{lemma:property_delta}, for all $\mathbf{v}\in\mathbf{\mathcal{U}}$ and $\mathbf{v}\neq \mathbf{u}$, we have
\begin{equation}
 \Delta_{\mathbf{u},\mathbf{v}}(\mathbf{Q}^1) \leq \Delta_{\mathbf{u},\mathbf{v}}(\mathbf{Q})\leq 0,
\end{equation}
i.e., $\mu^*(\mathbf{Q}^1)=\mathbf{u}$. We complete the proof of the first part in Property 2.
\item
Then, we prove the second part of Property 2, i.e., the monotonically non-increasing property of $\phi_{\mathbf{u}}^-(\mathbf{Q}_{-0,-m})$ in terms of $Q_{n,m}$ for all $n\in\mathcal{N}_m$. Consider another queue state $\mathbf{Q}^2=(Q^2_{i,j})_{i\in\mathcal{N},j\in\mathcal{M}_i}\in\mathbf{\mathcal{Q}}$ where $Q^2_{i,j}\geq Q_{i,j}$ if $i\in\mathcal{N}_j$ and $j=m$, and $Q^2_{i,j} = Q_{i,j}$ otherwise. To prove the second part of Property 2, it is equivalent to show that $\mu^*(\mathbf{Q}^2)=\mathbf{u}$.
By Lemma~\ref{lemma:property_delta}, for all $\mathbf{v}\in\mathbf{\mathcal{U}}$ and $\mathbf{v}\neq \mathbf{u}$, we have
\begin{equation}
 \Delta_{\mathbf{u},\mathbf{v}}(\mathbf{Q}^2) \leq \Delta_{\mathbf{u},\mathbf{v}}(\mathbf{Q})\leq 0,
\end{equation}
i.e., $\mu^*(\mathbf{Q}^2)=\mathbf{u}$. We complete the proof of Property 2 in Theorem~\ref{theorem:optimal}.\QEDA
\end{itemize}
\section*{Appendix E: Proof of Lemma~\ref{lemma:separable}}\label{app:decomposition}
Along the line of the proof of Lemma 3 in \cite{harvest}, we prove the additive property of the value function.
Note that, we have $g(\mathbf{Q},\mathbf{u})=\sum_{n\in\mathcal{N}}\sum_{m\in\mathcal{M}_n}g_{n,m}(Q_{n,m},\mathbf{u})$. In addition, by the relation between the joint distribution and the marginal distribution, we have $\sum_{\mathbf{Q}\in\mathbf{\mathcal{Q}}}\Pr[\mathbf{Q}'|\mathbf{Q},\mathbf{u}]\allowbreak=\sum_{Q_{n,m}'\in\mathcal{Q}_{n,m}}\Pr[Q_{n,m}'|\mathbf{Q},\mathbf{u}]\allowbreak=\sum_{Q_{n,m}'\in\mathcal{Q}_{n,m}}\Pr[Q_{n,m}'|Q_{n,m},\mathbf{u}]$.
Therefore, by substituting $\hat{\theta}=\sum_{n\in\mathcal{N}}\sum_{m\in\mathcal{M}_n}\hat{\theta}_{n,m}$ and $\hat{V}(\mathbf{Q})=\sum_{n\in\mathcal{N}}\sum_{m\in\mathcal{M}_n}\hat{V}_{n,m}(Q_{n,m})$ into \eqref{eqn:randombellman}, we can see that the equality in \eqref{eqn:perrandombellman} holds, which completes the proof.\QEDA

\section*{Appendix F: Proof of Theorem~\ref{theorem:suboptimal}}\label{app:theorem3}
First, we show that for all $n\in\mathcal{N}$ and $m\in\mathcal{M}_n$, the per-BS-content value function $\hat{V}_{n,m}(Q_{n,m})$ satisfies
\begin{equation}\label{eqn:vnm}
\hat{V}_{n,m}(Q_{n,m}^2)\geq \hat{V}_{n,m}(Q_{n,m}^1),
\end{equation}
for any $Q_{n,m}^1,Q_{n,m}^2\in\mathcal{Q}_{n,m}$ such that $Q_{n,m}^2 \geq Q_{n,m}^1$.
Similar to the proof of Lemma~\ref{lemma:propertyV}, we define:
\begin{align}
&\hat{J}^{l+1}_{n,m}(Q_{n,m})\triangleq\mathbb{E}^{\hat{\mu}}\left[g_{n,m}(Q_{n,m},\mathbf{u})\right]\nonumber\\&+\sum_{Q_{n,m}'\in\mathcal{Q}_{n,m}}\mathbb{E}^{\hat{\mu}}\left[\Pr[Q_{n,m}'|Q_{n,m},\mathbf{u}]\right]\hat{V}^l_{n,m}(Q_{n,m}'),
\end{align}
where $\hat{V}^l_{n,m}(Q_{n,m})$ denotes the per-BS-content value function in the $l$-th iteration.
For each $Q_{n,m}\in\mathcal{Q}_{n,m}$, RVIA calculates $\hat{V}^{l+1}_{n,m}(Q_{n,m})$ according to:
\begin{align}
  \hat{V}^{l+1}_{n,m}(Q_{n,m})=\hat{J}^{l+1}_{n,m}(Q_{n,m})-\hat{J}^{l+1}_{n,m}(Q_{n,m}^\dag),
\end{align}
where $Q_{n,m}^\dag\in\mathcal{Q}_{n,m}$ is some fixed state. Following the proof of Lemma~\ref{lemma:propertyV}, we can prove that
for any $Q_{n,m}^1,Q_{n,m}^2\in\mathcal{Q}_{n,m}$ such that $Q_{n,m}^2 \geq Q_{n,m}^1$, $\hat{V}_{n,m}^l(Q_{n,m}^2)\geq \hat{V}_{n,m}^l(Q_{n,m}^1)$ holds for all $l=0,1,\cdots$. Thus, we can show that \eqref{eqn:vnm} holds through induction using RVIA. Then, by \eqref{eqn:vnm}, we can show that $\hat{\Delta}_{\mathbf{u},\mathbf{v}}(\mathbf{Q})$ possesses the same monotonicity properties of $\Delta_{\mathbf{u},\mathbf{v}}(\mathbf{Q})$,  following the proof of Lemma~\ref{lemma:property_delta}.
Finally, by using the monotonicity properties of $\hat{\Delta}_{\mathbf{u},\mathbf{v}}(\mathbf{Q})$, we can show the structural properties of $\hat{\mu}^*$, following the proof of Theorem~\ref{theorem:optimal}. We complete the proof.\QEDA

\bibliographystyle{IEEEtran}
\bibliography{IEEEabrv,HetNetcaching}

\end{document}